\def\beq{\vskip .1cm \begin{equation}}
\def\eeq{\end{equation} \vskip .1cm \noindent}
\def\beqn{\vskip .1cm \begin{eqnarray}}
\def\eeqn{\end{eqnarray} \vskip .1cm \noindent}
\def\beqnn{\vskip .01cm \begin{eqnarray*}}
\def\eeqnn{\end{eqnarray*} \vskip .01cm \noindent}
\def\nn{\nonumber}
\newcommand{\nbd}{\mathrm{nbd}}
\newcommand{\CLThres}{\mathsf{CLThres}}
\newcommand{\calX}{\mathcal{X}}
\newcommand{\calW}{\mathcal{W}}
\newcommand{\calF}{\mathcal{F}}
\newcommand{\calY}{\mathcal{Y}}
\newcommand{\bX}{\mathbf{X}}
\newcommand{\bD}{\mathbf{D}}
\newcommand{\bW}{\mathbf{W}}
\newcommand{\bv}{\mathbf{v}}
\newcommand{\bPi}{\bm{\Pi}}
\newcommand{\bz}{\mathbf{z}}
\newcommand{\bN}{\mathbb{N}}
\newcommand{\bbT}{\mathsf{T}_n}
\newcommand{\eig}{\mathrm{eig}}
\newcommand{\MAP}{\mathrm{MAP}}
\newcommand{\vect}{\mathrm{vec}}
\newcommand{\hQ}{\widehat{Q}}
\newcommand{\hT}{\widehat{T}}
\newcommand{\tilP}{\widetilde{P}}
\newcommand{\tilM}{\widetilde{M}}
\newcommand{\tilcalC}{\widetilde{\calC}}
\newcommand{\diag}{\mathrm{diag}}
\newcommand{\calC}{\mathcal{C}}
\newcommand{\calR}{\mathcal{R}}
\newcommand{\calS}{\mathcal{S}}
\newcommand{\calE}{\mathcal{E}}
\newcommand{\calB}{\mathcal{B}}
\newcommand{\calT}{\mathcal{T}}
\newcommand{\calA}{\mathcal{A}}
\newcommand{\calD}{\mathcal{D}}
\newcommand{\bx}{\mathbf{x}}
\newcommand{\bR}{\mathbb{R}}
\newcommand{\calP}{\mathcal{P}}
\newcommand{\calU}{\mathcal{U}}
\newcommand{\bP}{\mathbb{P}}
\newcommand{\bH}{\mathbf{H}}
\newcommand{\bA}{\mathbf{A}}
\newcommand{\bC}{\mathbf{C}}
\newcommand{\argmin}{\operatornamewithlimits{argmin}}
\newcommand{\argmax}{\operatornamewithlimits{argmax}}
\newcommand{\hP}{\widehat{P}}
\newcommand{\hE}{\widehat{E}}
\newcommand{\he}{\widehat{e}}
\newcommand{\hk}{\widehat{k}}
\newcommand{\veps}{\varepsilon}
\newcommand{\poly}{\mathrm{poly}}
\begin{document}

\title{Learning High-Dimensional Markov Forest  Distributions: Analysis of Error Rates}

\author{\name Vincent Y. F. Tan$^{*\dagger}$ \email    vtan@wisc.edu, vtan@mit.edu \\
\addr $^*$Department of Electrical and Computer Engineering,\\
 University of Wisconsin-Madison,\\
 Madison, WI 53706 \\\\
		   \name Animashree Anandkumar$^{\ddagger}$ \email a.anandkumar@uci.edu \\
		   \addr $^{\ddagger}$Center for Pervasive Communications and Computing,\\
Electrical Engineering and Computer Science,\\
University of California, Irvine,\\
Irvine, CA 92697\\\\
       \name Alan S. Willsky$^{\dagger}$ \email willsky@mit.edu \\
       \addr $^{\dagger}$Stochastic Systems Group,\\Laboratory for Information and Decision Systems,\\
       Massachusetts Institute of Technology,\\
       Cambridge, MA 02139}

\editor{Marina Meil\u{a}}

\maketitle

\begin{abstract}
The problem of learning forest-structured discrete graphical models from i.i.d.\ samples is considered. An  algorithm based on pruning of the Chow-Liu tree through adaptive thresholding is proposed.   It is shown that this algorithm is both  structurally consistent and risk consistent and  the error probability of structure learning decays faster than any polynomial in the number of samples   under fixed model size.  For the  high-dimensional scenario where the size of the  model $d$ and the number of edges $k$ scale with the number of samples $n$,  sufficient conditions on $(n,d,k)$ are given for  the algorithm to satisfy structural and risk consistencies. In addition, the extremal structures for learning are identified; we prove that the independent (resp.\ tree) model is the hardest (resp.\ easiest) to learn using the proposed algorithm in terms of error rates for structure learning.
\end{abstract}

\begin{keywords}
Graphical models, Forest  distributions, Structural consistency, Risk consistency, Method of types.
\end{keywords}

\section{Introduction} \label{sec:intro}


Graphical models (also known as Markov random fields) have a wide range of applications in diverse fields such  as signal processing, coding theory and bioinformatics. See \cite{Lau96}, \cite{Wai03} and references therein for examples. Inferring the structure and parameters of graphical models from samples is a starting point in all these applications. The structure of the model provides a quantitative interpretation of relationships amongst the given collection of random variables by specifying a set of conditional independence relationships. The parameters of the model quantify the strength of these interactions among the variables. 

The challenge in learning graphical models is often compounded by the fact that typically only a small number of samples are  available relative to the size of the model (dimension of data). This is referred to as the high-dimensional learning regime, which differs from classical statistics where a large number of samples of fixed dimensionality are available. As a concrete example, in order to analyze the effect of environmental and genetic factors on childhood asthma, clinician scientists in Manchester, UK  have been conducting a longitudinal birth-cohort study since 1997 \citep{Custovic, Simpson}. The number of variables collected is of the order of $d\approx  10^6$ (dominated by the genetic data) but the number of children in the study is small ($n\approx 10^3$). The paucity of subjects in the study is  due in part to the prohibitive cost of collecting high-quality clinical data from willing participants. 

In order to learn high-dimensional graphical models,  it is imperative to strike the right balance between data fidelity and overfitting. To ameliorate the effect of overfitting, the samples are often fitted to a {\em sparse graphical model}  \citep{Wai03}, with a small number of edges.  One popular and tractable class of sparse graphical models is the set of tree\footnote{A {\em tree} is a   {\em connected}, acyclic graph. We use the term {\em proper forest} to denote the set of {\em disconnected}, acyclic graphs. } models. When restricted to trees, the Chow-Liu algorithm \citep{CL68,Cho73} provides an efficient implementation of the maximum-likelihood (ML) procedure to learn the structure from independent  samples.   However, in the high-dimensional regime, even a tree may overfit the data \citep{Gupta}. In this paper, we consider learning high-dimensional, forest-structured (discrete) graphical models from a given  set of samples. 



For learning the  forest structure,  the ML (Chow-Liu) algorithm does not produce a consistent estimate  since ML favors richer model classes and hence, outputs a tree in general. We propose a consistent algorithm  called $\CLThres$, which has a thresholding mechanism to prune ``weak'' edges from the  Chow-Liu tree.   We provide tight bounds on the {\em overestimation} and {\em underestimation} errors, that is, the error probability that the output of the algorithm has  more or fewer edges than the true model. 

\subsection{Main Contributions}
This paper contains three main contributions. Firstly, we propose an algorithm named $\CLThres$   and prove that it is structurally consistent when the true distribution is  forest-structured. Secondly, we prove that $\CLThres$ is risk consistent, meaning that the risk under the estimated model converges to the risk of the {\em forest projection}\footnote{The forest projection is the forest-structured graphical model that is closest in the KL-divergence sense to the true distribution. We define this distribution formally in \eqref{eqn:forest_proj}.} of the underlying distribution, which may not be a forest. We also provide precise convergence  rates for  structural and risk consistencies. Thirdly, we provide conditions for the consistency of $\CLThres$ in the high-dimensional setting. 

We first prove that $\CLThres$ is structurally consistent, i.e., as the number of samples grows for a fixed model size, the probability of learning the incorrect structure (set of edges), decays to zero for a fixed model size. We show that the error rate is in fact, dominated by the rate of decay of the overestimation error probability.\footnote{The overestimation error probability is the probability that the number of edges learned exceeds the true number of edges. The underestimation error is defined analogously.} We use an information-theoretic technique  known as the {\em method of types} \citep[Ch.\ 11]{Cov06} as well as a recently-developed technique known as Euclidean information theory \citep{Bor08}. We provide an upper bound on the error probability by using convex duality to find a surprising connection between the overestimation error rate and a semidefinite program \citep{Van96} and  show that the overestimation error in structure learning decays faster than any polynomial in $n$ for a fixed data dimension $d$. 

We then consider the high-dimensional scenario and provide sufficient conditions on  the growth of $(n,d)$ (and also the true number of edges $k$) to ensure that $\CLThres$ is structurally consistent. We prove that even if $d$ grows faster than any polynomial in $n$ (and in fact close to exponential in $n$), structure estimation remains consistent. As a corollary from our analyses, we also show that for $\CLThres$, independent models (resp.\  tree models) are the ``hardest'' (resp.\ ``easiest'') to learn in the sense that the asymptotic error rate is the highest (resp.\ lowest), over all models with the same scaling of $(n,d)$. Thus, the empty graph and connected trees are the  extremal forest structures  for learning.  We also prove that  $\CLThres$ is risk consistent, i.e., the risk of the  estimated forest  distribution converges to the risk of the forest projection of the true model at a rate of $O_p(d\log d/n^{1-\gamma})$ for any $\gamma>0$. We compare and contrast this rate to existing results such as those \cite{Gupta}. Note that for this result, the true probability model does not need to be a forest-structured distribution. Finally, we use  $\CLThres$ to learn forest-structured distributions given synthetic and real-world datasets and show that in the finite-sample case, there exists an inevitable trade-off between the underestimation and overestimation errors. 


\subsection{Related Work}
There are many papers that discuss the learning  of graphical models from data. See \cite{Dud04}, \cite{Lee06}, \cite{Abb06}, \cite{Wai06}, \cite{Mei06}, \cite{Joh07},  and references therein. Most of these methods pose the learning problem as a parameterized  convex optimization problem, typically with a regularization term to enforce sparsity in the learned graph.  Consistency guarantees in terms of $n$ and $d$ (and possibly the maximum degree) are provided. Information-theoretic limits for learning graphical models have also been derived in \cite{San08}. In \cite{Zuk06}, bounds on the error rate for learning the structure of Bayesian networks using the Bayesian Information Criterion (BIC) were provided.  \cite{Bach03} learned tree-structured models for solving the independent component analysis (ICA) problem.  A PAC analysis for learning  thin junction trees was given in \cite{Chechetka07}. \cite{Mei00} discussed the learning of graphical models from a different perspective; namely that of learning mixtures of trees via an expectation-maximization procedure. 

By using the theory of large-deviations \citep{Dembo}, we derived and analyzed the error exponent for learning trees for discrete \citep{Tan&etal:09ITsub} and Gaussian \citep{Tan&etal:09SPsub} graphical models. The error exponent is a quantitative measure of performance of the learning algorithm since a larger exponent implies a faster decay of the error probability. However, the analysis does not readily extend to learning forest models. In addition, we posed the structure learning problem for trees as a composite hypothesis testing problem \citep{Tan10:ISIT}  and derived a closed-form expression for the Chernoff-Stein exponent in terms of the mutual information on the bottleneck edge.  In a paper that is   closely related to ours, \cite{Gupta} derived consistency (and sparsistency) guarantees for learning tree and forest models. The pairwise joint distributions are modeled using kernel density estimates, where the kernels are H\"{o}lder continuous. This differs from our approach since we assume that each variable can only take finitely many values, leading to stronger results on error rates for structure learning via the method of types, a powerful proof technique in information theory and statistics. We compare our convergence rates to these related works in Section~\ref{sec:param_consistency}. Furthermore, the algorithm suggested in both papers uses a subset (usually half) of the dataset  to learn the full tree model and then uses the remaining subset  to prune the model based on the log-likelihood on the held-out set. We suggest a more direct and consistent method based on thresholding, which uses the {\em entire} dataset to learn and prune the model without recourse to validation on a held-out dataset. It is well known that validation is both computationally expensive \citep[pp.\ 33]{Bishop} and a potential waste of valuable data which may otherwise be employed to learn a better model.  In \cite{Gupta}, the problem of estimating forests with restricted component sizes was considered and was proven to be NP-hard. We do not restrict the component size in this paper but instead attempt to learn the model with the minimum number of edges which best fits the data.

Our work is also related to and inspired by the  vast body of  literature in information theory and statistics  on Markov order estimation. In these works, the authors use various regularization and model selection schemes to find the optimal order of a Markov chain \citep{Merhav89,Finesso96, CsiszarBIC2000}, hidden Markov model \citep{Gassiat} or exponential family \citep{Merhav89b}. We build on some of these ideas and proof techniques to identify the correct set of edges (and in particular the number of edges) in the forest model and also to provide strong theoretical guarantees of the rate of convergence of the estimated forest-structured distribution to the true one.

\subsection{Organization of Paper}
This paper is organized as follows: We define the mathematical notation and formally state the problem in Section~\ref{sec:formulation}. In Section~\ref{sec:algorithm}, we describe the algorithm in full detail, highlighting its most salient aspect -- the thresholding step. We state our main results on error rates for structure learning in Section~\ref{sec:structural_consistency} for a fixed forest-structured distribution. We extend these results to the high-dimensional case when $(n,d,k)$ scale in Section~\ref{sec:structural_high}. Extensions to rates of convergence of the estimated distribution, i.e., the order of risk consistency, are discussed briefly in Section~\ref{sec:param_consistency}. Numerical simulations on synthetic and real data are presented in Section~\ref{sec:numerical}. Finally, we conclude the discussion in Section~\ref{sec:conclusion}. The proofs of the majority of  the results are provided in the appendices.

\section{Preliminaries and Problem Formulation} \label{sec:formulation}
Let $G=(V,E)$ be an undirected graph with vertex (or node) set $V:=\{1,\ldots, d\}$ and edge set $E\subset \binom{V}{2}$ and let $\nbd(i) :=\{j\in V: (i,j)\in E\}$ be the set of  neighbors of vertex $i$.  Let the set of labeled {\em  trees} (connected, acyclic graphs) with $d$ nodes be $\calT^d$ and let the set of {\em forests} (acyclic graphs) with $k$ edges and $d$ nodes be $\calT_k^d$ for $0\le k\le d-1$. The set of forests includes all the trees. We  reserve the term {\em proper forests} for the set of disconnected acylic graphs $\cup_{k=0}^{d-2}  \calT_k^d$. We also use the notation $\calF^d:=\cup_{k=0}^{d-1}\calT_k^d$ to denote the set of  labeled forests with $d$ nodes. 

A {\em graphical model} \citep{Lau96} is a family of multivariate probability distributions (probability mass functions) in which each distribution factorizes according to a given undirected graph and where each variable is associated to a node in the graph. Let $\calX = \{1,\ldots, r\}$ (where $2\le r<\infty$) be a finite set and $\calX^d$ the $d$-fold Cartesian product of the set $\calX$. As usual, let $\calP(\calX^d)$ denote the probability simplex over the  alphabet $\calX^d$. We say that the random vector $\bX=(X_1,\ldots,X_d)$ with distribution  $Q\in\calP(\calX^d)$ is {\em Markov on} the graph $G=(V,E)$ if 
\begin{equation}
Q(x_i|x_{\nbd(i)})= Q(x_i|x_{V\setminus i}),\qquad\forall\, i\in V, \label{eqn:localMarkov}
\end{equation} 
where $x_{V\setminus i}$ is the collection of variables excluding variable $i$. Eq.~\eqref{eqn:localMarkov} is known as the {\em local Markov property}~\citep{Lau96}. In this paper, we always assume that  graphs are {\em minimal representations} for the corresponding graphical model, i.e., if $Q$ is Markov on $G$, then $G$ has the smallest number of edges for the conditional independence relations in~\eqref{eqn:localMarkov} to hold.  We say the distribution $Q$ is a {\em forest-structured distribution} if it is Markov on a forest. We also use the notation $\calD(\calT_k^d) \subset \calP(\calX^d)$ to denote the set of $d$-variate distributions Markov on a forest with $k$ edges.  Similarly, $\calD(\calF^d)$ is the set of forest-structured distributions. 

Let $P\in \calD(\calT_k^d)$ be a discrete forest-structured distribution Markov on $T_P = (V,E_P)\in\calT_k^d$ (for some $k=0,\ldots, d-1$). It is known that the joint distribution $P$ factorizes as follows~\citep{Lau96}:
\begin{equation}
P(\bx) =\prod_{i\in V} P_{i}(x_i)  \prod_{(i,j)\in E_P}\frac{P_{i,j}(x_i,x_j) }{P_{i}(x_i) P_{j}(x_j) }, \label{eqn:tree_decomp} 
\end{equation}
where $\{P_i\}_{i\in V}$ and $\{P_{i,j}\}_{(i,j)\in E_P}$ are the node and pairwise marginals which are assumed to be positive everywhere.    


The mutual information (MI)  of two random variables $X_i$ and $X_j$ with joint distribution $P_{i,j}$ is the function $I(\cdot):\calP(\calX^2)\to[0,\infty)$ defined as 
\begin{equation}
I(P_{i,j}) := \sum_{x_i,x_j\in\calX} P_{i,j}(x_i,x_j)  \log\frac{P_{i,j}(x_i,x_j)}{P_i(x_i) P_j(x_j)}. \label{eqn:mi_def}
\end{equation}
This notation for mutual information differs from the usual $I(X_i;X_j)$ used in \cite{Cov06}; we emphasize the dependence of $I $ on the joint distribution $P_{i,j}$. The {\em minimum mutual information} in the forest, denoted as $I_{\min} :=\min_{(i,j)\in E_P}  I(P_{i,j})$  will turn out to be a fundamental quantity in the subsequent analysis. Note from our minimality assumption that $I_{\min}>0$ since all edges in the forest have positive mutual information (none of the edges are degenerate).  When we consider the scenario where $d$ grows with $n$ in Section~\ref{sec:structural_high}, we assume that $I_{\min}$ is {\em uniformly} bounded away from zero. 

\subsection{Problem Statement}
We now state the basic problem formally. We are given  a set of i.i.d.\ samples, denoted as $\bx^n:=\{\bx_1,\ldots, \bx_n\}$. Each sample $\bx_l=(x_{l,1},\ldots, x_{l,d})\in\calX^d$ is drawn independently from $P\in\calD(\calT_k^d)$ a forest-structured distribution. From these samples, and the prior knowledge that the undirected graph is acyclic (but not necessarily connected), estimate the true set of edges $E_P$ as well as the true distribution $P$ consistently. 

\section{The Forest Learning Algorithm: $\CLThres$}\label{sec:algorithm}
We now describe our algorithm for estimating the edge set  $E_P$ and the distribution $P$. This algorithm is a modification of the celebrated Chow-Liu algorithm for maximum-likelihood (ML) learning of  tree-structured distributions \citep{CL68}. We call our algorithm  $\CLThres$ which stands for {\em  Chow-Liu with Thresholding}.

The inputs to the algorithm are the set of samples $\bx^n$ and a  {\em regularization} sequence $\{\veps_n\}_{n\in\bN}$ (to be specified precisely later) that typically decays to zero, i.e., $\lim_{n\to\infty}\veps_n= 0$. The outputs are the estimated edge set, denoted $\hE_{\hk_n}$, and the estimated distribution, denoted $P^*$. 
\begin{enumerate}
\item  \label{item:emp} Given $\bx^n$, calculate the set of  {\em pairwise empirical distributions}\footnote{In this paper, the terms {\em empirical distribution} and {\em type} are used interchangeably.} (or {\em pairwise types})  $\{\hP_{i,j}\}_{i,j\in V}$.
This is just a normalized version of the counts of each observed symbol in $\calX^2$ and serves as a set of  sufficient statistics for the estimation problem. The dependence of $\hP_{i,j}$ on the samples $\bx^n$ is suppressed.
\item \label{item:mi}  Form the set of {\em empirical mutual information} quantities:
\begin{eqnarray}
I(\hP_{i,j}) := \sum_{(x_i,x_j)\in\calX^2} \hP_{i,j}(x_i,x_j) \log\frac{\hP_{i,j}(x_i,x_j)}{\hP_{i}(x_i)\hP_{j}(x_j)}, \nn
\end{eqnarray}
for $1\le i,j\le d$. This is a consistent estimator of the true mutual information in~\eqref{eqn:mi_def}.
\item \label{item:mwst} Run a max-weight spanning tree (MWST) algorithm~\citep{Prim, Kruskal} to obtain an estimate of the edge set:
\begin{equation}
\hE_{d-1} := \argmax_{E: T=(V,E) \in\calT^d} \,\,\sum_{(i,j) \in E}  I(\hP_{i,j}). \nn
\end{equation}
Let the  estimated  edge set be $\hE_{d-1} :=\{\he_1, \ldots, \he_{d-1}\}$ where the  edges $\he_i$ are sorted according to decreasing  empirical mutual information values. We index the edge set by $d-1$ to emphasize that it has $d-1$ edges and hence is connected. We denote the sorted empirical mutual information quantities as $I(\hP_{\he_1})  \ge \ldots\ge I(\hP_{\he_{d-1}})$. These first three steps constitute the Chow-Liu algorithm \citep{CL68}. 
\item \label{item:thres} Estimate the true number of edges using the {\em thresholding estimator}:
\begin{eqnarray}
\hk_n   :=  \argmin_{1\le j\le d-1}   \left\{I(\hP_{\he_j}) :  I(\hP_{\he_j}) \ge  \veps_n , I(\hP_{\he_{j+1}})  \le  \veps_n  \right\}.    \label{eqn:kn}
\end{eqnarray}
If there exists an empirical mutual information  $I(\hP_{\he_j})$ such that $I(\hP_{\he_j})=\veps_n$, break the tie arbitrarily.\footnote{Here were allow a bit of imprecision by noting that the non-strict inequalities in~\eqref{eqn:kn} simplify the subsequent analyses because the constraint sets that appear in optimization problems will be closed, hence compact, insuring the existence of optimizers.}
\item \label{item:prune}  Prune the tree by retaining only the top $\hk_n$ edges, i.e., define the {\em  estimated edge set} of the forest to be
\begin{equation}
\hE_{\hk_n} :=\{\he_1, \ldots, \he_{\hk_n}\}, \nn
\end{equation}
where $\{\he_i:1\le i\le d-1\}$ is the ordered edge set defined in Step~\ref{item:mwst}. Define the estimated forest to be $\hT_{\hk_n} := (V, \hE_{\hk_n})$.
\item \label{item:Pstar} Finally, define the estimated distribution $P^*$ to be the {\em reverse I-projection}~\citep{Csis:InfoProj}   of the joint type $\hP$ onto $\hT_{\hk_n}$, i.e., 
\begin{equation}
P^*(\bx):= \argmin_{Q\in\calD(\hT_{\hk_n})}  \,\,D(\hP\,||\, Q). \nn 
\end{equation}
It can easily be shown that the projection  can be expressed in terms of the marginal and pairwise joint types:
\begin{equation}
P^*(\bx)=\prod_{i\in V}\hP_i(x_i)\prod_{(i,j)\in\hE_{\hk_n} } \frac{\hP_{i,j}(x_i,x_j)}{\hP_{i}(x_i)\hP_{j}(x_j)}. \nn 
\end{equation}
\end{enumerate}
Intuitively, $\CLThres$ first constructs a connected tree $(V, \hE_{d-1})$ via Chow-Liu (in Steps \ref{item:emp} -- \ref{item:mwst}) before pruning the weak edges (with small mutual information) to obtain the final  structure $\hE_{\hk_n}$. The estimated distribution $P^*$ is simply the ML estimate of the parameters subject to  the constraint that $P^*$ is Markov on the learned tree  $\hT_{\hk_n}$.

Note that if Step~\ref{item:thres} is omitted and $\hk_n$ is defined to be  $d-1$, then  $\CLThres$ simply reduces to the Chow-Liu ML algorithm. Of course Chow-Liu, which outputs a  tree, is guaranteed to fail (not be structurally consistent) if the number of edges in the true model $k<d-1$, which is the problem of interest in this paper.  Thus, Step~\ref{item:thres}, a model selection step, is essential in estimating the true number of edges $k$. This step is a generalization of the  test for independence of discrete memoryless sources discussed in \cite{Merhav89b}. In our work, we exploit the fact that the empirical mutual information  $I(\hP_{\he_j})$   corresponding to a pair of independent variables $\he_{j}$ will be very small when $n$ is large, thus a thresholding procedure using the (appropriately chosen)  regularization sequence $\{\veps_n\}$ will remove these edges.  In fact, the subsequent analysis allows us to conclude that Step \ref{item:thres}, in a formal sense, {\em dominates} the error probability in structure learning. $\CLThres$  is also efficient as shown by the following result.
\begin{proposition}[Complexity of $\CLThres$] \label{prop:complexity}
$\CLThres$ runs in time $O((n+\log d) d^2)$. 
\end{proposition}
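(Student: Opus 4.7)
The plan is to go through each of the six steps of $\CLThres$, bound its running time, and sum the contributions. Throughout, treat the alphabet size $r=|\calX|$ as a constant so that computing a single mutual information from a stored $r\times r$ table costs $O(1)$.

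First I would bound Step~\ref{item:emp}. To form all pairwise empirical distributions $\{\hP_{i,j}\}_{1\le i<j\le d}$, one scans each of the $n$ samples once and, for each sample, increments a counter in each of the $\binom{d}{2}$ joint-type tables; equivalently one can first tabulate the univariate counts and then, for each of the $O(d^2)$ pairs, sweep the $n$ samples to fill the $r^2$ joint-count cells. Either way the cost is $O(nd^2)$. Step~\ref{item:mi} then evaluates $I(\hP_{i,j})$ for each of the $O(d^2)$ pairs from a table of constant size, contributing an additional $O(d^2)$.

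Next I would handle the Chow--Liu MWST in Step~\ref{item:mwst}. The input is a complete weighted graph on $d$ vertices with $O(d^2)$ edges. Using Prim's algorithm with an adjacency-matrix implementation on a dense graph yields an $O(d^2)$ MST, and the subsequent sorting of the $d-1$ tree edges in decreasing order of empirical mutual information costs only $O(d\log d)$. Alternatively, Kruskal's algorithm run on all $\binom{d}{2}$ weights would cost $O(d^2\log d)$, which already fits into the claimed bound. Steps~\ref{item:thres} and~\ref{item:prune} -- computing $\hk_n$ from the sorted edge list and pruning -- are a single linear scan through the $d-1$ ordered edges, i.e.\ $O(d)$. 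Finally, Step~\ref{item:Pstar} only records the formula for $P^*$ in terms of the marginal and pairwise types already computed, at cost $O(d)$.

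Summing the contributions gives $O(nd^2) + O(d^2) + O(d^2\log d) + O(d) = O\!\left((n+\log d)d^2\right)$, which is the claim. The only mildly delicate point is the MWST step: one must be careful to invoke a dense-graph MST implementation (or to charge the $\log d$ sorting factor against the $d^2$ edges) so as not to accidentally incur an $O(d^2\log d^2)$ term that would still fit but would be wasteful to justify. Since all remaining steps are straightforward linear or quadratic passes over previously computed sufficient statistics, there is no real obstacle beyond bookkeeping.
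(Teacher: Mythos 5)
Your proposal is correct and follows essentially the same route as the paper: bound the cost of forming the pairwise types and empirical mutual informations by $O(nd^2)$, the MWST step by $O(d^2\log d)$, and the remaining sorting/thresholding/pruning steps by lower-order terms, then sum. The only difference is bookkeeping detail (e.g.\ noting that dense-graph Prim gives $O(d^2)$), which does not change the argument.
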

\begin{proof}
The computation of the sufficient statistics in Steps \ref{item:emp} and \ref{item:mi} requires $O(nd^2)$ operations. The MWST algorithm in Step \ref{item:mwst} requires at most $O(d^2 \log d)$ operations \citep{Prim}. Steps \ref{item:thres} and \ref{item:prune} simply require the sorting of the empirical mutual information quantities on the learned tree which only requires $O(\log d)$ computations. 
\end{proof}

\section{Structural Consistency For Fixed Model Size} \label{sec:structural_consistency}
In this section, we keep $d$ and $k$ fixed and  consider a  probability model $P$, which is assumed to be Markov on a forest in $\calT_k^d$. This is to gain better insight into the problem before we analyze the high-dimensional scenario in Section~\ref{sec:structural_high} where $d$ and $k$ scale\footnote{In that case $P$ must also scale, i.e., we learn a {\em family} of models as $d$ and $k$ scale.} with the sample size $n$. More precisely, we are interested in quantifying the rate at which the probability of the  error event of structure learning
\begin{equation}
\calA_n:=\left\{\bx^n\in(\calX^d)^n : \hE_{\hk_n}(\bx^n)   \ne E_P \right\}  \label{eqn:An}
\end{equation}
decays to zero as $n$ tends to infinity. Recall that $\hE_{\hk_n}$, with cardinality $\hk_n$, is the learned edge set  by using $\CLThres$.  As usual, $P^n$ is the $n$-fold product probability measure corresponding to the forest-structured distribution $P$. 

Before stating the main result of this section in Theorem~\ref{thm:forest}, we first state an auxiliary result that essentially says that if one is provided with oracle knowledge of $I_{\min}$, the minimum mutual information in the forest, then the problem is greatly simplified. 
\begin{proposition}[Error Rate with knowledge of $I_{\min}$] \label{prop:Imin}
Assume that $I_{\min}$ is known  in  $\CLThres$. Then by letting the regularization sequence be $\veps_n=I_{\min}/2$ for all $n$, we have
\begin{equation}
\lim_{n\to\infty}\frac{1}{n}\log P^n(\calA_n)< 0,  \label{eqn:expo_Imin}
\end{equation}
i.e., the error probability decays exponentially fast. 
\end{proposition}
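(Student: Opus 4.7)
The plan is to bound $P^n(\calA_n)$ above by the probability of a union of three ``bad'' events, each of which I would show decays exponentially fast using Sanov's theorem (or equivalently, the method of types). Since $d$ is fixed, the union bound over finitely many events preserves the exponential decay.

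I would first partition the non-edges $\binom{V}{2}\setminus E_P$ into $N_{\text{same}}$ and $N_{\text{diff}}$, the pairs whose endpoints lie in the same (resp.\ different) connected components of $T_P$. The key structural facts are: for $(i,j)\in N_{\text{diff}}$, $X_i$ and $X_j$ are independent under $P$, so $I(P_{i,j})=0$; for $(i,j)\in N_{\text{same}}$, the data processing inequality applied along the unique path in $T_P$ gives $I(P_{i,j})<I(P_{i',j'})$ for every true edge $(i',j')$ on that path; and for $(i,j)\in E_P$, $I(P_{i,j})\ge I_{\min}$ by definition.

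Next, I would introduce three good events,
\begin{align*}
B_1 &:= \bigcap_{(i,j)\in E_P}\!\{I(\hP_{i,j})>I_{\min}/2\},\\
B_2 &:= \bigcap_{(i,j)\in N_{\text{diff}}}\!\{I(\hP_{i,j})<I_{\min}/2\},\\
B_3 &:= \{E_P\subseteq \hE_{d-1}\},
\end{align*}
and establish the deterministic inclusion $B_1\cap B_2\cap B_3\subseteq\calA_n^c$. Under $B_3$ the Chow-Liu tree contains all true edges; its remaining $d-1-k$ edges must lie in $N_{\text{diff}}$, since any edge from $N_{\text{same}}$ would create a cycle together with $E_P$. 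Under $B_1\cap B_2$ the $k$ true edges then occupy the top $k$ positions of the sorted list $I(\hP_{\he_1})\ge\cdots\ge I(\hP_{\he_{d-1}})$ and exceed $\veps_n=I_{\min}/2$, while the remaining spurious edges fall strictly below $\veps_n$. The thresholding rule of Step~\ref{item:thres} therefore outputs $\hk_n=k$ and $\hE_{\hk_n}=E_P$.

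Finally, I would bound $P^n(B_1^c)$, $P^n(B_2^c)$, $P^n(B_3^c)$ individually. Sanov's theorem furnishes exponential bounds on $B_1^c$ and $B_2^c$ via the rate functions
\[
C_1:=\min_{(i,j)\in E_P}\inf_{Q:\,I(Q)\le I_{\min}/2}D(Q\,\|\,P_{i,j}),\qquad
C_2:=\min_{(i,j)\in N_{\text{diff}}}\inf_{Q:\,I(Q)\ge I_{\min}/2}D(Q\,\|\,P_iP_j),
\]
both strictly positive by lower semicontinuity of KL divergence and the fact that neither constraint set contains the corresponding reference distribution (since $I(P_{i,j})\ge I_{\min}>I_{\min}/2$ in the first case, and the product distribution has zero mutual information in the second). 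For $B_3^c$, I would appeal to Kruskal's algorithm: failure to select some $e=(i,j)\in E_P$ means that by the time Kruskal considers $e$, its endpoints are already joined through some non-edge $(i',j')\in N_{\text{same}}$ on the path between $i$ and $j$ in $T_P$ with $I(\hP_{i',j'})\ge I(\hP_{i,j})$; since $I(P_{i',j'})<I(P_{i,j})$ by data processing, Sanov yields an exponential bound with some rate $C_3>0$. A union bound over the at most $\binom{d}{2}$ pairs then gives $P^n(\calA_n)\le \poly(n)\exp(-nc)$ with $c=\min(C_1,C_2,C_3)>0$, which implies~\eqref{eqn:expo_Imin}.

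The main obstacle is the event $B_3$, since MWST correctness is a global property rather than a pairwise comparison. The Kruskal viewpoint is what reduces it to $O(d^2)$ pairwise empirical-MI comparisons whose underlying true quantities satisfy a strict data-processing inequality, after which the method of types applies uniformly alongside the bounds for $B_1$ and $B_2$.
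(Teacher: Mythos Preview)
Your proposal is correct and follows essentially the same three-part decomposition as the paper: a Chow-Liu/MWST error, an underestimation error (true edges falling below the threshold), and an overestimation error (spurious cross-component edges exceeding it), each handled by Sanov's theorem. The paper's sketch simply invokes its companion work \cite{Tan&etal:09ITsub} for the MWST part and defers the details to the proof of Theorem~\ref{thm:forest}, whereas you spell out an explicit Kruskal/data-processing argument; your organization via the deterministic inclusion $B_1\cap B_2\cap B_3\subseteq\calA_n^c$ is a cosmetic variant of the paper's conditional-probability decomposition (Lemma~\ref{lem:reduction}).

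One small imprecision in your handling of $B_3^c$ is worth noting. When Kruskal rejects a true edge $e=(i,j)\in E_P$, the offending non-edge $e'=(i',j')$ with $I(\hP_{e'})\ge I(\hP_{e})$ need not literally lie on the $T_P$-path between $i$ and $j$. What the standard cycle--cut exchange argument gives is that $e'$ crosses the cut in $T_P$ obtained by removing $e$, so the $T_P$-path between $i'$ and $j'$ passes through $e$. That is precisely what you need for the data processing inequality $I(P_{e'})\le I(P_e)$, and the \emph{strict} inequality required for a positive Sanov rate $C_3>0$ then follows from the paper's standing assumptions that the pairwise marginals are everywhere positive and that $T_P$ is a minimal representation.
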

The proof of this theorem and all other results in the sequel can be found in the appendices. 

Thus, the primary difficulty lies in estimating  $I_{\min}$ or equivalently, the number of edges $k$. Note that if $k$ is known, a simple modification to the Chow-Liu procedure by imposing the constraint that the final structure contains $k$ edges will also yield exponential decay as in~\eqref{eqn:expo_Imin}.   However, in the realistic case where both $I_{\min}$ and $k$ are  unknown, we show in the rest of this  section that we can design the regularization sequence $\veps_n$ in such a way that the rate of decay of $P^n(\calA_n)$ decays almost exponentially fast. 

\subsection{Error Rate for Forest Structure Learning}
We now  state one of the main results in this paper. We emphasize that the following result  is stated for  a fixed forest-structured distribution $P\in\calD(\calT_k^d)$ so $d$ and $k$ are also fixed natural numbers.

\begin{theorem}[Error Rate for   Structure Learning] \label{thm:forest}
Assume that the regularization sequence $\{\veps_n\}_{n\in\bN}$   satisfies the following two conditions:
\begin{equation}
\lim_{n\to\infty} \, \veps_n=0,\qquad \lim_{n\to\infty} \, \frac{n\veps_n}{\log n}=\infty. \label{eqn:veps} 
\end{equation}
Then,  if  the true model $T_P=(V,E_P)$ is a proper forest ($k<d-1$), there exists a constant $C_P \in (1,\infty)$ such that 
\begin{align}
-C_P &\le \liminf_{n\to\infty} \frac{1}{n\veps_n}\log P^n(\calA_n) \label{eqn:liminf_statement}\\
&\le \limsup_{n\to\infty} \frac{1}{n\veps_n}\log P^n(\calA_n)\le -1. \label{eqn:limsup_statement}
\end{align}
Finally, if the true model $T_P=(V,E_P)$   is a  tree ($k=d-1$), then 
\begin{equation}
\lim_{n\to\infty} \frac{1}{n }\log P^n(\calA_n) <0, \label{eqn:exponentially_fast}
\end{equation}
i.e.,  the error probability decays exponentially fast. 
\end{theorem}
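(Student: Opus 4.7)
The plan is to decompose the error event $\calA_n$ into a Chow-Liu (MWST) structural error and a thresholding error, and to show each decays exponentially in $n$. Since $k=d-1$, the Chow-Liu output already has exactly $\hE_{d-1}$ containing $d-1$ edges, which is the correct cardinality; consequently no overestimation is possible, and I only need to control (i) the event $\calB_n := \{\hE_{d-1} \neq E_P\}$ that Chow-Liu picks the wrong MWST, and (ii) the underestimation event $\calU_n := \{\hE_{d-1} = E_P,\; \hk_n < d-1\}$ that Chow-Liu is correct but thresholding removes at least one true edge. A union bound gives $P^n(\calA_n) \le P^n(\calB_n) + P^n(\calU_n)$.

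For $\calB_n$, I would use the standard large-deviations analysis of Chow-Liu via the method of types (as in the Tan--Anandkumar--Willsky analysis of the tree error exponent cited in the related work). Since $P$ is Markov on a tree, the true edge set $E_P$ is the unique MWST with respect to the true mutual informations $\{I(P_{i,j})\}$, and any structural error forces the existence of a pair of edges $e \in E_P$ and $e'\notin E_P$ lying on the same cycle in $E_P\cup\{e'\}$ with $I(\hP_{e'}) \ge I(\hP_{e})$. Since $I(P_{e'}) < I(P_{e})$ strictly (by minimality and the MWST optimality of $E_P$), Sanov's theorem applied to the joint empirical distribution $\hP_{e,e'}$ yields exponential decay, and a union bound over the finitely many such pairs gives $P^n(\calB_n)\le \exp(-n\alpha)$ for some $\alpha>0$.

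For $\calU_n$, observe that thresholding removes an edge $e\in E_P$ only if $I(\hP_e) \le \veps_n$. Since $\veps_n \to 0$ by~\eqref{eqn:veps} and $I_{\min}>0$, for all sufficiently large $n$ we have $\veps_n < I_{\min}/2$, so
\[
\calU_n \subseteq \bigcup_{e\in E_P}\bigl\{I(\hP_e) \le I_{\min}/2 \bigr\} \subseteq \bigcup_{e\in E_P}\bigl\{ |I(\hP_e) - I(P_e)| \ge I_{\min}/2 \bigr\}.
\]
Each term on the right is a large-deviation event for a continuous functional of the pairwise type $\hP_e$ around its mean $P_e$, so Sanov's theorem (or equivalently the method-of-types bound used in the proof of Proposition~\ref{prop:Imin}) gives $P^n(|I(\hP_e) - I(P_e)| \ge I_{\min}/2) \le \exp(-n\beta_e)$ for some $\beta_e>0$. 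Summing over the $d-1$ edges yields $P^n(\calU_n)\le (d-1)\exp(-n\beta)$ with $\beta := \min_{e\in E_P}\beta_e > 0$.

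Combining the two bounds gives $P^n(\calA_n) \le 2\max\{(d-1),1\}\exp(-n\min\{\alpha,\beta\})$, establishing~\eqref{eqn:exponentially_fast}. The main conceptual point — and the only mild obstacle — is realizing that the slow $\veps_n$ regularization does not hurt the rate when $k=d-1$: because $I_{\min}>0$ is bounded away from zero and $\veps_n\to 0$, the threshold eventually sits strictly below $I_{\min}$, so the underestimation rate is governed by the fixed gap $I_{\min}/2$ rather than by $\veps_n$, and one recovers the fully exponential rate $\exp(-\Theta(n))$ instead of the slower $\exp(-\Theta(n\veps_n))$ seen in the proper-forest case~\eqref{eqn:limsup_statement}.
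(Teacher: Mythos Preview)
Your proposal addresses only the tree case~\eqref{eqn:exponentially_fast} and says nothing about the proper-forest bounds~\eqref{eqn:liminf_statement}--\eqref{eqn:limsup_statement}, which are the substantive content of the theorem. When $k<d-1$ the overestimation event $\{\hk_n>k\}$ is nonempty and in fact dominates: for an independent pair $e$ one must control $P^n(I(\hP_e)\ge\veps_n)$, and the Sanov rate function $M(P_e;\veps_n)=\min\{D(Q\|P_e):I(Q)\ge\veps_n\}$ vanishes as $\veps_n\to 0$, so no fixed exponential rate is available. The paper's argument here is nontrivial: it Taylor-expands $D(\cdot\|P_e)$ and $I(\cdot)$ around the product distribution $P_e$ to approximate $M(P_e;\veps_n)$ by a quadratically constrained quadratic program, then uses Lagrangian duality to lower-bound the QCQP by a semidefinite program whose optimal value is shown (via an eigenvalue computation) to be exactly $\veps_n$, yielding the $-1$ in~\eqref{eqn:limsup_statement}. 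The lower bound~\eqref{eqn:liminf_statement} requires a matching Sanov lower bound together with an upper bound $C_P\veps_n$ on the same QCQP. None of this machinery is present in your proposal, and your decomposition into $\calB_n$ and $\calU_n$ does not even contain an overestimation term.

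For the tree case you do treat, your argument is essentially the paper's: with $k=d-1$ overestimation is vacuous, and both the Chow-Liu error and the underestimation error decay exponentially by Sanov-type bounds, with the $\veps_n\to 0$ assumption ensuring the threshold eventually sits below $I_{\min}$. One technical gap remains: the statement~\eqref{eqn:exponentially_fast} asserts that the \emph{limit} exists and is negative, not merely that $\limsup_n n^{-1}\log P^n(\calA_n)<0$. Your union bound gives only the latter; the paper closes this by also invoking the lower bound $P^n(\calA_n)\ge P^n(\calB_n)$ together with the existence of the Chow-Liu error exponent $K_P$ (and an analogous lower bound on the underestimation probability), so that the liminf and limsup coincide.
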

\begin{figure}
\centering
\begin{picture}(160,150)
\multiput(30,5)(0,4){24}{\line(0,1){2}}
\linethickness{.6pt}
\put(-5, 5){\vector(1,0){160}}
\put(0,0){\vector(0,1){150}}
\put(157, 5){\mbox{$n$}}
\put(0,100){\line(1,0){150}}
\qbezier(7,150)(7,82)(150,40)
\qbezier(4,150)(3,17)(150,14)
\put(130, 105){\mbox{$I_{\min}$}}
\put(85, 67){\mbox{$\veps_n=\omega(\frac{\log n}{n})$}}
\put(111, 21){\mbox{$I(\hQ_{i,j}^n)\!\approx\!\frac{1}{n}$}}
\put(32, 7){\mbox{$N$}}
\end{picture}
\caption{Graphical interpretation of the condition on $\veps_n$. As $n\to\infty$, the regularization sequence $\veps_n$ will be smaller than $I_{\min}$ and larger than $I(\hQ_{i,j}^n)$ with high probability.}
\label{fig:rate_graphs}
\end{figure}
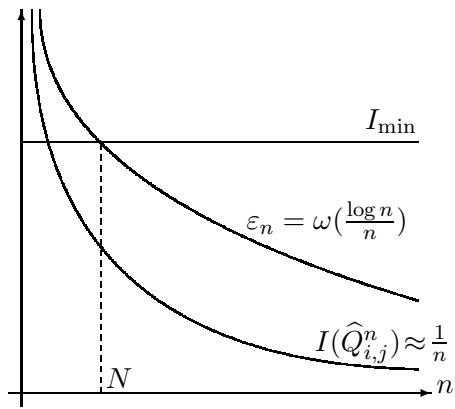

\subsection{Interpretation of Result}
From~\eqref{eqn:limsup_statement}, the rate of decay of the error probability for proper forests is subexponential but nonetheless can be made faster than any polynomial for an appropriate choice of $\veps_n$. The reason for the subexponential rate is because of our lack of knowledge of $I_{\min}$, the minimum mutual information in the true forest $T_P$. For trees, the rate\footnote{We use the asymptotic notation from information theory  $\doteq$ to denote equality  to first order in the exponent.  More precisely, for two positive sequences $\{a_n\}_{n\in\bN}$ and $\{b_n\}_{n\in\bN}$ we say that  $a_n\doteq b_n$ iff $\lim_{n\to\infty}n^{-1}\log (a_n/b_n)=0$.}  is exponential ($\doteq \exp(-nF)$ for some positive constant $F$). Learning proper forests is thus, strictly ``harder'' than learning trees. The condition on $\veps_n$ in \eqref{eqn:veps} is needed for the following intuitive reasons:
\begin{enumerate}
\item Firstly,  \eqref{eqn:veps} ensures that for all sufficiently large $n$, we have  $\veps_n < I_{\min}$. Thus, the true edges will be correctly identified by $\CLThres$ implying that with high probability, there will not be underestimation as $n\to\infty$. 
\item  Secondly,   for two independent  random variables $X_i$ and $X_j$ with  distribution $Q_{i,j}=Q_iQ_j$, the sequence\footnote{The notation $\sigma(Z)$ denotes the standard deviation of the random variable $Z$. The fact that the standard deviation of the empirical MI $\sigma(I(\hQ_{i,j}^n))$ decays as $1/n$ can be verified by Taylor expanding $I(\hQ_{i,j}^n)$ around $Q_{i,j} = Q_i Q_j$ and using the fact that the ML estimate converges at a rate of $n^{-1/2}$ \citep{Serfling}. } $\sigma(I(\hQ_{i,j}^n))=\Theta(1/n)$,  where $\hQ_{i,j}^n$ is the joint empirical distribution of $n$ i.i.d.\ samples drawn from $Q_{i,j}$. Since the regularization sequence $\veps_n = \omega( {\log n}/{n})$ has a slower rate of decay than  $\sigma (I(\hQ_{i,j}^n))$,  $\veps_n > I(\hQ_{i,j}^n )$  with high probability as $n\to\infty$. Thus, with high probability  there will not be overestimation as $n\to\infty$. 
\end{enumerate}
See Figure~\ref{fig:rate_graphs} for an illustration of this intuition. The formal proof follows from a method of types argument and we provide an outline in Section~\ref{sec:proofidea}.  A convenient choice of $\veps_n$ that satisfies $\eqref{eqn:veps}$ is 
\begin{equation}
\veps_n :=  n^{-\beta},\qquad \forall\,\beta\in (0,1). \label{eqn:vepsbeta}
\end{equation}

Note further that the upper bound in \eqref{eqn:limsup_statement} is also independent of $P$ since it is equal to $-1$ for all $P$. Thus,~\eqref{eqn:limsup_statement}  is  a {\em universal} result for all forest distributions $P\in \calD(\calF^d)$. The intuition for this universality is because in the large-$n$ regime, the typical way an error occurs is due to  overestimation. The overestimation error  results from testing whether pairs of random variables are independent and our asymptotic bound for the error probability of this test does not depend on the true distribution $P$. 

The lower bound $C_P$ in~\eqref{eqn:liminf_statement}, defined in the proof  in Appendix~\ref{prf:thm:forest}, means that we cannot hope to do much better using  $\CLThres$  if the original structure (edge set) is a proper forest. Together,~\eqref{eqn:liminf_statement} and~\eqref{eqn:limsup_statement} imply that the rate of decay of the error probability for structure learning is tight to within a constant factor in the exponent. We believe that the error rates given in Theorem~\ref{thm:forest} cannot, in general, be improved without knowledge of $I_{\min}$. We state a converse  (a necessary lower bound on sample complexity) in Theorem~\ref{thm:sample} by treating the unknown forest graph as a uniform random variable over all possible forests of fixed size.

\subsection{Proof Idea} \label{sec:proofidea}
The method of proof for Theorem~\ref{thm:forest} involves using the Gallager-Fano bounding technique \cite[pp.\ 24]{Fano} and the union bound to decompose  the overall error probability $P^n(\calA_n)$ into three  distinct terms: (i) the rate of decay of the error probability for learning the top $k$ edges (in terms of the   mutual information quantities) correctly -- known as the {\em Chow-Liu error}, (ii) the rate of decay of the {\em overestimation error} $\{\hk_n>k\}$ and  (iii) the rate of decay of the {\em underestimation error} $\{ \hk_n<k\}$. Each of these terms is upper bounded using a method of types~\cite[Ch.\ 11]{Cov06} argument. It turns out, as is the case with the literature on Markov order estimation (e.g., \cite{Finesso96}), that bounding the overestimation error poses the greatest challenge. Indeed, we show that the underestimation and Chow-Liu errors have exponential decay in $n$. However, the overestimation error has subexponential decay ($\approx \exp(-n\veps_n)$).

The main technique used to analyze the overestimation error relies on {\em Euclidean information theory}~\citep{Bor08} which states that if two distributions $\nu_0$ and $\nu_1$ (both supported on a common finite alphabet $\calY$) are close entry-wise, then various information-theoretic measures can be approximated locally by quantities related to Euclidean norms. For example, the KL-divergence $D(\nu_0 \, ||\, \nu_1)$ can be approximated by the square of a weighted Euclidean  norm:
\begin{eqnarray}
D(\nu_0\,||\, \nu_1) = \frac{1}{2} \sum_{a\in\calY} \frac{(\nu_0(a)  -  \nu_1(a))^2}{\nu_0(a)} +  o(\|\nu_0  -  \nu_1\|_{\infty}^2).   \label{eqn:dqp_app}
\end{eqnarray} 
Note that if $\nu_0\approx\nu_1$, then $D(\nu_0\,||\, \nu_1)$ is close to the sum in \eqref{eqn:dqp_app} and the  $o(\|\nu_0  -  \nu_1\|_{\infty}^2)$ term can be neglected. Using this approximation and Lagrangian duality \citep{Ber99}, we reduce a non-convex I-projection \citep{Csis:InfoProj} problem involving information-theoretic quantities (such as divergence) to a relatively simple {\em semidefinite program} \citep{Van96} which admits a closed-form solution.  Furthermore,  the approximation in~\eqref{eqn:dqp_app} becomes {\em exact} as $n\to \infty$ (i.e.,  $\veps_n\to 0$), which is the asymptotic regime of interest.  The full details of the proof can be found Appendix~\ref{prf:thm:forest}.

\subsection{Error Rate for Learning the Forest Projection}
In our discussion thus far, $P$ has been assumed to be Markov on a forest. In this subsection, we consider the situation when the underlying unknown distribution $P$ is not forest-structured but we wish to learn its best  forest approximation. To this end, we define the projection of $P$ onto the set of forests (or {\em forest projection}) to be 
\begin{equation}
\tilP :=  \argmin_{ Q\in \calD(\calF^d) }\,\, D(P \, ||\, Q). \label{eqn:forest_proj}
\end{equation}
If there are multiple optimizing distribution, choose a projection $\tilP$ that is minimal, i.e., its graph $T_{\tilP} = (V,E_{\tilP})$ has the {\em fewest number of edges} such that \eqref{eqn:forest_proj} holds. If we redefine the event $\calA_n$ in \eqref{eqn:An} to be $\widetilde{\calA}_n:= \{ \hE_{\hk_n}\ne E_{\tilP}\}$, we have the following analogue of Theorem~\ref{thm:forest}.

\begin{corollary}[Error Rate for Learning Forest Projection] \label{cor:forest_proj}
Let $P$ be an arbitrary distribution and the event $\widetilde{\calA}_n$ be defined as above. Then the conclusions in~\eqref{eqn:liminf_statement} -- \eqref{eqn:exponentially_fast} in Theorem~\ref{thm:forest} hold if the regularization sequence  $\{\veps_n\}_{n\in\bN}$   satisfies~\eqref{eqn:veps}. 
\end{corollary}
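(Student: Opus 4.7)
The plan is to retrace the proof of Theorem~\ref{thm:forest} with $P$ replaced by its minimal forest projection $\tilP$ defined in~\eqref{eqn:forest_proj}. The reverse I-projection characterization of $\tilP$ implies $\tilP_i=P_i$ for all $i\in V$ and $\tilP_{i,j}=P_{i,j}$ for all $(i,j)\in E_{\tilP}$, so that $I(P_{i,j})=I(\tilP_{i,j})$ on every chosen edge. Minimality of $\tilP$ forces $\tilde{I}_{\min}:=\min_{(i,j)\in E_{\tilP}} I(P_{i,j})>0$, because any edge on which $P_{i,j}=P_iP_j$ could be dropped without increasing the divergence, contradicting minimality. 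Thus the counterpart of $I_{\min}$ that drove the Theorem~\ref{thm:forest} bound is available in this setting as well.

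A second preliminary observation is that $E_{\tilP}$ coincides with the set of strictly-positive-mutual-information edges of the (population) MWST of the complete weighted graph on $V$ with edge weights $\{I(P_{i,j})\}$. By the cycle property of maximum spanning trees, any non-tree edge has weight at most the minimum weight on the fundamental cycle it induces; therefore any edge $(i,j)\notin E_{\tilP}$ with $I(P_{i,j})>0$ must form a cycle with edges of $E_{\tilP}$ all having at least as much mutual information, so it need not be added to achieve the projection. After this identification, on the event that the empirical MWST agrees with the population MWST (up to permutations of edges with tied weights), the top $|E_{\tilP}|$ edges of $\hE_{d-1}$ are precisely $E_{\tilP}$ and all remaining edges of $\hE_{d-1}$ correspond to pairs $(i,j)$ with $P_{i,j}=P_iP_j$.

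With these two ingredients, decompose $P^n(\widetilde{\calA}_n)$ exactly as in Theorem~\ref{thm:forest} into a Chow-Liu error, an underestimation error $\{\hk_n<|E_{\tilP}|\}$, and an overestimation error $\{\hk_n>|E_{\tilP}|\}$. The Chow-Liu term decays exponentially by a Sanov-type bound, since distinguishing the true MWST ordering reduces to a finite collection of pairwise hypothesis tests whose Chernoff exponents are strictly positive (by the strict cycle-property gap and $\tilde{I}_{\min}>0$). The underestimation term decays exponentially because $\veps_n\to 0<\tilde{I}_{\min}$ and $I(\hP_{i,j})$ concentrates around $I(P_{i,j})$ for each $(i,j)\in E_{\tilP}$. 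The overestimation term, on the event that the Chow-Liu step succeeded, only involves empirical mutual informations of pairs $(i,j)$ with $P_{i,j}=P_iP_j$, so the Euclidean information theory calculation behind~\eqref{eqn:liminf_statement}--\eqref{eqn:limsup_statement} transfers verbatim; when $|E_{\tilP}|=d-1$ the overestimation event is vacuous and~\eqref{eqn:exponentially_fast} is recovered.

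The main obstacle is handling the fact that, since $P$ need not be forest-structured, there can be edges $(i,j)\notin E_{\tilP}$ with strictly positive mutual information; such edges could in principle be confused with projection edges by the empirical MWST. The cycle-property argument of the second paragraph resolves this by showing that any such edge is dominated on its fundamental cycle by edges already in $E_{\tilP}$, so its inclusion in $\hE_{d-1}$ in place of a projection edge is a pure Chow-Liu error whose probability decays exponentially and is therefore absorbed into the estimates already established in the proof of Theorem~\ref{thm:forest}.
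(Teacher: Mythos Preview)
Your proposal is correct and follows the same decomposition into Chow-Liu, underestimation, and overestimation errors that the paper's one-paragraph proof sketch invokes. You actually go further than the paper in making explicit the one new subtlety---that for arbitrary $P$ there may be non-projection pairs with strictly positive mutual information---and correctly observe that (i) any such pair forms a cycle with $E_{\tilP}$ and is therefore handled by the Chow-Liu error term, while (ii) conditioned on $\calB_n^c$ every remaining edge $\he_{k+1},\ldots,\he_{d-1}$ necessarily joins distinct components of $T_{\tilP}$ and hence has $I(P_e)=0$, so the overestimation analysis carries over verbatim.
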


\section{High-Dimensional Structural Consistency} \label{sec:structural_high}
In the previous section, we considered learning a fixed forest-structured distribution $P$ (and hence fixed $d$ and $k$) and derived bounds on the error rate for structure learning.  However, for most problems of practical interest, the number of data samples is small compared to the data dimension $d$ (see the asthma example in the introduction). In this section, we prove sufficient conditions on the scaling of $(n,d,k)$ for structure learning to remain consistent. We will see that even if $d$ and $k$ are much larger than $n$, under some reasonable regularity conditions, structure learning remains consistent.

\subsection{Structure Scaling Law}
To pose the learning problem formally, we consider a {\em sequence} of structure learning problems indexed by the number of data points $n$. For the particular problem indexed by $n$, we have a dataset $\bx^n = (\bx_1,\ldots, \bx_n)$ of size $n$ where each sample $\bx_l\in\calX^d$ is drawn independently from an unknown $d$-variate forest-structured distribution $P^{(d)}\in \calD(\calT_k^d)$, which has $d$ nodes and $k$ edges and where $d$ and $k$ depend on $n$. This {\em high-dimensional} setup allows us to model and subsequently analyze how $d$ and $k$ can scale with $n$ while maintaining consistency. We will sometimes  make the dependence of $d$ and $k$ on $n$ explicit, i.e., $d=d_n$ and $k=k_n$.

In order to be able to learn the structure of the  models we assume that 
\begin{align}
\mbox{(A1)}\quad I_{\inf} := \inf_{d\in\bN}\,\min_{(i,j)\in E_{P^{(d)}}} \,  I(P^{(d)}_{i,j} )>0, \label{eqn:minMI} \\
\mbox{(A2)}\quad \kappa := \inf_{d\in\bN}\,\min_{x_i,x_j \in\calX} \,  P^{(d)}_{i,j} (x_i,x_j)>0. \label{eqn:minentry}
\end{align}
That is, assumptions (A1) and (A2) insure that there exists  {\em uniform} lower bounds on the minimum mutual information and the minimum entry in the pairwise probabilities in the forest models as the size of the graph grows. These are typical regularity assumptions for the high-dimensional setting. See \cite{Wai06} and \cite{Mei06} for example. We again emphasize that the proposed learning algorithm  $\CLThres$ has knowledge of neither $I_{\inf}$ nor $\kappa$.   Equipped with (A1) and (A2) and assuming the asymptotic behavior of $\veps_n$ in \eqref{eqn:veps}, we claim the following theorem for $\CLThres$.
\begin{theorem}[Structure Scaling Law] \label{cor:scaling}
There exists two  finite, positive constants $C_1 , C_2$ such that  if 
\begin{equation}
n >\max\left\{ (2\log (d-k))^{1+\zeta},\,  C_1 \log d ,\, C_2 \log k \right\}, \label{eqn:scaling} 
\end{equation}
for any $\zeta>0$, then  the error probability of incorrectly learning the sequence of edge sets $\{E_{P^{(d)}} \}_{d\in\bN}$ tends to zero as $(n,d,k)\to\infty$. When the sequence of forests are trees, $n>C\log d$ (where $C:=\max\{C_1,C_2\}$) suffices for high-dimensional structure recovery. 
\end{theorem}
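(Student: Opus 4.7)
The plan is to extend the three-term decomposition used in the proof of Theorem~\ref{thm:forest} --- Chow-Liu error, underestimation, and overestimation --- to the high-dimensional regime by applying union bounds, while exploiting assumptions (A1) and (A2) to keep the resulting exponents uniform in $d$. Writing
\[
P^n(\calA_n) \le \Pr(\calA_n^{\mathrm{CL}}) + \Pr(\hk_n < k) + \Pr(\hk_n > k),
\]
where $\calA_n^{\mathrm{CL}}$ denotes the event that the top-$k$ edges of the Chow-Liu MWST fail to coincide with $E_{P^{(d)}}$, the goal is to show each of the three terms vanishes under \eqref{eqn:scaling}.

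For the Chow-Liu term, a pairwise ``crossover'' analysis (replacing one true edge at a time by one non-edge and comparing their empirical mutual informations) combined with a union bound over the at most $d^2$ candidate swaps yields $\Pr(\calA_n^{\mathrm{CL}}) \le d^2 \exp(-n F_1)$, where $F_1 > 0$ depends only on $I_{\inf}$ and $\kappa$ and hence is uniform in $d$ by (A1)-(A2). For the underestimation term, union-bounding over the $k$ true edges and applying Sanov-type large-deviation bounds under (A1) gives $\Pr(\hk_n < k) \le k\,\exp(-n F_2)$ with $F_2 > 0$ again uniform in $d$ (valid once $\veps_n < I_{\inf}/2$, which holds eventually since $\veps_n \to 0$). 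For the overestimation term, I would condition on $(\calA_n^{\mathrm{CL}})^c$: then the $d-1-k$ surplus edges of the Chow-Liu tree necessarily connect distinct true components and so correspond to independent pairs. The Euclidean-information-theory argument behind Theorem~\ref{thm:forest} bounds, for each such pair, $\Pr(I(\hP_{i,j}) > \veps_n) \le \exp(-n\veps_n(1-o(1)))$, with the $o(1)$ controlled uniformly in $d$ via (A2). Union-bounding over the at most $d-k$ surplus slots yields $(d-k)\exp(-n\veps_n(1-o(1)))$.

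Next I would set $\veps_n = n^{-\beta}$ for some $\beta \in (0,1)$, as permitted by \eqref{eqn:vepsbeta}. The overestimation bound then vanishes provided $n^{1-\beta} > (1+\delta)\log(d-k)$ for some $\delta > 0$, i.e.\ $n > [(1+\delta)\log(d-k)]^{1/(1-\beta)}$; setting $\zeta := \beta/(1-\beta)$ so that $1/(1-\beta) = 1+\zeta$, and absorbing $(1+\delta)$ into the factor $2$, gives $n > (2\log(d-k))^{1+\zeta}$ as in \eqref{eqn:scaling}. The Chow-Liu and underestimation bounds vanish under $n > C_1 \log d$ and $n > C_2 \log k$ with $C_1 := 4/F_1$ and $C_2 := 2/F_2$. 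For the tree case $k = d-1$ one has $d-k = 1$, so the overestimation condition becomes vacuous (matching the fact that no surplus edges exist); the remaining two terms decay exponentially in $n$ and are simultaneously controlled by $n > C\log d$ with $C := \max\{C_1, C_2\}$, since $\log k \le \log d$.

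The principal technical obstacle is securing the uniform-in-$d$ overestimation bound: the Euclidean approximation $D(\nu_0\,\|\,\nu_1) \approx \tfrac{1}{2}\sum_a (\nu_0(a) - \nu_1(a))^2/\nu_0(a)$ and the ensuing semidefinite-programming argument that produces the $\exp(-n\veps_n)$ rate for independent pairs must hold with an $o(1)$-error that does not degrade as the dimension grows and as the type of the independent pair varies. This is precisely what assumption (A2) secures; once it is in place, the union-bound bookkeeping sketched above is routine and the scaling \eqref{eqn:scaling} follows.
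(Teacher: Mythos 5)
Your proposal follows essentially the same route as the paper: decompose the error into the Chow-Liu, underestimation and overestimation terms, use (A1)--(A2) to make the exponents uniform in $d$, and read off the three conditions in \eqref{eqn:scaling} from the union-bound prefactors (polynomial in $d$, in $k$, and in $d-k$ respectively, the last traded against the subexponential rate $\exp(-n\veps_n)$ with $\veps_n=n^{-\beta}$), with the tree case handled by the absence of the overestimation term. The only difference is one of emphasis: the paper's technical effort goes into proving the uniform positivity of the underestimation (and Chow-Liu) exponents via Pinsker's inequality and the Lipschitz continuity of mutual information on distributions uniformly bounded away from zero---which is precisely what your asserted uniform $F_1,F_2$ require---whereas the overestimation exponent is universal ($c_P=1$ for every $P$), so the uniformity you single out as the principal obstacle is in fact the easy part.
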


Thus, if the model parameters $(n,d,k)$ all grow with $n$ but $d=o(\exp( n/C_1 ))$, $k=o(\exp( n/C_2))$ and $d-k=o(\exp(n^{1-\beta}/2))$ (for all $\beta>0$), consistent structure recovery is possible in high dimensions. In other words, the number of nodes $d$ can grow faster than any polynomial  in the sample size $n$. In \cite{Gupta}, the bivariate densities are modeled by functions from a H\"{o}lder class with exponent $\alpha$ and it was mentioned (in Theorem 4.3) that the number of variables can grow like $o(\exp(n^{\alpha/(1+\alpha)}))$ for structural consistency. Our result is somewhat stronger  but we model the pairwise joint distributions as (simpler) probability mass functions (the alphabet $\calX$ is a finite set). 
 
\subsection{Extremal Forest Structures}

In this subsection, we study the extremal structures for learning, that is, the structures that, roughly speaking, lead to the largest and smallest error probabilities for structure learning.  
Define the sequence
\begin{equation}
h_n(P ):=\frac{1}{n \veps_n }\log P^n(\calA_n),\quad \forall\, n\in\bN. \label{eqn:hn}
\end{equation} 
Note that $h_n$ is a function of both the number of variables $d=d_n$ and the number of edges $k=k_n$ in the models $P^{(d)}$ since it is a sequence indexed by $n$. In the next result, we assume $(n,d,k)$ satisfies the scaling law in \eqref{eqn:scaling} and answer the following question: How does $h_n$ in~\eqref{eqn:hn}  depend on the number of edges  $k_n$ for a given $d_n$? Let $P_1^{(d)}$ and  $P_2^{(d)}$  be two sequences of forest-structured distributions with a common number of nodes $d_n$ and  number of edges $k_n( P_1^{(d)})$ and  $k_n( P_2^{(d)})$  respectively.
\begin{corollary}[Extremal Forests] \label{cor:extremal}
Assume that $\CLThres$ is employed as the forest learning algorithm. As $n\to\infty$, $h_n(  P_1^{(d)})\le h_n(  P_2^{(d)})$  whenever $k_n(P_1^{(d)})\ge k_n(P_2^{(d)})$  implying that  $h_n$ is maximized when $P^{(d)}$ are product distributions (i.e., $k_n =0$) and minimized when $P^{(d)}$ are tree-structured distributions (i.e., $k_n =d_n-1$). Furthermore,  if $k_n(P_1^{(d)}) =k_n(P_2^{(d)})$, then $h_n( P_1^{(d)}) = h_n( P_2^{(d)})$. 
\end{corollary}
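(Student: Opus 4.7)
The plan is to refine the error decomposition used in the proof of Theorem~\ref{thm:forest} and isolate the dominant, distribution-free contribution to $h_n(P^{(d)})$. From that proof, $\calA_n$ is covered by three events: a Chow--Liu error (the top $d-1$ empirical-MI edges fail to contain $E_{P^{(d)}}$), an underestimation event $\{\hk_n<k\}$, and an overestimation event $\{\hk_n>k\}$. Under (A1) and (A2), the first two decay exponentially in $n$ with rate bounded below by a positive constant depending only on $I_{\inf}$ and $\kappa$; dividing their log-probabilities by $n\veps_n$ and using $\veps_n\to 0$ sends them to $-\infty$, so neither affects the asymptotic value of $h_n$.

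Next, I would treat the overestimation event as the dominant one. Union-bounding over the at most $d-1-k$ ``extra'' edges in the MWST tree that must be pruned, and invoking the universal per-edge bound from the Euclidean-information-theory argument of Section~\ref{sec:proofidea} (which yields $\exp(-n\veps_n(1-o(1)))$ uniformly over forest-structured distributions satisfying (A1) and (A2)), one obtains
\begin{equation}
P^n\!\bigl(\hk_n>k\bigr) \,\le\, (d-1-k)\exp\!\bigl(-n\veps_n(1-o(1))\bigr), \nn
\end{equation}
together with a matching lower bound of the same form by restricting to a single extra edge and using the liminf half of Theorem~\ref{thm:forest}. Taking $\tfrac{1}{n\veps_n}\log$ gives the leading-order expression
\begin{equation}
h_n(P^{(d)}) \,=\, -1 + \frac{\log(d-1-k_n)}{n\veps_n} + o(1), \nn
\end{equation}
which depends only on $d_n$ and $k_n$, not on the particular distribution.

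Both assertions then follow directly. When $k_n(P_1^{(d)})\ge k_n(P_2^{(d)})$ with a common $d_n$, monotonicity of $\log$ immediately yields $h_n(P_1^{(d)})\le h_n(P_2^{(d)})$; the extreme product distributions ($k_n=0$) maximize $\log(d-1-k_n)$, while tree-structured distributions ($k_n=d_n-1$) make the overestimation event vacuous so that only the remaining exponential decay is left, forcing $h_n\to-\infty$, the minimum. Equal $k_n$ makes the leading expressions identical, yielding $h_n(P_1^{(d)})=h_n(P_2^{(d)})$ asymptotically.

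The main obstacle will be establishing that the $o(1)$ correction in the local Euclidean approximation to the I-projection (cf.~\eqref{eqn:dqp_app}) is \emph{uniform} over the growing family of forest-structured distributions. This is precisely where (A1) and (A2) are used: they bound the relevant pairwise marginals away from zero, which in turn controls the Hessian of the KL-divergence appearing in the Taylor expansion and ensures the value of the approximating semidefinite program converges uniformly. A secondary technical point is the matching liminf needed for the equal-$k_n$ equality claim; this requires a converse to the overestimation bound that agrees with it to first order, most naturally obtained by isolating a single pair of independent variables and computing its exact large-deviation exponent via the same Euclidean machinery.
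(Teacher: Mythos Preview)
Your approach is essentially the same as the paper's: isolate the overestimation term as the dominant contribution to $h_n$ and read off the $k_n$-dependence from the combinatorial prefactor. The paper's own argument, however, is much terser. It simply takes the finite-$n$ upper bound~\eqref{eqn:over_5} on the overestimation probability, normalizes by $n\veps_n$, and observes that the resulting expression
\[
-1+\frac{2}{n\veps_n}\log(d_n-k_n-1)+\frac{r^2\log(n+1)}{n\veps_n}
\]
is monotone decreasing in $k_n$ once the last term is discarded via~\eqref{eqn:veps}. No separate treatment of the Chow--Liu and underestimation errors is given at this stage (they were already absorbed in the proof of Theorem~\ref{thm:forest}), and no lower bound is invoked.

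Two small discrepancies are worth noting. First, your single union bound over the ``extra'' edges gives a prefactor $(d-1-k)$, whereas the paper's derivation uses two union bounds (\eqref{eqn:over_1} and~\eqref{eqn:over_2}), yielding $(d-k-1)^2$ and hence the coefficient $2$ in front of $\log(d_n-k_n-1)$; this does not affect the monotonicity conclusion. Second, your ``matching lower bound'' is stronger than what Theorem~\ref{thm:forest} actually supplies: the liminf statement~\eqref{eqn:liminf_statement} only gives $-C_P$ with $C_P\in(1,\infty)$, not $-1$, so the upper and lower bounds do not coincide to first order. The paper sidesteps this by drawing all of its conclusions (including the equal-$k_n$ clause) from the upper bound alone, treating $h_n$ as determined by the expression in~\eqref{eqn:normalized_log}; your more careful program of pinning down $h_n$ exactly via a two-sided estimate would require sharpening~\eqref{eqn:liminf_statement}, which the paper does not do.
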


Note that the corollary is intimately tied to the proposed algorithm $\CLThres.$ We are not claiming that such a result holds for all other forest learning algorithms. The intuition for this result is the following: We recall from the discussion after Theorem~\ref{thm:forest} that the overestimation error dominates the probability of error for structure learning. Thus, the performance of $\CLThres$ degrades with the number of missing edges. If there are very few edges (i.e., $k_n$ is very small relative to $d_n$), the $\CLThres$ estimator is more likely to overestimate the number of edges as compared to if there are many edges (i.e., $k_n/d_n$ is close to 1). We conclude that a distribution which is Markov on an {\em empty graph} (all variables are independent) is the {\em hardest} to learn (in the sense of Corollary~\ref{cor:extremal} above). Conversely,  {\em trees} are the {\em easiest} structures to learn.

\subsection{Lower Bounds on Sample Complexity}
Thus far, our results are for a specific algorithm $\CLThres$ for learning the structure of Markov forest distributions. At this juncture, it is natural to ask whether the scaling laws in Theorem~\ref{cor:scaling} are the best possible over all algorithms (estimators). To answer this question, we limit ourselves to the scenario where the true graph  $T_P$  is a uniformly distributed chance variable\footnote{The term {\em chance variable}, attributed to  \cite{Gallager01}, describes random quantities $Y:\Omega\to W$ that take on values in arbitrary alphabets $W$. In contrast, a random variable $X$ maps the sample space $\Omega$ to the reals $\bR$.  } with probability  measure $\bP$. Assume two different scenarios:
\begin{enumerate}
\item[(a)] $T_P$ is drawn from the uniform distribution on $\calT_k^d$, i.e., $\bP(T_P = t) = 1/|\calT_k^d|$ for all forests $t\in \calT_k^d$. Recall that $ \calT_k^d$ is the set of labeled forests with $d$ nodes and $k$ edges. 
\item[(b)] $T_P$ is drawn from the uniform distribution on $\calF^d$, i.e., $\bP(T_P = t) = 1/|\calF^d|$ for all forests $t\in \calF^d$. Recall that $ \calF^d$ is the set of labeled  forests with $d$ nodes.
\end{enumerate}
This following result is inspired by Theorem 1 in  \cite{Bresler08}. Note that an {\em estimator} or {\em algorithm} $\hT^d$ is simply a map from the set of samples $(\calX^d)^n$ to a set of graphs (either $\calT_k^d$ or $\calF^d$). We emphasize that the following result is stated with the assumption that we are {\em averaging} over the random choice of the true graph $T_P$.
\begin{theorem}[Lower Bounds on Sample Complexity] \label{thm:sample}
Let  $\varrho< 1$ and  $r:=|\calX|$. In case (a) above, if  
\begin{equation}
n< \varrho \, \frac{(k-1) \log d}{d\log r} ,  \label{eqn:sample_lower_bd}
\end{equation}
then $\bP(\hT^d \ne T_P )\to 1$ for any estimator $\hT^d:(\calX^d)^n \to\calT_k^d$. Alternatively, in case (b), if  
\begin{equation}
n< \varrho\, \frac{ \log d}{\log r} ,  \label{eqn:sample_lower_bd_all_forests}
\end{equation}
then $\bP(\hT^d \ne T_P )\to 1$ for any estimator $\hT^d:(\calX^d)^n \to \calF^d$.
\end{theorem}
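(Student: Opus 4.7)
The plan is a standard Fano-based converse, carried out in the Bayesian setting where $T_P$ is assumed uniform on its support $\mathcal{T}$ (which is $\calT_k^d$ in case (a) and $\calF^d$ in case (b)). Since the estimator $\hT^d$ depends on $T_P$ only through the samples, we have the Markov chain $T_P\to\bX^n\to\hT^d$, and Fano's inequality combined with the data-processing inequality yields
\begin{equation*}
\bP(\hT^d\ne T_P)\;\ge\;1\,-\,\frac{I(T_P;\bX^n)+\log 2}{\log|\mathcal{T}|}.
\end{equation*}
It therefore suffices to upper bound $I(T_P;\bX^n)$ and to lower bound $\log|\mathcal{T}|$ in each of the two cases.

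For the mutual information I would use the crude bound $I(T_P;\bX^n)\le H(\bX^n)\le nH(\bX_1)\le nd\log r$, valid because the samples are i.i.d.\ given $T_P$ and each $\bX_l$ takes values in $\calX^d$ with $|\calX|=r$. This is exactly where the factor $d\log r$ in the denominators of \eqref{eqn:sample_lower_bd} and \eqref{eqn:sample_lower_bd_all_forests} comes from.

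The cardinality bound is the case-specific ingredient. For case (b), I would note that $\calF^d$ contains all labeled spanning trees on $d$ vertices, so by Cayley's formula $|\calF^d|\ge d^{d-2}$, and hence $\log|\calF^d|\ge(d-2)\log d$; the hypothesis $n<\varrho\log d/\log r$ then forces the Fano lower bound to exceed $1-\varrho-o(1)$ as $d\to\infty$. For case (a), I would lower bound $|\calT_k^d|$ by counting those forests consisting of a single nontrivial tree on a chosen $(k+1)$-subset of vertices, with the remaining $d-k-1$ vertices isolated. Cayley's formula together with the standard estimate $\binom{d}{k+1}\ge(d/(k+1))^{k+1}$ yields
\begin{equation*}
|\calT_k^d|\;\ge\;\binom{d}{k+1}(k+1)^{k-1}\;\ge\;\frac{d^{k+1}}{(k+1)^{2}},
\end{equation*}
so that $\log|\calT_k^d|\ge(k+1)\log d-2\log(k+1)\ge(k-1)\log d$ uniformly in $k\in\{2,\dots,d-1\}$. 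Plugging into Fano and using $n<\varrho(k-1)\log d/(d\log r)$ again drives the right-hand side to $1-\varrho-o(1)$.

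The main subtlety I anticipate lies not in a difficult technical step but in how the conclusion ``$\bP(\hT^d\ne T_P)\to 1$'' should be parsed: Fano's inequality only delivers $\bP(\hT^d\ne T_P)\ge 1-\varrho-o(1)$ for any fixed $\varrho<1$ under the stated sample bound, so the natural reading (and the one consistent with \cite{Bresler08}) is that the $\liminf$ of the error probability is at least $1-\varrho$ for every admissible $\varrho$, and therefore can be made arbitrarily close to $1$. One must also check that the slack in the binomial estimate $\binom{d}{k+1}\ge(d/(k+1))^{k+1}$ does not degrade the $(k-1)\log d$ lower bound in the regime where $k$ is small. Once these accounting points are handled, both parts of the theorem follow directly from the Fano calculation above.
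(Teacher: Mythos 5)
Your Fano-based route proves a strictly weaker statement than the theorem claims, and this is a genuine gap rather than a matter of interpretation. The theorem (and the paper's discussion immediately after it, which calls the result a \emph{strong converse}) asserts that for a \emph{fixed} $\varrho<1$, any sample size below the stated threshold forces $\bP(\hT^d\ne T_P)\to 1$. Fano with $I(T_P;\bX^n)\le nd\log r$ can only deliver $\bP(\hT^d\ne T_P)\ge 1-\varrho-o(1)$: if the sequence $n$ sits just below the threshold for that fixed $\varrho$, the ratio $(nd\log r+\log 2)/\log|\calT_k^d|$ is essentially $\varrho$ and the bound saturates there. Your proposed repair --- ``apply the bound for every admissible $\varrho$'' --- does not work, because the hypothesis on $n$ is tied to the given $\varrho$; enlarging $\varrho$ only weakens the hypothesis, it does not shrink the actual value of $nd\log r/\log|\calT_k^d|$ for the sequence at hand. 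So as written your argument establishes a weak converse (error probability bounded below by $1-\varrho$), not the claimed convergence to $1$.

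The paper avoids Fano entirely and uses a counting argument on the range of the estimator: any map $\hT^d:(\calX^d)^n\to\calT_k^d$ has at most $r^{nd}$ distinct outputs, so under the uniform prior the error probability is at least $1-r^{nd}/|\calT_k^d|$ (the mass of all forests outside the range is entirely error). Combining this with $|\calT_k^d|\ge (d-k)d^{k-1}\ge d^{k-1}$ and $n<\varrho(k-1)\log d/(d\log r)$ gives $\bP(\hT^d\ne T_P)\ge 1-\exp\left((\varrho-1)(k-1)\log d\right)\to 1$, and similarly in case (b) with $\log|\calF^d|\ge (d-2)\log d$ from Cayley's formula. This is exactly the extra strength you need: the slack $(1-\varrho)$ appears multiplied by $(k-1)\log d$ in the exponent, so the error probability converges to $1$ rather than merely exceeding $1-\varrho$. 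Your cardinality estimates (Cayley for case (b), and the single-tree-plus-isolated-vertices count giving $\log|\calT_k^d|\ge(k-1)\log d$ in case (a)) are correct and interchangeable with the paper's, so the fix is simply to replace the Fano step by this range-counting step.
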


This result, a {\em strong converse}, states that  $n= \Omega( \frac{k}{d}\log d)$ is {\em necessary} for any  estimator with oracle knowledge of $k$ to succeed. Thus, we need at least logarithmically many samples in $d$ if the fraction $k/d$ is kept constant as the graph size grows even if {\em $k$ is known precisely} and does not have to be estimated. Interestingly, \eqref{eqn:sample_lower_bd} says that if $k$ is large, then we need more samples. This is  because there are fewer forests with a small number of edges as compared to forests with a large number of edges. In contrast, the performance of $\CLThres$ degrades when $k$  is small because it is more sensitive to the overestimation error.  Moreover, if the estimator does not know $k$, then~\eqref{eqn:sample_lower_bd_all_forests} says that $n=\Omega(\log d)$ is {\em necessary} for successful recovery. We conclude that the set of scaling requirements prescribed in Theorem~\ref{cor:scaling} is almost optimal. In fact, if the true structure $T_P$ is a tree,  then Theorem~\ref{thm:sample} for $\CLThres$ says that the (achievability) scaling laws   in Theorem~\ref{cor:scaling} are indeed optimal (up to constant factors in the $O$ and $\Omega$-notation) since $n>(2\log (d-k))^{1+\zeta}$ in \eqref{eqn:scaling} is trivially satisfied. Note that if $T_P$ is a tree, then the Chow-Liu ML procedure or $\CLThres$ results in the  sample complexity  $n=O(\log d)$ (see Theorem~\ref{cor:scaling}).

\section{Risk Consistency} \label{sec:param_consistency}
In this section, we develop results for risk consistency to study how fast the parameters of the estimated distribution converge to their true values. For this purpose, we define the {\em risk} of the estimated distribution $P^*$ (with respect to the true probability model $P$)   as
\begin{equation}
\calR_n (P^*) := D(P \, ||\, P^*) - D(P \, ||\, \tilP), \label{eqn:defrisk}
\end{equation}
where $\tilP$ is the forest projection of $P$ defined in \eqref{eqn:forest_proj}. Note that the original probability model $P$ does not need to be a forest-structured distribution in the definition of the risk. Indeed, if $P$ is Markov on a forest, \eqref{eqn:defrisk} reduces to $\calR_n (P^*) =D(P \, ||\, P^*)$ since the second term is zero.    We quantify the rate of decay of the risk when the number of samples $n$ tends to infinity. For $\delta>0$, we define the event 
\begin{equation}
\calC_{n,\delta}:=\left\{\bx^n\in(\calX^d)^n: \frac{\calR_n (P^*)}{d}  >\delta \right\}. \label{eqn:Cn}
\end{equation}
That is, $\calC_{n,\delta}$ is the event that the {\em average risk}  ${\calR_n (P^*)}/{d}$ exceeds some constant $\delta$. We say that the estimator $P^*$ (or an algorithm) is {\em $\delta$-risk consistent} if the probability of $\calC_{n,\delta}$ tends to zero as $n\to\infty$. Intuitively, achieving $\delta$-risk consistency is easier than  achieving  structural consistency since the learned model $P^*$ can be close to the true forest-projection $\tilP$ in the KL-divergence sense even if their structures differ. 

In order to quantify the rate of decay of the risk in \eqref{eqn:defrisk}, we need to define some  stochastic order notation. We say that a sequence of random variables $Y_n = O_p(g_n)$ (for some deterministic positive sequence $\{g_n\}$) if for every $\epsilon>0$, there exists a $B=B_{\epsilon}>0$ such that $\limsup_{n\to\infty} \Pr(|Y_n|>Bg_n) < \epsilon$. Thus, $\Pr(|Y_n|>Bg_n)\ge\epsilon$ holds for only finitely many $n$.

We say that a reconstruction algorithm has {\em risk consistency of order} (or {\em rate}) $g_n$ if $\calR_n (P^*)  =O_p(g_n)$. The definition of the order of risk consistency involves the true model $P$. Intuitively,  we expect that as $n\to\infty$, the estimated distribution $P^*$ converges to the projection $\tilP$ so $\calR_n (P^*)\to 0$ in probability.

\subsection{Error Exponent for  Risk Consistency}\label{subsec:param_fixed} 
In this subsection, we consider a fixed distribution $P$ and state consistency results in terms of the event $\calC_{n,\delta}$.   Consequently, the model size $d$ and the number of edges $k$ are fixed.  This lends insight into deriving results for the order of the risk consistency and provides intuition for the high-dimensional scenario   in Section~\ref{subsec:param_high}.
\begin{theorem}[Error Exponent for $\delta$-Risk Consistency] \label{thm:consisten_param}
For   $\CLThres$, there exists a constant $\delta_0>0$ such that for all $0<\delta<\delta_0$,
\begin{equation}
\limsup_{n\to\infty}\frac{1}{n}\log P^n(\calC_{n,\delta}) \le  - \delta. \label{eqn:param_upper}
\end{equation}
The corresponding lower bound is 
\begin{equation}
  \liminf_{n\to\infty}\frac{1}{n}\log P^n(\calC_{n,\delta})\ge -\delta \, d. \label{eqn:param_lower} 
\end{equation}
\end{theorem}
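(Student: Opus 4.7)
The plan is to apply the method of types to $P^n(\calC_{n,\delta})$ for both bounds, analyzing the resulting infimum via the local Euclidean approximation \eqref{eqn:dqp_app} from the proof of Theorem~\ref{thm:forest}. For the upper bound \eqref{eqn:param_upper}, the starting point is the type-class bound
\[
P^n(\calC_{n,\delta}) \le (n+1)^{r^d}\exp\Bigl(-n\inf\bigl\{D(Q\|P): \calR_n(P^*(Q))/d>\delta\bigr\}\Bigr),
\]
where the infimum is over $n$-types $Q$. I would focus on $Q$ near $P$, since types far from $P$ contribute large $D(Q\|P)$. For such nearby $Q$ the learned structure coincides with $E_{\tilP}$ (structural errors having subexponentially small probability by Theorem~\ref{thm:forest}), so the Pythagorean identity for the exponential family $\calD(E_{\tilP})$ gives $\calR_n(P^*)=D(\tilP\|P^*)$, and the forest factorization yields the closed form
\[
D(\tilP\|P^*) = \sum_{i\in V}(1-\deg(i))\,D(P_i\|Q_i) + \sum_{(i,j)\in E_{\tilP}} D(P_{ij}\|Q_{ij}).
\]
In the Euclidean regime each KL reduces to a weighted $\chi^2$-form in $Q-P$, and combining the data-processing inequality for $\chi^2$-divergence with the cancellations induced by the signed coefficients $(1-\deg(i))$ gives $\calR_n(P^*)/d \le D(Q\|P)(1+o(1))$. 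Sanov's theorem then bounds the above infimum from below by $\delta(1-o(1))$, yielding \eqref{eqn:param_upper} provided $\delta<\delta_0$ is small enough to remain in the Euclidean regime.

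For the lower bound \eqref{eqn:param_lower}, I would construct an explicit ``worst'' perturbation. Fix any edge $(i,j)\in E_{\tilP}$ and let $Q$ be the forest distribution Markov on $T_{\tilP}$ that agrees with $P$ on every marginal except $Q_{ij}:=P_{ij}+\epsilon M$, where the perturbation matrix $M$ has vanishing row and column sums so that $Q_i=P_i$ and $Q_j=P_j$. Then $Q(\bx)/P(\bx)=Q_{ij}(x_i,x_j)/P_{ij}(x_i,x_j)$, so
\[
D(Q\|P) = D(Q_{ij}\|P_{ij}) \approx \tfrac{\epsilon^2}{2}\sum_{a,b}\frac{M(a,b)^2}{P_{ij}(a,b)}.
\]
For the induced risk, under the type class of $Q$ only the $(i,j)$-pairwise summand in the closed form above contributes (all marginal-KL terms and all other pairwise-KL terms vanish because $M$ preserves marginals), so $\calR_n(P^*)\approx \tfrac{\epsilon^2}{2}\sum_{a,b}M(a,b)^2/P_{ij}(a,b)=D(Q\|P)$. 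Choose $\epsilon$ so this common value equals $\delta d+o(\delta d)$ and invoke the Sanov-type lower bound $P^n(\text{type class of }Q)\ge (n+1)^{-r^d}\exp(-nD(Q\|P))$ to obtain $\liminf_{n\to\infty}\tfrac{1}{n}\log P^n(\calC_{n,\delta})\ge -\delta d$.

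The main obstacle is establishing the Euclidean operator inequality $\calR_n(P^*)\le d\cdot D(Q\|P)(1+o(1))$ driving the $-\delta$ rate in the upper bound. The signed node-degree coefficients $(1-\deg(i))$ render the term-by-term bound via data processing loose: summing $|1-\deg(i)|$ and $|E_{\tilP}|$ gives a constant that can exceed $d$ for non-product forests. I would therefore work in the exponential-family coordinates of $\calD(E_{\tilP})$ and bound the generalized Rayleigh quotient between the tree Fisher form $A$ (which encodes $\calR_n$) and the full Fisher form $B$ (which encodes $D(Q\|P)$) using the orthogonal decomposition of the ambient tangent space into tree and complementary subspaces, showing that the relevant ratio is at most $d$. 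A secondary subtlety is that the structural-error event $\calA_n$ has only subexponential probability $\exp(-cn\veps_n)$ from Theorem~\ref{thm:forest}; however, the type-class argument absorbs structural and parametric errors into a single infimum, and the overestimation analysis ensures that types $Q$ realizing $\calA_n$ carry $D(Q\|P)$ bounded away from zero, contributing only subdominantly.
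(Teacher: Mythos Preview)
Your upper-bound argument has a real gap: you explicitly identify the inequality $\calR_n(P^*)/d \le D(Q\|P)(1+o(1))$ as ``the main obstacle'' and do not close it. The undirected factorization
\[
D(\tilP\|P^*) \;=\; \sum_{i\in V}(1-\deg(i))\,D(P_i\|Q_i) \;+\; \sum_{(i,j)\in E_{\tilP}} D(P_{ij}\|Q_{ij})
\]
with \emph{signed} node coefficients is precisely what makes your data-processing bound loose and forces you toward an unproved Fisher--Rayleigh comparison. The paper avoids this difficulty entirely by switching to the \emph{directed} representation of the forest: writing $P^*(\bx)=\prod_{i\in V}\hP_{i|\pi(i)}(x_i\,|\,x_{\pi(i)})$ yields
\[
D(P^*\|P)\;=\;\sum_{i\in V} D\bigl(\hP_{i|\pi(i)}\,\big\|\,P_{i|\pi(i)}\bigr),
\]
a sum of exactly $d$ \emph{nonnegative} conditional divergences. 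From here the argument is elementary: if the average exceeds $\delta$ then the maximum does, and a union bound over the $d$ terms combined with the distribution-free conditional-type estimate
\[
P^n_{X,Y}\bigl(D(\hP_{X|Y}\|P_{X|Y})>\delta\bigr)\;\le\;(n+1)^{r^2}\exp(-n\delta)
\]
(Lemma~\ref{lem:dpq_bound} in the paper) gives $P^n(\tilcalC_{n,\delta}\mid\calB_n^c,\calU_n^c)\le d(n+1)^{r^2}\exp(-n\delta)$ directly, with no Euclidean approximation needed. The structural events are handled by a separate conditioning lemma (Lemma~\ref{lem:reduction_2}), not absorbed into the Sanov infimum as you propose; this matters because your prefactor $(n+1)^{r^d}$ from full-joint Sanov would wreck the high-dimensional corollaries, whereas the paper's pairwise prefactor $(n+1)^{r^2}$ is $d$-independent.

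Your lower-bound construction is closer in spirit to the paper's, which also perturbs a single coordinate and invokes a type-class lower bound; the paper does this via the directed decomposition and the matching lower half of Lemma~\ref{lem:dpq_bound} rather than via an explicit edge perturbation, but the mechanism is essentially the same. The substantive missing idea is the directed decomposition for the upper bound.
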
 
 
The theorem states that if $\delta$ is sufficiently small, the decay rate of the probability of  $\calC_{n,\delta}$ is exponential, hence clearly $\CLThres$ is $\delta$-risk consistent. Furthermore, the bounds on the error exponent associated to the event $\calC_{n,\delta}$ are {\em independent} of the parameters of $P$ and only depend on $\delta$ and the dimensionality $d$. Intuitively, \eqref{eqn:param_upper} is true because if we want the risk of $P^*$   to be at  most $\delta d$, then each of the empirical pairwise marginals $\hP_{i,j}$ should be $\delta$-close to the true pairwise marginal $\tilP_{i,j}$. Note also that for  $\calC_{n,\delta}$ to  occur with high probability, the edge set  does not need to be estimated correctly so there is  no dependence on $k$.

\subsection{The High-Dimensional Setting}  \label{subsec:param_high}
We again consider the high-dimensional setting where the tuple of parameters $(n,d_n,k_n)$ tend to infinity and we have a sequence of learning problems indexed by the number of data points $n$. We again assume that~\eqref{eqn:minMI} and~\eqref{eqn:minentry} hold  and derive sufficient conditions under which the probability of the event $\calC_{n,\delta}$ tends to zero for a sequence of  $d$-variate distributions $\{P^{(d)}\in\calP(\calX^d)\}_{d\in\bN}$. The proof of  Theorem~\ref{thm:consisten_param}  leads immediately to the following corollary.

\begin{corollary}[$\delta$-Risk Consistency Scaling Law] \label{cor:scaling_param}
Let $\delta>0$ be a sufficiently small constant  and  $a\in (0,\delta)$. If the number of variables in the sequence of models  $\{P^{(d)}\}_{d\in\bN}$ satisfies $d_n=o \left(\exp(a n) \right),$ then $\CLThres$ is $\delta$-risk consistent for $\{P^{(d)}\}_{d\in\bN}$.
\end{corollary}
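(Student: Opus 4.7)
The plan is to use Theorem~\ref{thm:consisten_param} as the starting point, extract from its proof a non-asymptotic bound with explicit dependence on $d$, and then combine this bound with the growth hypothesis $d_n=o(\exp(an))$, where $a<\delta$.

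First I would reinspect the argument underlying the upper bound~\eqref{eqn:param_upper}. The event $\calC_{n,\delta}$ requires that the average risk $\calR_n(P^*)/d$ exceed $\delta$. Because $P^*$ is the reverse I-projection of the joint type $\hP$ onto the learned forest $\hT_{\hk_n}$, the KL-divergence $D(P\,||\,P^*)$ decomposes additively into node and pairwise marginal contributions, each controlled by how closely the corresponding empirical marginal matches the true marginal. For the sum to exceed $d\delta$, at least one pairwise contribution among the $\binom{d}{2}$ candidate vertex pairs must be of order $\delta$. Applying a union bound over all such pairs and then the method-of-types Sanov-style estimate that underlies Theorem~\ref{thm:consisten_param} (whose exponent is independent of the parameters of $P^{(d)}$ by (A1) and (A2)), I obtain a finite-sample bound of the form
$$P^n(\calC_{n,\delta})\;\le\; q(n,d)\,\exp(-n\delta),$$
where $q(n,d)$ is polynomial in both arguments.

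Second, I would absorb the polynomial prefactor using the growth hypothesis. For any $\epsilon>0$, $d_n=o(\exp(an))$ implies $q(n,d_n)=o(\exp((a+\epsilon)n))$ up to polynomial-in-$n$ factors. Choosing $\epsilon$ small enough that $a+\epsilon<\delta$ gives
$$P^n(\calC_{n,\delta})\;=\;o\!\left(\mathrm{poly}(n)\exp\bigl(-n(\delta-a-\epsilon)\bigr)\right)\longrightarrow 0 \quad \text{as } n\to\infty,$$
which establishes $\delta$-risk consistency for the sequence $\{P^{(d)}\}_{d\in\bN}$.

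The main obstacle is verifying that the polynomial prefactor $q(n,d)$ is of sufficiently low order in $d$ for the tradeoff $a<\delta$ (rather than, say, $a<\delta/c$ for some $c>1$) to suffice; this requires tracing the union bound in the proof of Theorem~\ref{thm:consisten_param} carefully and confirming that only pairwise (not higher-order) configurations are enumerated, exploiting the fact that the decomposition of $D(P\,||\,P^*)$ across the forest involves at most $d-1$ pairwise terms. A secondary subtlety is ensuring that the threshold $\delta_0$ in Theorem~\ref{thm:consisten_param} can be chosen uniformly across the sequence $\{P^{(d)}\}$; this is exactly where assumptions (A1) and (A2) are indispensable, since the local (Euclidean) approximations to KL-divergence used in the proof of Theorem~\ref{thm:consisten_param} retain their validity as $d$ grows only when the pairwise marginals are uniformly bounded away from zero and the minimum pairwise mutual information stays uniformly bounded away from zero.
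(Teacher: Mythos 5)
Your proposal is correct and follows essentially the paper's own route: the paper proves Corollary~\ref{cor:scaling_param} exactly by invoking the finite-sample bound \eqref{eqn:abovelemma}, whose prefactor $d(n+1)^{r^2}$ is linear in $d$ and polynomial in $n$ (coming from the decomposition \eqref{eqn:divergence_decomp} into $d$ per-node conditional divergences and a union bound over nodes, not over all $\binom{d}{2}$ pairs), so that $d_n=o(\exp(an))$ with $a<\delta$ forces the bound to vanish. Your flagged obstacle is resolved exactly as you suspect, and your remark on uniformity of the exponents under (A1)--(A2) matches how the paper handles the high-dimensional regime.
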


Interestingly, this sufficient condition on how number of variables $d$ should scale with $n$ for consistency is very similar to Theorem~\ref{cor:scaling}. In particular, if $d$ is polynomial in $n$, then $\CLThres$ is both structurally consistent as well as $\delta$-risk consistent. We now study the order of the risk consistency of $\CLThres$ as the model size $d$ grows.

\begin{theorem}[Order of Risk Consistency] \label{cor:estimation_consist}
The risk of the sequence of estimated distributions $\{(P^{(d)})^*\}_{d\in\bN}$ with respect to $\{P^{(d)}\}_{d\in\bN}$   satisfies
\begin{align}
\calR_n((P^{(d)})^*) =  O_p\left(\frac{d\log d}{n^{1-\gamma}}\right), \label{eqn:risk_proj}
\end{align}
for every $\gamma>0$, i.e., the   risk consistency  for $\CLThres$ is of order $(d\log d)/n^{1-\gamma}$. 
\end{theorem}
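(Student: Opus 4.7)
The plan is to condition on whether $\CLThres$ recovers the structure $E_{\tilPd}$ of the forest projection. Let $\widetilde{\calA}_n=\{\hE_{\hk_n}\neq E_{\tilPd}\}$ as in Corollary~\ref{cor:forest_proj}. On $\widetilde{\calA}_n^c$, the estimated distribution $(P^{(d)})^*$ lies in the same log-linear exponential family (forest distributions with edge set $E_{\tilPd}$) as $\tilPd$, so the Pythagorean identity for $I$-projections gives $D(\Pd\,\|\,(P^{(d)})^*)=D(\Pd\,\|\,\tilPd)+D(\tilPd\,\|\,(P^{(d)})^*)$, whence $\calR_n((P^{(d)})^*)=D(\tilPd\,\|\,(P^{(d)})^*)$. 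Writing both $\tilPd$ and $(P^{(d)})^*$ via the factorization $\mu(\bx)=\prod_i \mu_i(x_i)^{1-\deg(i)}\prod_{(i,j)\in E_{\tilPd}}\mu_{i,j}(x_i,x_j)$ (with $\mu=\Pd$ and $\mu=\hP$ respectively) and taking expectation under $\tilPd$ yields
\begin{equation*}
D(\tilPd\,\|\,(P^{(d)})^*)=\sum_{(i,j)\in E_{\tilPd}}D(\Pd_{i,j}\,\|\,\hP_{i,j})+\sum_{i\in V}(1-\deg(i))\,D(\Pd_i\,\|\,\hP_i),
\end{equation*}
which I would upper bound by $\sum_{(i,j)\in E_{\tilPd}}D(\Pd_{i,j}\,\|\,\hP_{i,j})+\sum_{i\in V}D(\Pd_i\,\|\,\hP_i)$, since contributions from nodes of degree at least two are non-positive. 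The right-hand side is a sum of at most $2d$ non-negative KL terms.

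To bound each term I would invoke a Sanov-type tail estimate, which is well defined thanks to Assumption~(A2): for every singleton or pairwise empirical marginal, $\Pr(D(\Pd_\cdot\,\|\,\hP_\cdot)>s)\le (n+1)^{r^2}\exp(-ns)$. Taking a union bound over the $\le 2d$ terms and using $\sum X_i\le d\max_i X_i$ for non-negative $X_i$, the choice $s_n=c(\log d+\log n)/n$ (with $c$ depending on the target probability $\epsilon$) suffices to make the overall tail probability below $\epsilon$ for all sufficiently large $n$. Since $\log n=o(n^\gamma)$ for every $\gamma>0$, this yields $D(\tilPd\,\|\,(P^{(d)})^*)=O_p(d\log d/n^{1-\gamma})$ on $\widetilde{\calA}_n^c$, as required.

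On the bad event $\widetilde{\calA}_n$, Assumption~(A2) combined with standard concentration of the empirical marginals around their expectations guarantees $\min_{\bx}(P^{(d)})^*(\bx)\ge(\kappa/2)^{O(d)}$ with high probability, giving the crude deterministic bound $\calR_n((P^{(d)})^*)\le D(\Pd\,\|\,(P^{(d)})^*)=O(d)$. By Corollary~\ref{cor:forest_proj} together with the scaling analysis behind Theorem~\ref{cor:scaling}, $\Pr(\widetilde{\calA}_n)$ tends to zero faster than any polynomial in $n$, so this case contributes $o(1)$ to $\Pr(\calR_n((P^{(d)})^*)>B\,d\log d/n^{1-\gamma})$ and the desired $O_p$ conclusion follows from its definition.

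The main obstacle is the simultaneous concentration of $O(d)$ empirical KL divergences: the polynomial prefactor $(n+1)^{r^2}$ in Sanov's bound is precisely what forces the $n^\gamma$ slack in the rate, and sharpening it (e.g., via a $\chi^2$-type concentration inequality without the $(n+1)^{r^2}$ prefactor) would tighten the conclusion to $O_p(d\log d/n)$. A secondary subtlety is ensuring that the potentially large risk on $\widetilde{\calA}_n$ is absorbed by the sub-exponential decay of $\Pr(\widetilde{\calA}_n)$; this works because the crude bound on the risk is only polynomial in $d$, which is in turn polynomial in $n$ under the scaling hypotheses from Section~\ref{subsec:param_high}.
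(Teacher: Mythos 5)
Your good-event analysis is sound and genuinely different from the paper's: you expand $D(\tilP\,||\,P^*)$ directly into unconditional pairwise and singleton divergences via the edge--node factorization, whereas the paper decomposes $D(P^*\,||\,P)$ into directed conditional divergences along the learned forest with ``matched parents'' (see \eqref{eqn:divergence_decomp} and Lemma~\ref{lem:dpq_bound}) and then swaps arguments. One imprecision in your version must be fixed: the tail bound you invoke, $\Pr(D(P_{\cdot}\,||\,\hP_{\cdot})>s)\le (n+1)^{r^2}\exp(-ns)$, is false as stated, because the method-of-types/Sanov bound controls $D(\hP\,||\,P)$, not the reversed divergence (which is even infinite with positive probability when $\hP$ has a zero entry). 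You need exactly the swapping step the paper performs --- Pinsker's inequality together with the Csisz\'ar--Talata lemma under (A2) --- which degrades the exponent to $n s \kappa/(2\log 2)$; since $\kappa$ is a uniform constant this does not change the final rate, but the step is not free and is in fact where (A2) enters quantitatively.

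The genuine gap is your treatment of the structure-estimation error. You condition on exact recovery of $E_{\tilP}$, i.e., on $\widetilde{\calA}_n^c$, and assert that $\Pr(\widetilde{\calA}_n)$ decays faster than any polynomial in $n$. That is true only for fixed $d$ (Theorem~\ref{thm:forest}). In the high-dimensional setting the overestimation part of $\Pr(\widetilde{\calA}_n)$ is bounded only by $(d-k-1)^2(n+1)^{r^2}\exp(-n\veps_n(1-\eta))$ as in \eqref{eqn:over_5}, which is sub-exponential and need not even tend to zero unless one additionally imposes the scaling condition $n>(2\log(d-k))^{1+\zeta}$ of Theorem~\ref{cor:scaling} --- a hypothesis Theorem~\ref{cor:estimation_consist} does not carry, and whose introduction would make the risk result depend on $k$ and $d-k$, contrary to the statement and the discussion following it. (Your fallback reasoning on the bad event is also off: the crude $O(d)$ bound is unnecessary for an $O_p$ claim, and the scaling hypotheses of Section~\ref{subsec:param_high} allow $d$ sub-exponential in $n$, not polynomial.) The paper avoids this entirely: via Lemma~\ref{lem:reduction_2} it conditions only on $\calB_n^c\cap\,\calU_n^c$ (correct top-$k$ edges and no underestimation), whose probabilities decay exponentially in $n$ up to polynomial prefactors in $d$ and $k$, and the matched-parent decomposition remains valid when $\hE_{\hk_n}\supseteq E_{\tilP}$, so overestimated structures are harmless for the risk. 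To repair your argument you would have to work on this weaker conditioning event and extend your decomposition to learned forests containing extra edges (those edges join distinct components of $T_{\tilP}$, so the extra terms are again true-versus-empirical divergences controllable by the swapped bound), or else assume the overestimation scaling law, which proves a strictly weaker theorem.
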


Note that since this result is stated for the high-dimensional case, $d=d_n$ is a sequence in $n$ but the dependence on $n$ is suppressed for notational simplicity in~\eqref{eqn:risk_proj}. This result implies that if $d = o(n^{1-2\gamma})$  then $\CLThres$ is risk consistent, i.e., $\calR_n((P^{(d)})^*) \to 0$ in probability. Note that this result is not the same as the conclusion of Corollary~\ref{cor:scaling_param} which refers to the probability that the average risk is greater than a fixed constant $\delta$. Also,  the order of convergence given in~\eqref{eqn:risk_proj} does not depend on the true number of edges $k$. This is a consequence of  the result in~\eqref{eqn:param_upper} where the upper bound on the exponent associated to the event $\calC_{n,\delta}$ is independent of the parameters of $P$. 

The order of the risk, or equivalently the rate of convergence of the estimated distribution to the forest projection, is almost linear in the number of variables $d$ and  inversely proportional to $n$. We provide three intuitive reasons to explain why this is plausible:  (i) the dimension of the sufficient statistics in a tree-structured graphical model  is of order $O(d)$  (ii) the ML estimator of the natural parameters of an exponential family converge to their true values at the rate of $O_p(n^{-1/2})$ \citep[Sec.\ 4.2.2]{Serfling} (iii) locally, the KL-divergence behaves like the square of a weighted Euclidean norm of the natural parameters \citep[Eq.\ (11.320)]{Cov06}.

We now compare Theorem~\ref{cor:estimation_consist} to the corresponding results in \cite{Gupta}. In these recent papers, it was shown that by modeling the bivariate densities $\hP_{i,j}$ as functions from a H\"{o}lder class with exponent $\alpha>0$ and using a reconstruction algorithm based on validation on a held-out dataset, the risk decays at a rate\footnote{The $\widetilde{O}_p(\cdot)$ notation suppresses the dependence on factors involving logarithms.} of  $\widetilde{O}_p(  dn^{-\alpha/(1+2\alpha)})$, which is  slower than the order of risk consistency in~\eqref{eqn:risk_proj}. This is due to the need to compute the bivariate densities via kernel density estimation. Furthermore, we model the pairwise joint distributions as discrete probability mass functions and not continuous probability density functions, hence there is no dependence on  H\"{o}lder exponents.

\section{Numerical Results} \label{sec:numerical}
In this section, we perform numerical simulations on synthetic and real datasets to study the effect of a finite number of samples on the probability of  the event $\calA_n$ defined in \eqref{eqn:An}.  Recall that this is the error event associated to an incorrect  learned structure.

\begin{figure}
\centering
\begin{picture}(115,100)
\thicklines
\put(0,50){\line(1,0){100}}
\put(50,0){\line(0,1){50}}
\put(50,50){\line(1,-1){50}}
\put(50,50){\line(-1,-1){50}}
\put(0,50){\circle*{8}}
\put(100,90){\circle*{8}}
\put(100,0){\circle*{8}}
\put(0,90){\circle*{8}}
\put(0,0){\circle*{8}}
\put(50,50){\circle*{8}}
\put(100,50){\circle*{8}}
\put(50,0){\circle*{8}}
\put(50,90){\circle*{8}}
\put(57,60){\makebox (0,0){$X_1$}}
\put(110,60){\makebox (0,0){$X_2$}}
\put(110,10){\makebox (0,0){$X_3$}}
\put(57,10){\makebox (0,0){$X_4$}}
\put(0,10){\makebox (0,0){$X_5$}}
\put(3,60){\makebox (0,0){$X_{k+1}$}}
\put(3,35){\makebox (0,0){$\vdots$}}
\put(3,100){\makebox (0,0){$X_{k+2}$}}
\put(57,100){\makebox (0,0){$X_{k+3}$}}
\put(80,100){\makebox (0,0){$\dots$}}
\put(110,100){\makebox (0,0){$X_{d}$}}
\end{picture} 
 
\caption{The forest-structured distribution Markov on $d$ nodes and $k$ edges. Variables $X_{k+1},\ldots, X_d$ are not connected to the main star graph. }
\label{fig:star}
\end{figure}
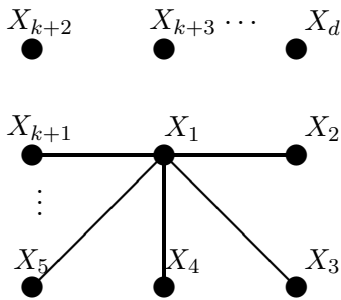 

\subsection{Synthetic Datasets}
In order to compare our estimate to the ground truth graph, we learn the structure of distributions that are Markov on the forest shown in Figure~\ref{fig:star}. Thus, a subgraph (nodes $1,\ldots, k+1$) is a (connected) star while nodes $k+2,\ldots, d-1$ are not connected to the star. Each random variable $X_j$ takes on values from a binary alphabet $\calX=\{0,1\}$. Furthermore, $P_j(x_j)=0.5$ for $x_j=0,1$ and all $j\in V$. The conditional distributions are governed by the ``binary symmetric channel'':
\begin{eqnarray}
P_{j|1}(x_j|x_1) = \left\{ \begin{array}{cc}
0.7 & x_j = x_1\\
0.3 & x_j \ne x_1\\
\end{array}\right.      \nn 
\end{eqnarray}
for $j=2,\ldots, k+1$. We further assume that the regularization sequence is given by  $\veps_n := n^{-\beta}$ for some $\beta\in (0,1)$. Recall that this  sequence satisfies the conditions in \eqref{eqn:veps}. We will vary $\beta$ in our experiments to observe its effect on the overestimation and underestimation errors.

\begin{figure}
\centering
\includegraphics[width=.475\columnwidth]{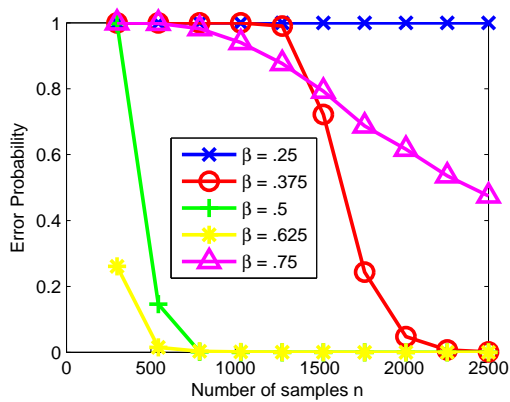}
\caption{The error probability  of structure learning for  $\beta\in (0,1)$. }
\label{fig:struct_error_prob}
\end{figure}

In Figure~\ref{fig:struct_error_prob}, we show  the simulated error probability as a function of the sample size $n$ for a  $d=101$ node graph (as in Figure~\ref{fig:star}) with $k=50$ edges. The error probability is estimated based on 30,000 independent runs of $\CLThres$ (over different datasets $\bx^n$). We observe that the error probability is minimized when  $\beta\approx 0.625$. Figure~\ref{fig:underest} show the simulated overestimation and underestimation errors for this experiment. We see that as $\beta\to 0$, the overestimation (resp.\ underestimation) error is likely to be small (resp.\ large) because the regularization sequence $\veps_n$ is large.  When the number of samples is relatively small as  in this experiment, both types of errors contribute significantly to the overall error probability.  When $\beta\approx 0.625$, we have the best tradeoff between overestimation and underestimation for this particular experimental setting.

Even though we mentioned that $\beta$ in \eqref{eqn:vepsbeta} should be chosen to be close to zero so that the error probability of structure learning decays as rapidly as possible, this example demonstrates that when given a finite number of samples, $\beta$ should  be chosen  to balance the overestimation and underestimation errors.    This does not violate Theorem~\ref{thm:forest} since  Theorem~\ref{thm:forest} is an asymptotic result and refers to the typical way an error occurs in the limit as $n\to\infty$. Indeed, when the number of samples is very large, it is shown that the overestimation error dominates the overall probability of error and so one should choose $\beta$ to be close to zero. The question of how best to select optimal $\beta$ when given only a finite number of samples appears to be a challenging one. We use cross-validation as a proxy to select this parameter for the real-world datasets in the next section. 

\begin{figure}
\centering
\begin{tabular}{cc}
\includegraphics[width=.475\columnwidth]{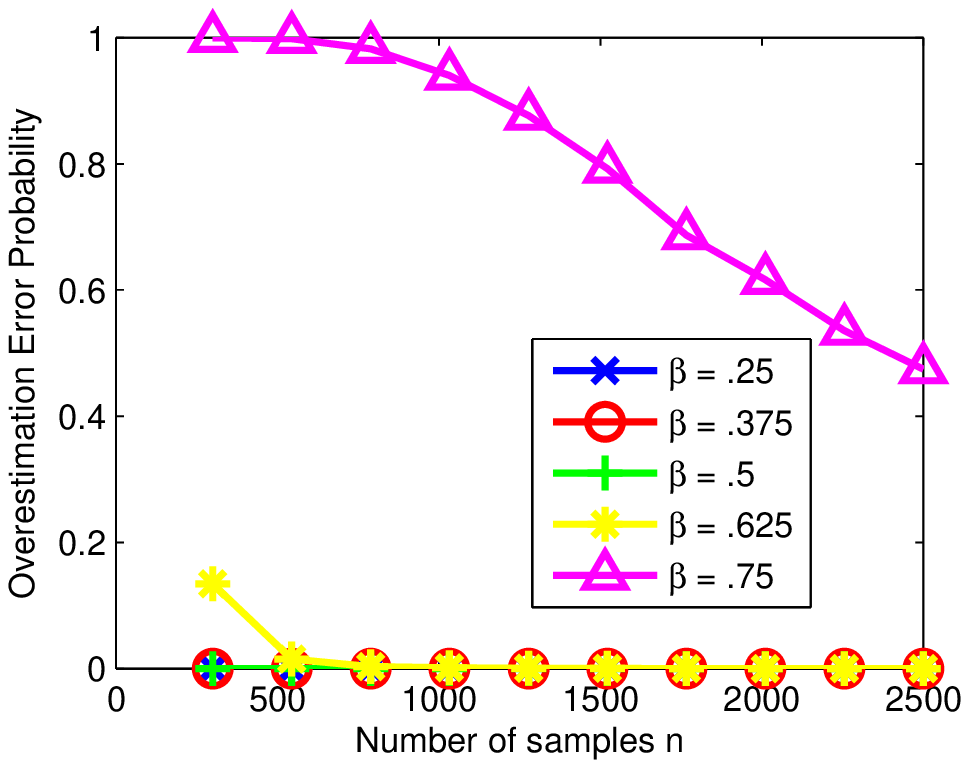} &
\includegraphics[width=.475\columnwidth]{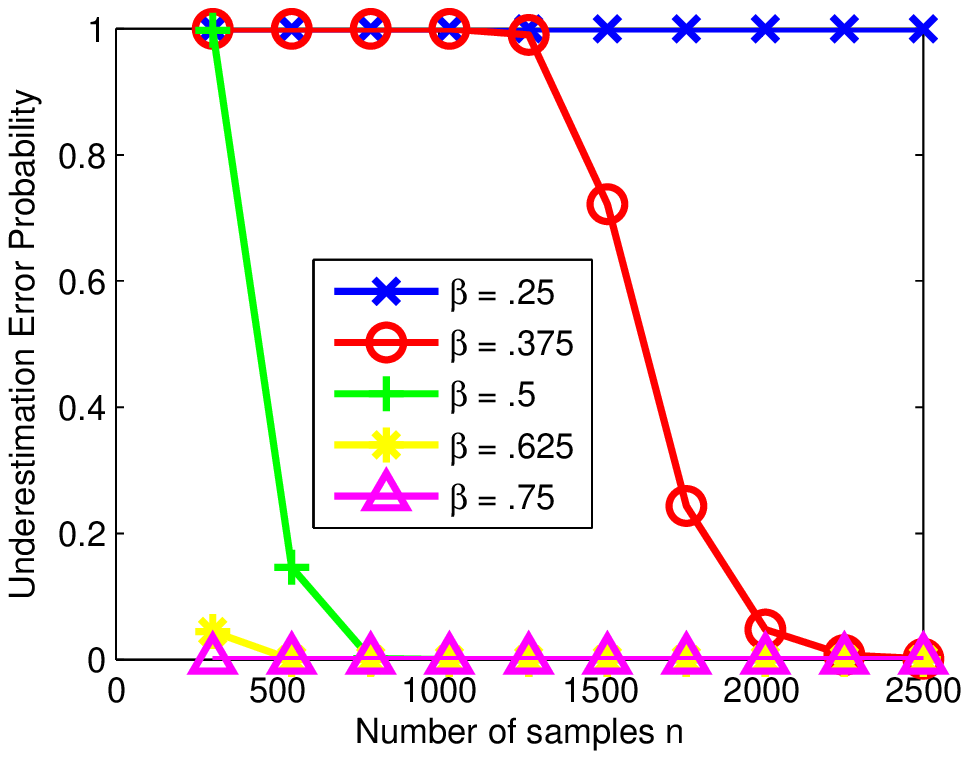}
\end{tabular}
\caption{The overestimation and underestimation errors  for  $\beta\in (0,1)$. }
\label{fig:underest}
\end{figure}

In Figure~\ref{fig:params}, we fix the value of $\beta$ at $0.625$ and plot the KL-divergence  $D(P \, ||\, P^*)$ as a function of the number of samples. This is done  for a forest-structured distribution $P$ whose graph is shown in Figure~\ref{fig:star} and with $d=21$ nodes and $k=10$ edges. The mean, minimum and maximum KL-divergences are computed based on 50 independent runs of $\CLThres$. We see that $\log D(P \, ||\, P^*)$ decays linearly. Furthermore, the slope of the mean curve  is approximately $-1$, which is in agreement with  \eqref{eqn:risk_proj}. This experiment shows that if we want to reduce the KL-divergence between the estimated and true models by a constant factor $A>0$, we need to increase the number of samples by roughly the same factor $A$. 

\begin{figure}
\centering
\includegraphics[width=.55\columnwidth]{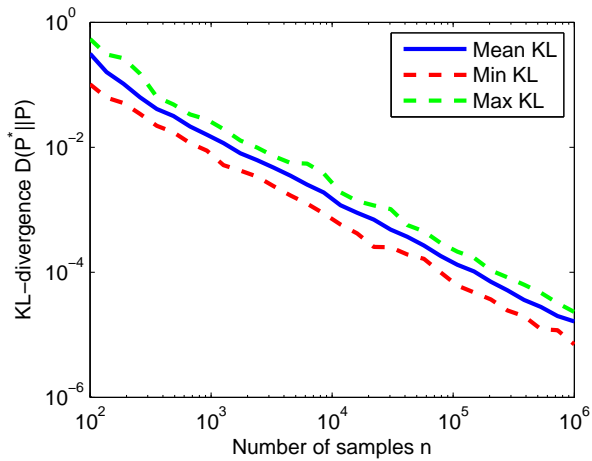}
\caption{Mean, minimum and maximum (across 50 different runs) of the KL-divergence   between the estimated model $P^*$ and the true model $P$ for a $d=21$ node graph with $k=10$ edges. }
\label{fig:params}
\end{figure}

\subsection{Real datasets}
We now demonstrate how well forests-structured distributions can model  two real datasets\footnote{These datasets are typically employed for binary classification but we use them for modeling purposes.} which are obtained from the UCI Machine Learning Repository \citep{UCI98}. The first dataset we used is known as the SPECT Heart dataset, which describes diagnosing of cardiac Single Proton Emission Computed Tomography (SPECT) images on normal and abnormal patients. The dataset contains $d=22$ binary variables and $n=80$ training samples. There are also 183 test samples. We learned a forest-structured distributions using  the 80 training samples for different $\beta\in (0,1)$ and subsequently computed the log-likelihood of both  the training and test samples. The results are displayed in Figure~\ref{fig:spect}. We observe that, as expected, the log-likelihood of the training samples increases monotonically with $\beta$. This is because there are more edges in the model when $\beta$ is large improving the modeling ability. However, we observe that there is overfitting when $\beta$ is large as evidenced by the decrease in the log-likelihood of the 183 test samples. The optimal value of $\beta$ in terms of the log-likelihood for this dataset is $\approx 0.25,$ but surprisingly an approximation with an empty graph\footnote{When $\beta=0$ we have an empty graph because all empirical mutual information quantities in this experiment are smaller than 1.} also yields a high log-likelihood score on the test samples. This implies that according to the available data, the variables are nearly independent. The forest graph for $\beta=0.25$ is shown in Figure~\ref{fig:spect_graph}(a) and is very sparse.

\begin{figure}
\centering
\begin{tabular}{cc}
\includegraphics[width=.475\columnwidth]{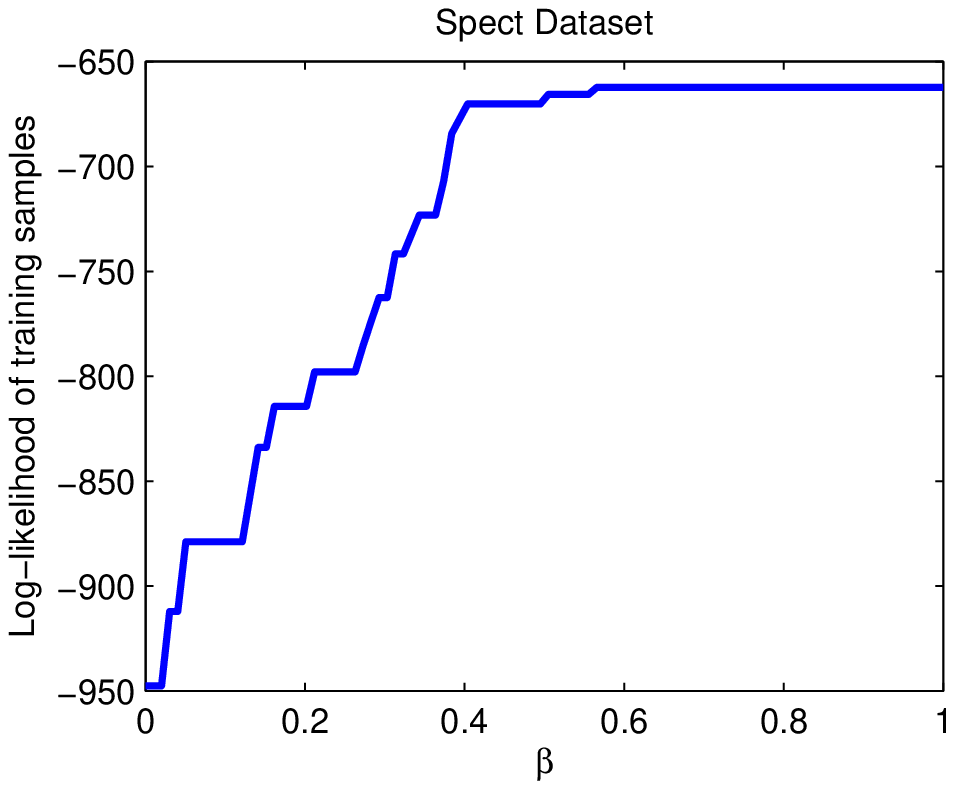} &
\includegraphics[width=.475\columnwidth]{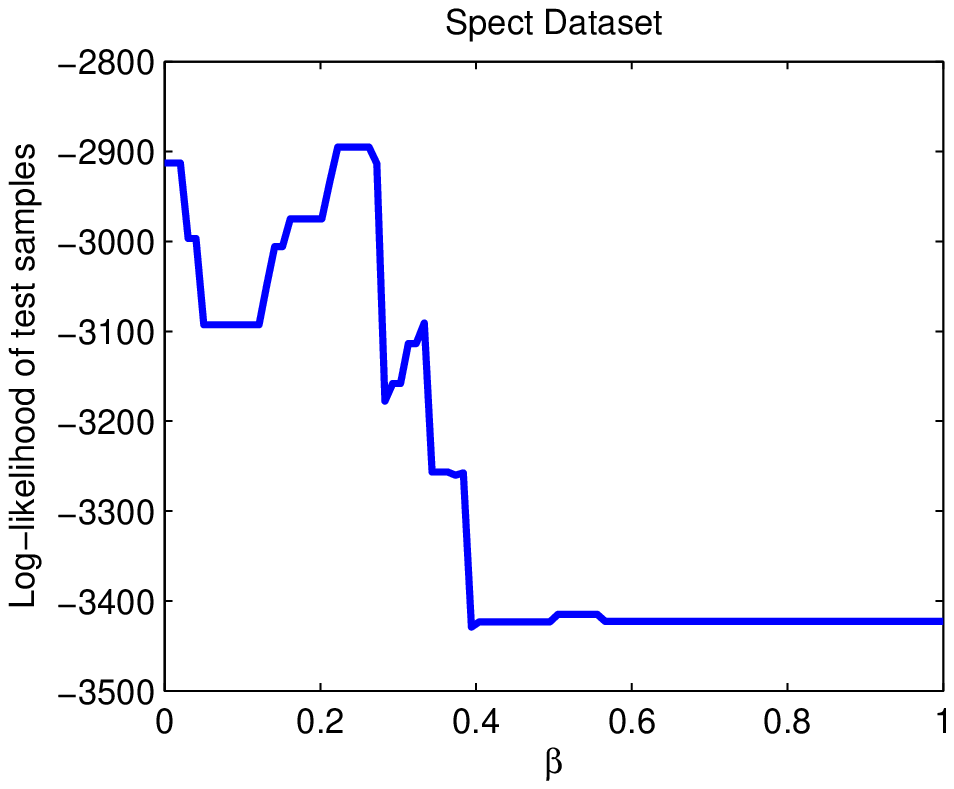}
\end{tabular}
\caption{Log-likelihood scores on the SPECT dataset}
\label{fig:spect}
\end{figure}

The second dataset we used is the Statlog Heart dataset containing physiological measurements of subjects with and without heart disease. There are  $270$ subjects and $d=13$ discrete and continuous attributes, such as gender and resting blood pressure. We quantized the continuous attributes into two bins. Those measurements that are above the mean are encoded as 1 and those below the mean as 0. Since the raw dataset is not partitioned into training and test sets, we learned  forest-structured models based on  a randomly chosen set of $n=230$ training samples and then computed the log-likelihood of these training and 40 remaining test samples.  We then chose an additional 49  randomly partitioned training and test sets and performed the same learning task and computation of log-likelihood scores. The mean of the log-likelihood scores over these 50 runs is shown in Figure~\ref{fig:heart}. We observe that the log-likelihood on the test set is maximized at $\beta\approx 0.53$ and the tree approximation ($\beta\approx 1$) also yields a high likelihood score.  The forest  learned  when $\beta = 0.53$ is shown in Figure~\ref{fig:spect_graph}(b). Observe that two nodes (ECG and Cholesterol) are disconnected from the main graph because their mutual information values with other variables are below the threshold. In contrast, HeartDisease, the label for this dataset, has the highest degree, i.e., it influences and is influenced by many other covariates. The strengths of the interactions between HeartDisease and its neighbors are also strong as evidenced by the bold edges.

From these experiments, we  observe that  some datasets can be modeled well as proper forests with very few edges while others are better modeled as distributions that are almost tree-structured (see Figure~\ref{fig:spect_graph}). Also, we need to choose $\beta$ carefully to balance between data fidelity and overfitting.   In contrast,  our asymptotic result in Theorem~\ref{thm:forest} says that $\veps_n$ should be chosen according to \eqref{eqn:veps} so that we have structural consistency. When the number of data points $n$ is large, $\beta$ in~\eqref{eqn:vepsbeta} should be chosen to be small to ensure that the learned edge set is equal to the true one (assuming the underlying model is a forest) with high probability as the overestimation error dominates.  

\begin{figure}
\centering
\begin{tabular}{cc}
\includegraphics[width=0.4\columnwidth]{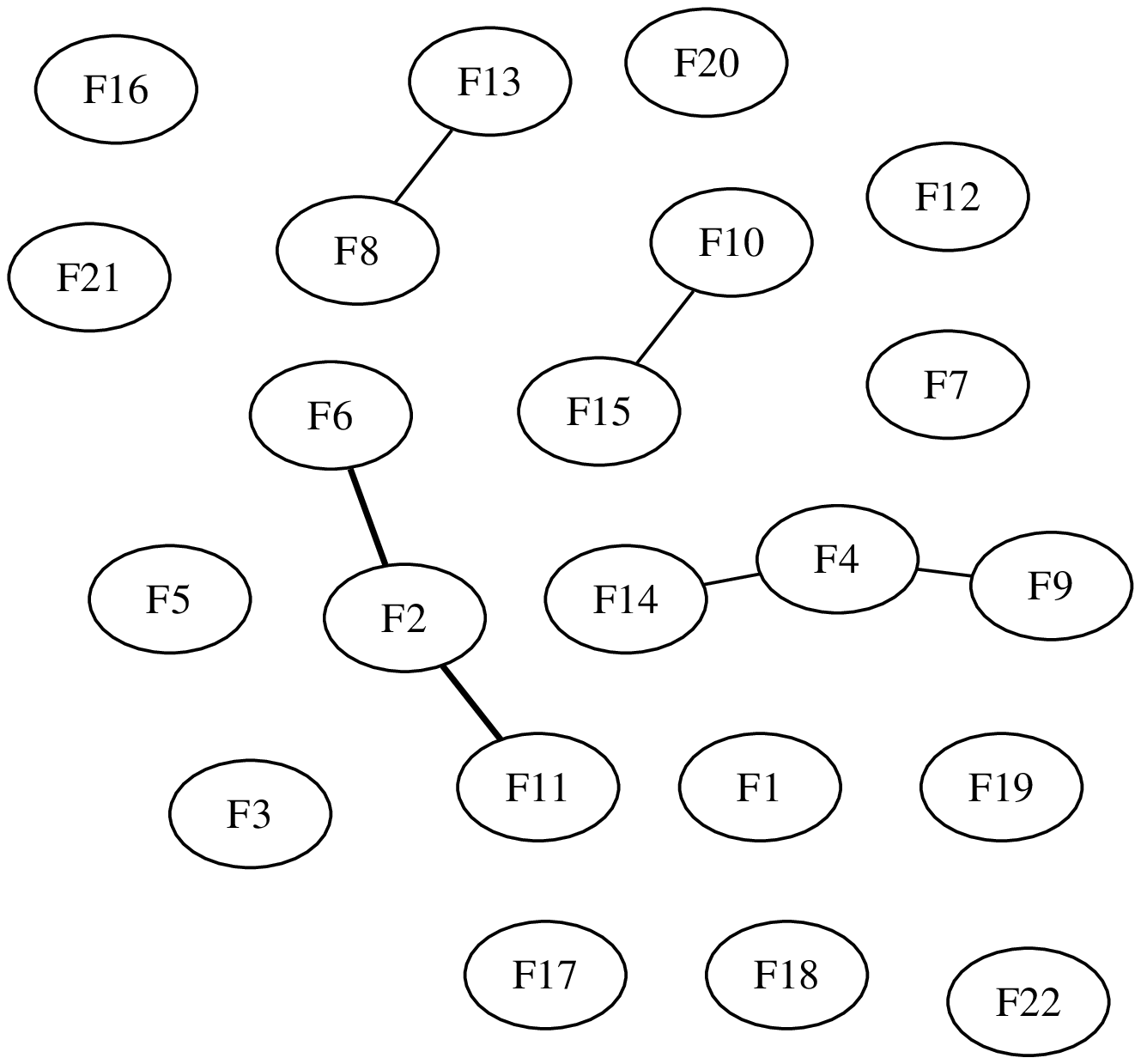} & 
\includegraphics[width=0.6\columnwidth]{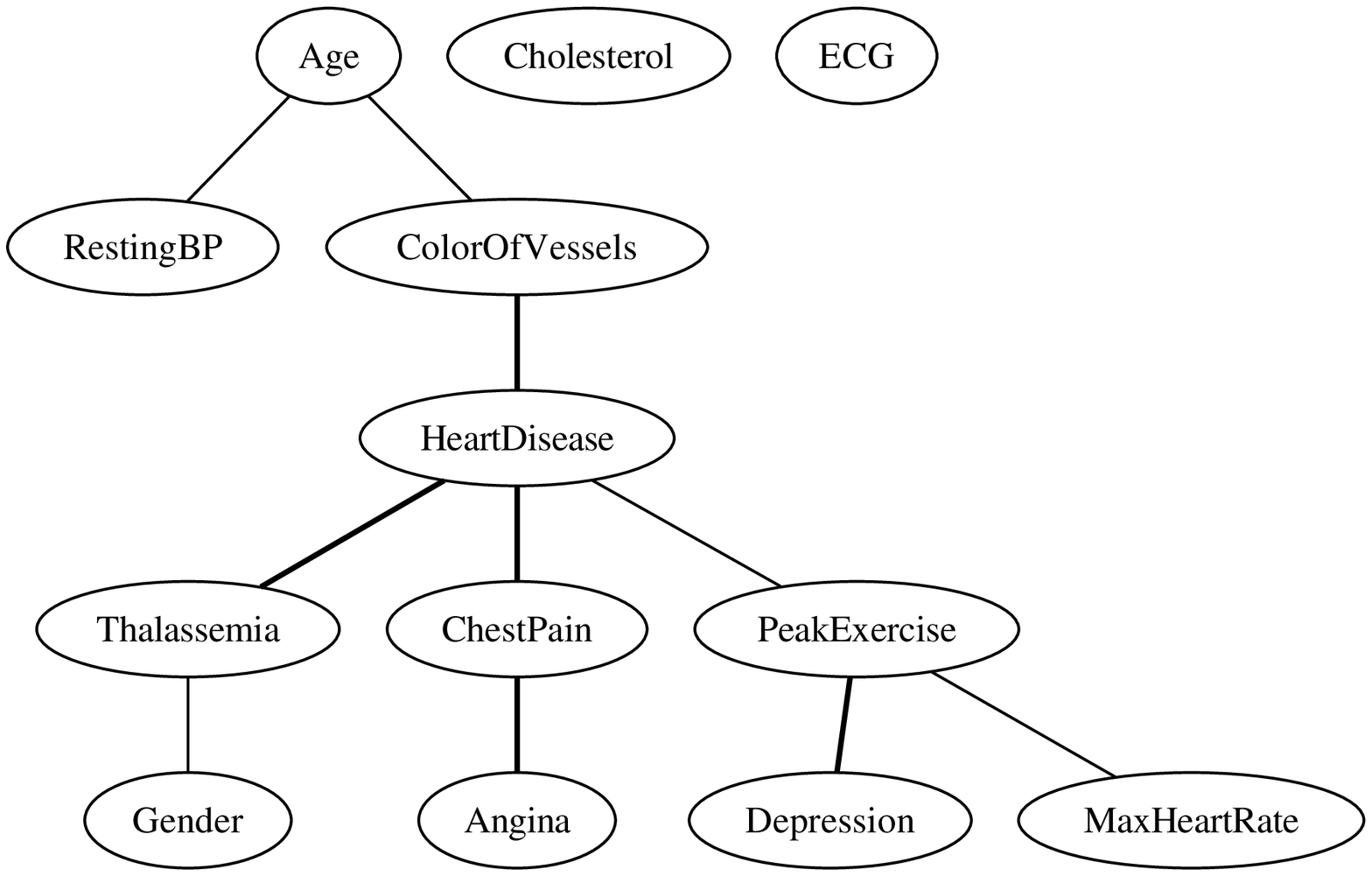} \\
(a) & (b)
\end{tabular}
\caption{Learned forest graph of the (a) SPECT  dataset for $\beta=0.25$ and (b) HEART dataset for $\beta=0.53$. Bold edges denote higher mutual information values. The features names are not provided for the SPECT  dataset.  }
\label{fig:spect_graph}
\end{figure}
\begin{figure}
\centering
\begin{tabular}{cc}
\includegraphics[width=.475\columnwidth]{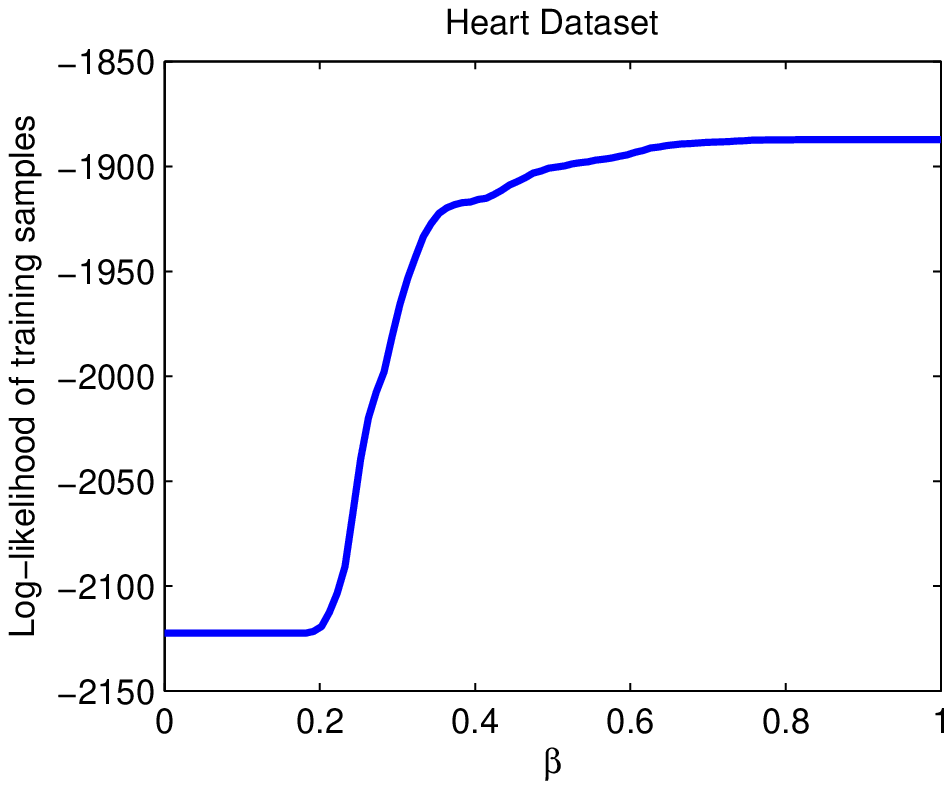} &
\includegraphics[width=.475\columnwidth]{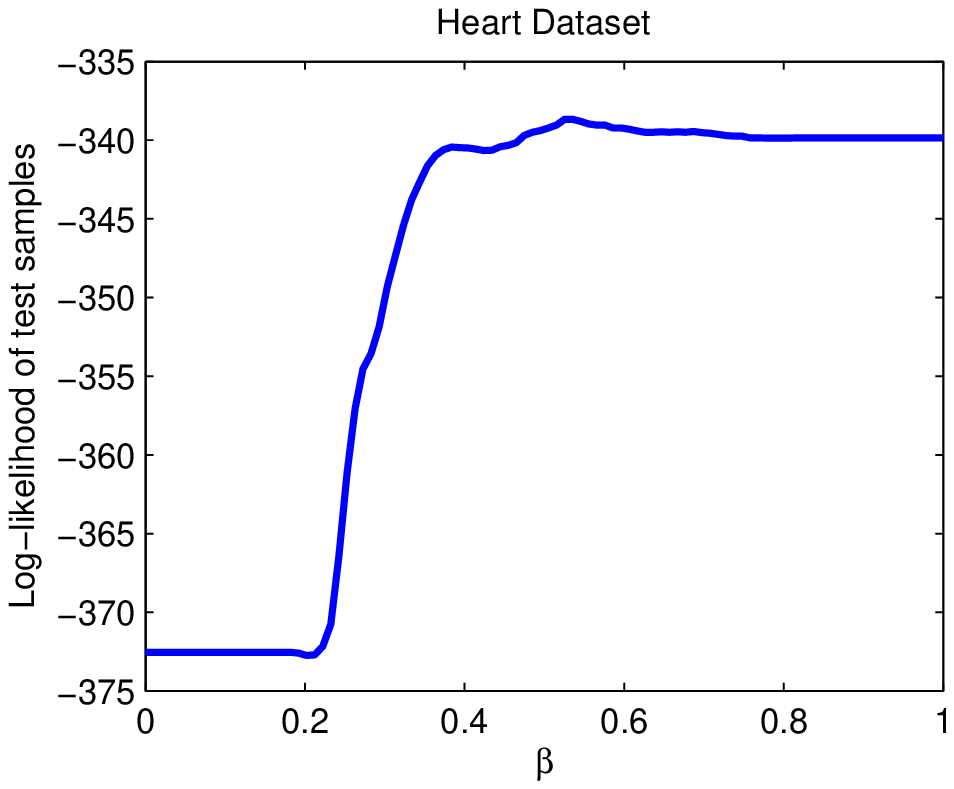}
\end{tabular}
\caption{Log-likelihood scores on the HEART dataset}
\label{fig:heart}
\end{figure}

\section{Conclusion} \label{sec:conclusion}
In this paper, we  proposed an efficient algorithm $\CLThres$ for learning the parameters and the structure of forest-structured graphical models. We showed that the asymptotic error rates associated to structure learning are nearly optimal. We also provided the rate at which the error probability of structure learning tends to zero and the order of the risk consistency. One natural question that arises from our analyses is whether $\beta$ in \eqref{eqn:vepsbeta} can be selected automatically in the finite-sample regime.   There are many other open problems that could possibly leverage on the proof techniques employed here.  For example, we are currently interested to analyze  the learning of general graphical models  using similar thresholding-like techniques on the empirical correlation coefficients. The analyses could potentially leverage on the use of the method of types. We are currently exploring this promising line of research.

\acks{This work was supported by a AFOSR funded through Grant FA9559-08-1-1080,   a MURI funded through ARO Grant W911NF-06-1-0076 and a MURI funded
through AFOSR Grant FA9550-06-1-0324. V.~Tan is also funded by  A*STAR, Singapore. The authors would like to thank   Sanjoy Mitter, Lav Varshney, Matt Johnson and James Saunderson for discussions.}



\appendix


\section{Proof of Proposition \ref{prop:Imin}} \label{prf:prop:Imin}
\begin{proof}
({\it Sketch}) The proof of this result hinges on the fact that both the overestimation and underestimation errors decay to zero exponentially fast when the threshold is chosen to be $I_{\min}/2$. This threshold is able to differentiate between true edges (with MI larger than $I_{\min}$)  from non-edges  (with MI smaller than $I_{\min}$) with high probability for $n$ sufficiently large. The error for learning the top $k$ edges of the forest also decays exponentially fast \citep{Tan&etal:09ITsub}. Thus, \eqref{eqn:expo_Imin} holds. The full details of the proof follow in a straightforward manner from  Appendix~\ref{prf:thm:forest} which we present next.
\end{proof}

\section{Proof of Theorem \ref{thm:forest}} \label{prf:thm:forest}
Define the event $\calB_n := \{\hE_k \ne E_P\},$ where $\hE_k = \{\he_1,\ldots,\he_k\}$ is the set of top $k$ edges (see Step~\ref{item:mwst} of $\CLThres$ for notation). This is the Chow-Liu error as mentioned in Section~\ref{sec:proofidea}. Let $\calB_n^c$ denote the complement of $\calB_n$. Note that in $\calB_n^c$, the estimated edge set depends on  $k$, the true model order, which is {\em a-priori} unknown to the learner.  Further define the  constant
\begin{equation}
K_P:=  \lim_{n\to\infty} -\frac{1}{n}\log P^n(\calB_n).  \label{eqn:KPdef}
\end{equation}
In other words, $K_P$ is the error exponent for learning the forest structure incorrectly assuming the true model order $k$ is known and Chow-Liu terminates after the addition of exactly $k$ edges in the MWST procedure \citep{Kruskal}.   The existence of the limit in~\eqref{eqn:KPdef} and the positivity of $K_P$ follow from the main results in~\cite{Tan&etal:09ITsub}. 

We first state a result which relies on the Gallager-–Fano bound~\cite[pp.\ 24]{Fano}. The proof will be provided at the end of this appendix. 
\begin{lemma}[Reduction to Model Order Estimation] \label{lem:reduction}
For every $\eta\in (0,K_P)$, there exists a $N\in\bN$ sufficiently large such that for every $n>N$, the error probability $P^n(\calA_n)$ satisfies
\begin{align}
(1-\eta)P^n & (\hk_n\ne k|\calB_n^c) \le P^n(\calA_n) \label{eqn:reduction_lower}\\
&\le P^n(\hk_n\ne k|\calB_n^c) + 2\exp(-n (K_P-\eta)). \label{eqn:reduction}
\end{align}
\end{lemma}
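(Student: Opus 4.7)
The plan is to prove both bounds by conditioning on the Chow-Liu event $\calB_n^c$ and exploiting a key structural observation: on $\calB_n^c$, the top-$k$ estimated edges $\hE_k$ coincide with $E_P$, so the overall structure error $\calA_n = \{\hE_{\hk_n} \ne E_P\}$ reduces exactly to the event that $\CLThres$ picks the wrong number of edges. That is, on $\calB_n^c$, we have $\calA_n = \{\hk_n \ne k\}$. This identity is what drives both directions of the bound.

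For the upper bound, I would write $P^n(\calA_n) \le P^n(\calA_n \cap \calB_n^c) + P^n(\calB_n)$. The first term equals $P^n(\{\hk_n \ne k\}\cap \calB_n^c) \le P^n(\hk_n \ne k \mid \calB_n^c)$. For the second term, the definition of $K_P$ in \eqref{eqn:KPdef} together with the existence and positivity of $K_P$ (from \cite{Tan&etal:09ITsub}) implies that for every $\eta \in (0,K_P)$ there exists $N_1$ such that $P^n(\calB_n) \le 2\exp(-n(K_P - \eta))$ for all $n > N_1$ (the factor of $2$ absorbs subexponential prefactors from the large-deviations estimate). Combining these yields \eqref{eqn:reduction}.

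For the lower bound, I would use the containment $\calA_n \supseteq \{\hk_n \ne k\}\cap \calB_n^c$, which follows directly from the same identity above. This gives
\begin{equation}
P^n(\calA_n) \;\ge\; P^n(\{\hk_n \ne k\}\cap \calB_n^c) \;=\; P^n(\hk_n \ne k \mid \calB_n^c)\, P^n(\calB_n^c). \nonumber
\end{equation}
Because $P^n(\calB_n) \to 0$ as $n\to\infty$ (again by definition of $K_P>0$), we have $P^n(\calB_n^c) \to 1$, so there exists $N_2$ with $P^n(\calB_n^c) \ge 1-\eta$ for every $n > N_2$. Taking $N := \max\{N_1, N_2\}$ delivers \eqref{eqn:reduction_lower}.

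The proof is essentially routine once the structural identity $\calA_n \cap \calB_n^c = \{\hk_n \ne k\} \cap \calB_n^c$ is observed. The only mildly delicate step is justifying the $2\exp(-n(K_P-\eta))$ bound on $P^n(\calB_n)$: the limit \eqref{eqn:KPdef} gives $P^n(\calB_n) \le \exp(-n(K_P - \eta/2))$ for $n$ large, and swapping $\eta/2$ for $\eta$ together with the constant prefactor $2$ absorbs any polynomial-in-$n$ terms that appear in the underlying method-of-types estimates from \cite{Tan&etal:09ITsub}. This is the only piece that requires care, and it is not really an obstacle but rather a bookkeeping step.
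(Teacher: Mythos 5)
Your proposal is correct and follows essentially the same route as the paper's proof: condition on the Chow-Liu event $\calB_n^c$, bound $P^n(\calB_n)$ by $\exp(-n(K_P-\eta))$ via the definition of $K_P$ in \eqref{eqn:KPdef}, and use $P^n(\calB_n^c)\to 1$ for the lower bound. The only cosmetic difference is that you split on $\calB_n$ versus $\calB_n^c$ directly via the identity $\calA_n\cap\calB_n^c=\{\hk_n\ne k\}\cap\calB_n^c$, so a single exponential term suffices and the factor $2$ is mere slack, whereas the paper's two-step decomposition (first on $\{\hk_n=k\}$, then on $\calB_n^c$) produces the two exponential terms that account for that factor.
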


\begin{proof} ({\it of Theorem~\ref{thm:forest}})
We will prove (i) the upper bound in~\eqref{eqn:limsup_statement}  (ii) the lower bound in~\eqref{eqn:liminf_statement} and (iii) the exponential rate of decay in the case of trees~\eqref{eqn:exponentially_fast}. 
\subsection*{Proof of upper bound in Theorem~\ref{thm:forest}}
We now bound the error probability $P^n(\hk_n\ne k|\calB_n^c)$ in~\eqref{eqn:reduction}. Using  the union bound,
\begin{equation}
P^n(\hk_n\ne k|\calB_n^c)\le P^n(\hk_n > k|\calB_n^c)+ P^n(\hk_n < k|\calB_n^c). \label{eqn:over_under}
\end{equation}
The first and second terms are known as the {\em overestimation} and {\em underestimation} errors respectively. We will show that the underestimation error decays exponentially fast. The overestimation error decays only subexponentially fast and so its rate of decay dominates the overall rate of decay of the error probability for structure learning.

\subsubsection*{Underestimation Error}
We now bound these terms staring with the underestimation error. By the union bound, 
\begin{align}
P^n(\hk_n < k|\calB_n^c)  &\le (k-1)\max_{1\le j \le k-1} P^n(\hk_n =j|\calB_n^c)\nn \\
&=(k-1)P^n(\hk_n =k-1|\calB_n^c), \label{eqn:under_1}
\end{align} 
where \eqref{eqn:under_1} follows because $P^n(\hk_n  = j|\calB_n^c)$ is maximized when $j = k - 1$. This is because if, to the contrary, $P^n(\hk_n  = j|\calB_n^c)$ were to be maximized at some other $j\le k-2$, then there exists at least two edges, call them  $e_1,e_2\in E_P$ such that events  $\calE_1:=\{I(\hP_{e_1})\le \veps_n\}$ and $\calE_2:=\{I(\hP_{e_2})\le \veps_n\}$ occur. The probability of this joint event is  smaller than the individual probabilities, i.e.,  $P^n(\calE_1\cap\calE_2)\le\min\{P^n(\calE_1),P^n(\calE_2)\}$. This is a contradiction. 

By the rule for choosing $\hk_n$ in~\eqref{eqn:kn},  we have the upper bound
\begin{align}
P^n(\hk_n =k-1|\calB_n^c) &= P^n(\exists\, e\in E_P \mbox{ s.t. } I(\hP_e) \le \veps_n)\le k\max_{e\in E_P} P^n(  I(\hP_e) \le \veps_n), \label{eqn:under_2}
\end{align}
where the inequality follows from the union bound. Now, note  that if $e\in E_P$, then $I(P_e) >\veps_n$ for $n$ sufficiently large (since $\veps_n\to 0$). Thus, by Sanov's theorem \cite[Ch.\ 11]{Cov06},  $P^n(I(\hP_e) \le \veps_n)$ can be  upper bounded as 
\begin{eqnarray}
P^n(I(\hP_e) \le \veps_n)\le(n +  1)^{r^2}  \exp\left(-n \min_{Q\in\calP(\calX^2)}  \left\{ D(Q\, ||\, P_e) : I(Q) \le   \veps_n \right\} \right). \label{eqn:sanov_under}
\end{eqnarray}
Define the good rate function \citep{Dembo} in~\eqref{eqn:sanov_under} to be  $L :\calP(\calX^2)\times [0,\infty) \to [0,\infty)$, which is given by
\begin{equation}
L (P_e;a ) :=  \min_{Q\in\calP(\calX^2)} \left\{ D(Q\, ||\, P_e) : I(Q)\le a \right\} . \label{eqn:LPe}
\end{equation}
Clearly, $L(P_e;a)$ is continuous in $a$. Furthermore it is monotonically decreasing in $a$ for fixed $P_e$. Thus by using the continuity of $L(P_e;\cdot)$ we can assert: To every $\eta>0$, there exists a $N\in\bN$ such that for all $n>N$ we have $L (P_e;\veps_n) >   L(P_e; 0)-\eta$. As such,  we can further upper bound the  error probability in~\eqref{eqn:sanov_under}  as 
\begin{equation}
 P^n(  I(\hP_e) \le \veps_n) \le (n+1)^{r^2} \exp\left(-n (L(P_e; 0)-\eta) \right).  \label{eqn:under_3}
\end{equation}
By using the fact that $I_{\min}>0$, the exponent $L(P_e; 0)>0$ and thus, we can put the pieces in \eqref{eqn:under_1}, \eqref{eqn:under_2} and \eqref{eqn:under_3} together to show that the underestimation error is upper bounded as
\begin{align}
P^n(\hk_n < k|\calB_n^c) \le k(k-1) (n+1)^{r^2}\exp\left(-n \min_{e\in E_P} (L(P_e; 0) -\eta)\right). \label{eqn:under_4}
\end{align} 
Hence, if $k$ is constant, the underestimation error $P^n(\hk_n < k |\calB_n^c)$ decays to zero exponentially fast as $n\to\infty$, i.e, the normalized logarithm of the underestimation error can be bounded as 
\begin{equation}
\limsup_{n\to\infty} \frac{1}{n}\log P^n(\hk_n < k|\calB_n^c) \le -\min_{e\in E_P} (L(P_e; 0) - \eta).\nn \end{equation}
The above statement is now independent of $n$. Hence, we can take the limit  as $\eta\to 0$ to conclude that:
\begin{equation}
\limsup_{n\to\infty} \frac{1}{n}\log P^n(\hk_n < k|\calB_n^c) \le  -L_P. \label{eqn:under_error}
\end{equation}
The exponent $L_P:=\min_{e\in E_P}L(P_e;0)$ is positive because we assumed that the model is minimal and so $I_{\min}>0$, which ensures the positivity of the rate function $L(P_e; 0)$ for each true edge $e\in E_P$.

\subsubsection*{Overestimation Error}
Bounding the overestimation error is harder. It follows by first applying the union bound:
\begin{align}
P^n(\hk_n > k|\calB_n^c) & \le (d-k-1)\max_{k+1\le j \le d-1} P^n(\hk_n =j|\calB_n^c) \nn \\
&=(d-k-1)P^n(\hk_n =k+1|\calB_n^c), \label{eqn:over_1}
\end{align}
where \eqref{eqn:over_1} follows because  $P^n(\hk_n =j|\calB_n^c)$ is maximized when $j=k+1$ (by the same argument as for the underestimation error). Apply the union bound again, we have 
\begin{eqnarray}
P^n(\hk_n  =  k+1|\calB_n^c) \le  (d - k - 1) \max_{e\in V\times V : I(P_e) = 0} P^n( I (\hP_e)  \ge  \veps_n) . \label{eqn:over_2}
\end{eqnarray}
From \eqref{eqn:over_2}, it suffices to  bound $P^n( I (\hP_e) \ge \veps_n)$ for any pair of independent random variables $(X_i,X_j)$ and $e=(i,j)$. We proceed by applying the upper bound in Sanov's theorem \citep[Ch.\ 11]{Cov06} to $P^n( I (\hP_e) \ge \veps_n)$ which yields 
\begin{eqnarray}
P^n( I (\hP_e) \ge \veps_n)\le (n+1)^{r^2} \exp\left(-n \min_{Q\in\calP(\calX^2) } \left\{ D(Q\, ||\, P_e) : I(Q) \ge \veps_n\right\}  \right),  \label{eqn:over_3}
\end{eqnarray}
for all $n\in\bN$. Our task now is to lower bound the good rate function in~\eqref{eqn:over_3}, which we denote as $M:\calP(\calX^2)\times [0,\infty)\to[0,\infty)$:
\begin{equation}
M(P_e; b):= \min_{Q\in\calP(\calX^2) } \left\{ D(Q\, ||\, P_e) : I(Q) \ge b\right\} . \label{eqn:upper_exponent}
\end{equation}
Note that $M(P_e; b)$ is monotonically increasing and continuous in $b$ for fixed $P_e$. Because the sequence $\{\veps_n\}_{n\in\bN}$ tends to zero, when $n$ is sufficiently large, $\veps_n$ is arbitrarily small and we are in the so-called {\em very-noisy learning regime} \citep{Bor08, Tan&etal:09ITsub}, where the optimizer to~\eqref{eqn:upper_exponent}, denoted as $Q_n^*$, is very close to $P_e$. See Figure~\ref{fig:manifold_illus}. 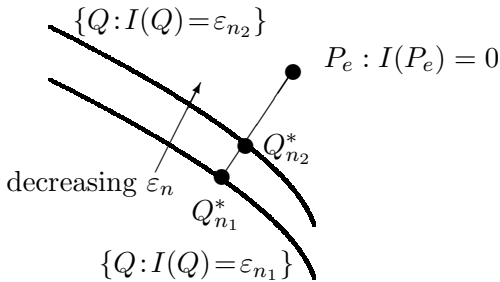
\begin{figure}
\centering
\begin{picture}(100,90)
\linethickness{1pt}
\qbezier(0,90)(90,45)(100,15)
\qbezier(0,70)(90,25)(100,-5)
\put(40,35){\vector(1,2){17}}
\put(100,75){\mbox{ $P_e: I(P_e)=0$}}
\put(15,-3){\mbox{ $\{Q\!:\! I(Q) \!=\! \veps_{n_1}\}$}}
\put(5,89){\mbox{ $\{Q\!:\! I(Q)\!=\!\veps_{n_2}\}$}}
\put(-20,28){\mbox{ decreasing $\veps_n$}}
\put(50,17){\mbox{ $Q_{n_1}^*$}}
\put(77,42){\mbox{ $Q_{n_2}^*$}}
\linethickness{.75pt}
\put(65,33){\line(2,3){28}}
\put(65,33){\circle*{6}}
\put(74,45){\circle*{6}}
\put(92,73){\circle*{6}}
\end{picture}
\caption{As $\veps_n\to 0$, the projection of $P_e$ onto the constraint set $\{Q :I(Q)\ge\veps_n\}$, denoted $Q_n^*$  (the optimizer in \eqref{eqn:upper_exponent}), approaches $P_e$. The approximations in~\eqref{eqn:dpq_approximations} and~\eqref{eqn:approximations} become increasingly accurate as $\veps_n$ tends to zero. In the figure, $n_2>n_1$ and $\veps_{n_1}>\veps_{n_2}$ and the curves are the (sub-)manifold of distributions such that the mutual information is constant, i.e., the mutual information level sets.}
\label{fig:manifold_illus}
\end{figure}
 Thus, when  $n$ is large, the KL-divergence and mutual information can be approximated  as
\begin{align}
D(Q_n^*\,  ||\, P_e) &= \frac{1}{2} \bv^T \bPi_e \bv + o(\|\bv\|^2), \label{eqn:dpq_approximations} \\*
I(Q_n^*) &= \frac{1}{2}  \bv^T \bH_e \bv + o(\|\bv\|^2),  \label{eqn:approximations}
\end{align}
where\footnote{The operator $\vect(\bC)$ vectorizes a matrix in a column oriented way. Thus, if $\bC\in\bR^{l\times l}$, $\vect(\bC)$ is a length-$l^2$ vector with the columns of $\bC$ stacked one on top of another ($\bC(:)$ in Matlab). } $\bv:=\vect(Q_n^*)-\vect(P_e)\in\bR^{r^2}$.  The  $r^2\times r^2$ matrices $\bPi_e$ and $ \bH_e$ are defined as
\begin{align}
\bPi_e &:= \diag(1/\vect(P_e)), \label{eqn:Pi_e}\\
\bH_e  &:= \nabla_{\vect(Q)}^2 I(\vect(Q)) \big|_{Q=P_e}.   \label{eqn:H_e}
\end{align}
In other words, $\bPi_e$ is the diagonal matrix that contains the reciprocal of the elements of $\vect(P_e)$ on its diagonal. $\bH_e$ is the Hessian\footnote{The first two terms in the Taylor  expansion of the mutual information $I(\vect(Q_n^*))$ in~\eqref{eqn:approximations} vanish because (i) $I(P_e)=0$ and (ii) $(\vect(Q_n^*)-\vect(P_e))^T \nabla_{\vect(Q)}  I(\vect(P_e)) =0$. Indeed, if we expand $I(\vect(Q))$ around a product distribution, the constant and linear terms vanish \citep{Bor08}. Note that $\bH_e$ in \eqref{eqn:H_e} is an indefinite matrix because $I(\vect(Q))$ is not convex. }   of   $I(\vect(Q_n^*))$, viewed as a function of $\vect(Q_n^*)$ and evaluated at $P_e$. As such, the exponent for overestimation in~\eqref{eqn:upper_exponent} can be approximated by  a {\em quadratically constrained quadratic program} (QCQP), where $\bz:=\vect(Q)-\vect(P_e)$:
\begin{align}
\tilM(P_e; \veps_n) &= \min_{\bz\in\bR^{r^2} }   \frac{1}{2} \bz^T \bPi_e \bz ,\nn\\ 
\mbox{subject to}\quad  \frac{1}{2}  \bz^T &\bH_e \bz \ge \veps_n, \quad \bz^T  \mathbf{1}= 0 . \label{eqn:primal2}
\end{align} 
Note that the constraint $\bz^T  \mathbf{1}= 0$ does not necessarily ensure that $Q$ is a probability distribution so $\tilM(P_e; \veps_n) $ is a lower bound to the true rate function $M(P_e; \veps_n)$, defined in~\eqref{eqn:upper_exponent}.  We now argue that the approximate rate function $\tilM$ in~\eqref{eqn:primal2}, can be lower bounded by a quantity that is  proportional to $\veps_n$.  
To show this, we resort to Lagrangian duality \citep[Ch.\ 5]{Ber99}.  It can easily be shown that the {\em Lagrangian dual}  corresponding to the primal in~\eqref{eqn:primal2} is
\begin{align}
g(P_e; \veps_n) 
:=  \veps_n\max_{\mu\ge 0}  \{  \mu : \bPi_e \succeq\mu  \bH_e\}. \label{eqn:sdp_opt}
\end{align}
We see from \eqref{eqn:sdp_opt} that $g(P_e; \veps_n)$ is proportional to $\veps_n$. By weak duality \citep[Proposition 5.1.3]{Ber99}, any dual feasible solution provides a lower bound to the primal, i.e., 
\begin{equation}
g(P_e; \veps_n)\le \tilM(P_e; \veps_n). \label{eqn:weak_duality}
\end{equation}
Note that strong duality (equality in \eqref{eqn:weak_duality}) does not hold in general due in part to the non-convex constraint set in \eqref{eqn:primal2}. Interestingly, our manipulations lead lower bounding $\tilM$ by~\eqref{eqn:sdp_opt}, which  is a (convex) semidefinite program \citep{Van96}.

Now observe that the approximations in~\eqref{eqn:dpq_approximations} and~\eqref{eqn:approximations} are accurate in the limit of large $n$ because the optimizing distribution $Q_n^*$ becomes increasingly close to $P_e$.  By continuity of the optimization problems in (perturbations of) the objective and the constraints, $\tilM(P_e;\veps_n)$ and $M(P_e;\veps_n)$ are close when $n$ is large, i.e.,
\begin{equation}
\lim_{n\to\infty} \left|\tilM(P_e; \veps_n)-M(P_e; \veps_n)\right|=0.\label{eqn:Mlimit}
\end{equation}
By applying the continuity statement above to \eqref{eqn:over_3}, for every $\eta >0$, there exists a $N\in\bN$ such that 
\begin{equation}
P^n( I (\hP_e) \ge \veps_n) \le (n+1)^{r^2}  \exp\left(-n      (\tilM(P_e;\veps_n)-\eta)    \right),  \nn
\end{equation}
for all $n>N$.  Define the constant
\begin{equation}
c_P:= \min_{e\in V\times V\,:\, I(P_e)=0 } \,\, \max_{\mu\ge 0} \,\, \{ \mu : \bPi_e  \succeq \mu\,  \bH_e  \}. \label{eqn:sdp}
\end{equation}
By~\eqref{eqn:sdp_opt}, \eqref{eqn:weak_duality} and the definition of $c_P$ in \eqref{eqn:sdp},
\begin{equation}
P^n( I (\hP_e) \ge \veps_n)  \le (n+1)^{r^2}  \exp\left(-n    \veps_n     (c_P-\eta)\right). \label{eqn:over_4}
\end{equation}
Putting \eqref{eqn:over_1}, \eqref{eqn:over_2} and \eqref{eqn:over_4} together, we see that the overestimation error  
\begin{equation}
P^n(\hk_n > k|\calB_n^c)\le(d-k-1)^2 (n+1)^{r^2}  \exp\left(-n \veps_n     (c_P-\eta) \right). \label{eqn:over_5}
\end{equation}
Note that the above probability tends to zero by the assumption that $n\veps_n/\log n\to \infty$ in~\eqref{eqn:veps}. Thus, we have consistency overall (since the underestimation, Chow-Liu and now the overestimation errors all tend to zero). Thus, by taking the normalized  logarithm (normalized by $n\veps_n$), the $\limsup$ in $n$   (keeping in mind that $d$ and $k$ are constant), we conclude that
\begin{equation}
\limsup_{n\to\infty} \frac{1}{n\veps_n}\log P^n(\hk_n > k|\calB_n^c)\le -(c_P-\eta). \label{eqn:over_error}
\end{equation}
If we now allow $\eta$ in \eqref{eqn:over_error} to tend to $0$, we see that it remains to prove that $c_P=1$ for all $P$. For this purpose, it suffices  to show that the optimal solution to the optimization problem in~\eqref{eqn:sdp_opt}, denoted  $\mu^*$, is equal to one for all $\bPi_e$ and $\bH_e$. 
Note that $\mu^*$ can be expressed in terms of eigenvalues:
\begin{equation}
\mu^* = \left(\max \left\{ \eig (\bPi_e^{-1/2} \bH_e \bPi_e^{-1/2} )  \right\} \right)^{-1}, \label{eqn:mustar}
\end{equation}
where $\eig(\bA)$ denotes the set of real eigenvalues of the symmetric matrix $\bA$. By using the definitions of $\bPi_e$ and $\bH_e$ in~\eqref{eqn:Pi_e} and~\eqref{eqn:H_e} respectively, we can verify that the matrix $\mathbf{I}-\bPi_e^{-1/2} \bH_e \bPi_e^{-1/2}$ is positive semidefinite with an eigenvalue at zero.  This proves that the largest eigenvalue of $\bPi_e^{-1/2} \bH_e \bPi_e^{-1/2}$ is one and hence from \eqref{eqn:mustar}, $\mu^*=1$. The proof of the upper bound in~\eqref{eqn:limsup_statement} is completed by combining the estimates in~\eqref{eqn:reduction}, \eqref{eqn:under_error} and \eqref{eqn:over_error}. 
\subsection*{Proof of lower bound in Theorem~\ref{thm:forest}}
The key idea is to bound the overestimation error using a modification of the lower bound in Sanov's theorem.  Denote the set of types supported on a finite set $\calY$ with denominator $n$ as $\calP_n(\calY)$ and the {\em type class} of a distribution $Q\in \calP_n(\calY)$ as $$\bbT(Q):=\{y^n\in\calY^n: \hP(\,\cdot\,; y^n) =Q(\,\cdot\,)\},$$ where $\hP(\,\cdot\,; y^n)$ is the empirical distribution of the sequence $y^n=(y_1,\ldots, y_n)$. The following bounds on the type class are well known \citep[Ch.\ 11]{Cov06}.
\begin{lemma}[Probability of Type Class] \label{lem:typeclass}
For any $Q\in\calP_n(\calY)$ and any distribution $P$, the probability of the type class $\bbT(Q)$ under $P^n$ satisfies:
\begin{align}
(n+1)^{-|\calY|} &\exp(-nD(Q\, ||\, P))\le P^n(\bbT(Q)) \le \exp(-nD(Q\, ||\, P)). \label{eqn:lower_bd_type} 
\end{align}
\end{lemma}
To prove the lower bound in \eqref{eqn:liminf_statement}, assume that $k<d-1$ and note that the error probability $P^n(\hk_n\ne k|\calB_n^c)$ can be lower bounded by $P^n(I(\hP_e) \ge \veps_n)$ for any node pair $e$ such that  $I(P_e)=0$. We seek to lower bound the latter probability by appealing to \eqref{eqn:lower_bd_type}. Now choose a sequence of distributions $Q^{(n)}\in \{Q \in\calP_n(\calX^2): I(Q)\ge\veps_n\}$ such that
\begin{equation}
\lim_{n\to\infty} \left| M(P_e;\veps_n)-D(Q^{(n)} \, || \, P_e) \right| = 0. \nn
\end{equation}
This is possible because the set of types is dense in the probability simplex \citep[Lemma 2.1.2(b)]{Dembo}.  Thus, 
\begin{align}
P^n(I(\hP_e) \ge \veps_n) &=\sum_{Q \in\calP_n(\calX^2):  I(Q) \ge \veps_n} P^n(\bbT(Q)) \nn  \\
&\ge P^n(\bbT(Q^{(n)}) )  \nn  \\
&\ge (n+1)^{-r^2} \exp(-nD(Q^{(n)}\,||\, P_e)), \label{eqn:method_types}
\end{align}
where~\eqref{eqn:method_types} follows from the lower bound in~\eqref{eqn:lower_bd_type}.  By applying~\eqref{eqn:Mlimit},  and using the fact that if $|a_n-b_n|\to 0$ and $|b_n-c_n|\to 0$ then, $|a_n-c_n|\to 0$ (triangle inequality), we also have  
\begin{equation}
\lim_{n\to\infty} \left| \tilM(P_e;\veps_n) - D(Q^{(n)} \, || \, P_e)\right| = 0. \nn
\end{equation}
Hence,  continuing the chain in \eqref{eqn:method_types},  for any $\eta>0$, there exists a $N\in\bN$ such that for all $n>N$,
 \begin{eqnarray}
P^n(I(\hP_e)  \ge  \veps_n)    \ge  (n+1)^{-r^2} \exp(-n (\tilM(P_e;\veps_n)+\eta)). \label{eqn:lower_2}
\end{eqnarray}
Note that an upper bound for $\tilM(P_e; \veps_n)$ in~\eqref{eqn:primal2} is simply given by the objective evaluated at any feasible point. In fact, by manipulating~\eqref{eqn:primal2}, we see that the upper bound is also proportional to $\veps_n$, i.e., 
\begin{equation}
\tilM(P_e; \veps_n)\le C_{P_e}\veps_n, \nn
\end{equation}
where $C_{P_e}\in (0,\infty)$ is some constant\footnote{We can easily remove the constraint $\bz^T \mathbf{1}$ in \eqref{eqn:primal2} by a simple change of variables to only consider those vectors in the subspace orthogonal to the all ones vector so we ignore it here for simplicity. To obtain $C_{P_e}$, suppose the matrix $\bW_e$ diagonalizes $\bH_e$, i.e., $\bH_e  = \bW_e^T\bD_e\bW_e$, then one can, for example, choose $C_{P_e}  =   \min_{i:[\bD_e]_{i,i}>0 } [\bW_e^T  \bPi_e \bW_e]_{i,i}. $}  that depends on the matrices  $\bPi_e$ and $\bH_e$. Define $C_P := \max_{e\in V\times V:I(P_e)=0} C_{P_e}$. Continuing the lower bound in~\eqref{eqn:lower_2}, we obtain
 \begin{align}
P^n(I(\hP_e)  \ge  \veps_n)   
\ge  (n+1)^{-r^2} \exp(-n\veps_n (C_{P}+\eta) ), \nn
\end{align}
for $n$  sufficiently large. Now take the normalized logarithm and the $\liminf$ to conclude that
\begin{equation}
\liminf_{n\to\infty} \frac{1}{n\veps_n}\log P^n(\hk_n\ne k|\calB_n^c)   \ge  -  (C_P+\eta).  \label{eqn:structure_lb}
\end{equation}
Substitute~\eqref{eqn:structure_lb} into the lower bound in~\eqref{eqn:reduction_lower}. Now the resulting inequality is independent of $n$ and we can take $\eta\to 0$ to complete the  proof of the lower bound in Theorem~\ref{thm:forest}.  
\subsection*{Proof of the exponential rate of decay for trees in Theorem~\ref{thm:forest}}
For the claim in \eqref{eqn:exponentially_fast}, note that for $n$ sufficiently large, 
\begin{equation}
P^n(\calA_n) \ge \max\{ (1-\eta) P^n(\hk_n\ne k_n|\calB_n^c) , P^n(\calB_n)\}, \label{eqn:error_exp_tree}
\end{equation}
from Lemma~\ref{lem:reduction} and the fact that $\calB_n\subseteq \calA_n$. Eq.~\eqref{eqn:error_exp_tree} gives us a lower bound on the error probability in terms of the Chow-Liu error $P^n(\calB_n)$ and the underestimation and overestimation errors $ P^n(\hk_n\ne k_n|\calB_n^c)$. If $k=d-1$, the overestimation error probability is identically zero, so we only have to be concerned with the underestimation error. Furthermore, from~\eqref{eqn:under_error} and a corresponding lower bound which we omit, the underestimation error event satisfies $P^n(\hk_n < k|\calB_n^c) \doteq \exp(-n L_P)$. Combining this fact with the definition of the error exponent $K_P$ in~\eqref{eqn:KPdef} and the result in~\eqref{eqn:error_exp_tree} establishes~\eqref{eqn:exponentially_fast}. Note that the relation in~\eqref{eqn:error_exp_tree} and our preceding upper bounds ensure that the limit in~\eqref{eqn:exponentially_fast} exists. \end{proof}

\begin{proof} ({\em of Lemma~\ref{lem:reduction}})
We note that  $P^n(\calA_n| \hk_n\ne k)=1$ and thus, 
\begin{align}
P^n(\calA_n)   \leq P^n (\hk_n\ne k) + P^n(\calA_n| \hk_n =  k).\label{eqn:upperbound1}
\end{align}
By using the definition of $K_P$ in \eqref{eqn:KPdef}, the second term in \eqref{eqn:upperbound1} is precisely $P^n(\calB_n)$ therefore,
\begin{equation}
P^n(\calA_n) \le P^n (\hk_n\ne k) +\exp(-n (K_P-\eta)), \label{eqn:upperbound2}
\end{equation}
for all $n>N_1$. We further bound $P^n (\hk_n\ne k)$ by conditioning on the event $\calB_n^c$. Thus, for $\eta>0$,
\begin{align}
P^n (\hk_n\ne k)  &\le  P^n (\hk_n\ne k|\calB_n^c) + P^n( \calB_n)\nn\\
&\le P^n (\hk_n\ne k|\calB_n^c)+ \exp(-n (K_P-\eta)),  \label{eqn:upperbound3}
\end{align}
for all $n>N_2$. The upper bound result follows by combining~\eqref{eqn:upperbound2} and~\eqref{eqn:upperbound3}. The lower bound follows by the chain 
\begin{align}
P^n(\calA_n) &\ge P^n (\hk_n\ne k)\ge P^n (\{\hk_n\ne k\}\cap\calB_n^c)  \nn\\
&=P^n (\hk_n\ne k|\calB_n^c)P^n(\calB_n^c)  \ge (1-\eta)P^n (\hk_n\ne k|\calB_n^c), \nn
\end{align}
which holds for all $n>N_3$ since $P^n(\calB_n^c)\to 1$. Now the claims in \eqref{eqn:reduction_lower} and \eqref{eqn:reduction}  follow  by taking $N  :=\max\{N_1,N_2,N_3\}$.  
\end{proof}

\section{Proof of Corollary~\ref{cor:forest_proj}} \label{prf:cor:forest_proj}
\begin{proof} 
This claim follows from the fact that three errors (i) Chow-Liu error (ii) underestimation error and (iii) overestimation error behave in exactly the same way as in Theorem~\ref{thm:forest}. In particular, the Chow-Liu error, i.e., the error for the learning the top $k$ edges in the forest projection model $\tilP$ decays with error exponent $K_P$. The underestimation error behaves as in \eqref{eqn:under_error} and the overestimation error as in \eqref{eqn:over_error}.
\end{proof}

\section{Proof of Theorem~\ref{cor:scaling}} \label{prf:cor:scaling}
\begin{proof}
Given assumptions (A1) and (A2), we claim that the underestimation exponent $L_{P^{(d)}}$, defined in~\eqref{eqn:under_error}, is  uniformly bounded away from zero, i.e., 
\begin{equation}
L:=\inf_{d\in\bN}L_{P^{(d)}}=\inf_{d\in\bN}\min_{e \in E_{P^{(d)}}} L( P^{(d)}_{e};0) \label{eqn:Lconst}
\end{equation}
is positive. Before providing a formal proof, we provide a plausible argument to show that this claim is true.  Recall the definition of $L(P_{e};0)$  in~\eqref{eqn:LPe}. Assuming that the joint $P_{e}=P_{i,j}$ is close to a product distribution or equivalently if its mutual information $I(P_e)$ is small (which is the worst-case scenario), 
\begin{align}
L(P_{e};0) &\approx \min_{Q\in\calP(\calX^2)}\{D(P_{e}\,||\,Q):I(Q)=0\} \label{eqn:swap_args} \\
 &=D(P_{e} \, ||\, P_i \, P_j)   =I(P_{e})\ge  I_{\inf}>0, \label{eqn:def_Imin}
\end{align}
where in \eqref{eqn:swap_args}, the arguments in the KL-divergence have been swapped. This is because when $Q\approx P_{e}$ entry-wise, $D(Q\, ||\, P_{e})\approx D(P_{e}\, ||\, Q)$ in the sense that their difference is small compared to their absolute values \citep{Bor08}.   In \eqref{eqn:def_Imin}, we used the fact that the reverse I-projection of $P_e$ onto the set of  product distributions is $P_i P_j$. Since  $I_{\inf}$ is constant, this proves the claim, i.e., $L>0$.  

More formally, let $$B_{\kappa'}:=\{Q_{i,j}\in\calP(\calX^2): Q_{i,j}(x_i,x_j)\ge \kappa',\forall\, x_i,x_j\in\calX\}$$ be the set of joint distributions whose entries are  bounded away from zero by $\kappa'>0$. Now, consider a pair of joint distributions $P_{e}^{(d)},\tilP_{e}^{(d)}\in B_{\kappa'}$ whose minimum values are uniformly bounded away from zero as assumed  in (A2). Then there exists a Lipschitz  constant (independent of $d$) $U\in (0,\infty)$ such that for all $d$,
\begin{equation}
|I(P_{e}^{(d)})- I(\tilP_{e}^{(d)})|\le  U \|\vect(P_{e}^{(d)})-\vect(\tilP_{e}^{(d)})\|_1,  \label{eqn:equi_lip}
\end{equation}
where $\|\cdot\|_1$ is the vector $\ell_1$ norm. In fact, $U:=\max_{Q\in B_{\kappa'}} \|\nabla I(\vect(Q))\|_{\infty}$ is the Lipschitz constant of $I(\cdot)$ which is uniformly bounded because the joint distributions $P_{e}^{(d)}$ and $\tilP_{e}^{(d)}$  are assumed to be uniformly bounded away from zero. Suppose, to the contrary, $L=0$. Then by the definition of the infimum in~\eqref{eqn:Lconst}, for every $\epsilon>0$, there exists a $d\in\bN$ and a corresponding $e\in E_{P^{(d)}}$  such that if   $Q^*$ is the optimizer in~\eqref{eqn:LPe},
\begin{equation}
\epsilon> D(Q^* \, || \, P_{e}^{(d)} ) \stackrel{(a)}{\ge} \frac{\|\vect(P_{e}^{(d)})-\vect(Q^*)\|_1^2}{2\log 2}    
\stackrel{(b)}{\ge} \frac{|I(P_{e}^{(d)})-I(Q^*)|^2 }{(2\log 2) U^2}    
\stackrel{(c)}{\ge} \frac{I_{\inf}^2}{(2\log 2) U^2} ,  \nn 
\end{equation}
where (a) follows from  Pinsker's inequality \cite[Lemma 11.6.1]{Cov06}, (b) is an application of~\eqref{eqn:equi_lip} and the fact that if $P_{e}^{(d)}\in B_{\kappa}$ is uniformly bounded from zero (as assumed in \eqref{eqn:minentry}) so is the associated optimizer $Q^*$ (i.e., in $B_{\kappa''}$ for some possibly different uniform $\kappa''>0$). Statement (c) follows from the definition of $I_{\inf}$ and the fact that $Q^*$ is a product distribution, i.e., $I(Q^*)=0$. Since $\epsilon$ can be chosen to be arbitrarily small, we arrive at a contradiction. Thus $L$ in \eqref{eqn:Lconst} is positive. Finally, we observe from~\eqref{eqn:under_4} that if $n>(3/L)\log k$ the underestimation error tends to zero because \eqref{eqn:under_4}  can be further upper bounded as 
\begin{align}
P^n(\hk_n < k|\calB_n^c) &\le(n+1)^{r^2}\exp(2\log k-nL)  
< (n+1)^{r^2}\exp\left(\frac{2}{3}nL-nL\right)\to 0 \nn
\end{align}
as $n\to \infty$. Take $C_2=3/L$ in \eqref{eqn:scaling}.

Similarly, given the same assumptions, the error exponent for structure learning  $K_{P^{(d)}}$, defined in~\eqref{eqn:KPdef}, is also uniformly bounded away from zero, i.e.,
\begin{equation}
K:=\inf_{d\in \bN} K_{P^{(d)}}>0. \nn
\end{equation}
Thus,   if $n>(4/K)\log d$,  the error probability associated to estimating  the top $k$ edges (event $\calB_n$)  decays to zero along similar lines as in the case of the underestimation error. Take $C_1=4/K$  in~\eqref{eqn:scaling}.

Finally, from~\eqref{eqn:over_5}, if $n\veps_n>2\log(d-k)$, then the overestimation error tends to zero. Since from \eqref{eqn:veps}, $\veps_n$ can take the form $n^{-\beta}$ for $\beta>0$, this is equivalent to $n^{1-\beta}>2\log(d-k)$, which is the same as the first condition  in~\eqref{eqn:scaling}, namely $n>(2\log (d-k))^{1+\zeta}$. By~\eqref{eqn:reduction} and~\eqref{eqn:over_under}, these three probabilities constitute the overall error probability when learning the sequence of forest structures $\{E_{P^{(d)}}\}_{d\in\bN}$. Thus the conditions in~\eqref{eqn:scaling} suffice for high-dimensional consistency. 
\end{proof}

\section{Proof of Corollary~\ref{cor:extremal}} \label{prf:cor:extremal}
\begin{proof}
First note that $k_n\in\{ 0,\ldots, d_n-1\}$. From~\eqref{eqn:over_5},  we see that for $n$ sufficiently large, the sequence $h_n(P):=(n\veps_n)^{-1} \log P^n(\calA_n)$ is upper bounded by 
\begin{align}
- 1 +\frac{2}{n\veps_n}\log (d_n-k_n-1)    + \frac{r^2\log(n+1)}{n\veps_n}. \label{eqn:normalized_log}
\end{align}
The last term in \eqref{eqn:normalized_log}  tends to zero by~\eqref{eqn:veps}. Thus $h_n(P)=O((n\veps_n)^{-1}\log (d_n-k_n-1))$, where the implied constant is 2 by \eqref{eqn:normalized_log}. Clearly, this sequence is    maximized (resp.\ minimized) when $k_n=0$ (resp.\ $k_n=d_n-1$).   Eq.~\eqref{eqn:normalized_log} also shows that the sequence $h_n$ is monotonically decreasing in $k_n$. 
\end{proof}

\section{Proof of Theorem~\ref{thm:sample}} \label{prf:thm:sample}
\begin{proof}
We first focus on part (a). Part (b) follows in a relatively straightforward manner. Define
\begin{equation}
\hT_{\MAP}(\bx^n):=\argmax_{t\in \calT_k^d}\,\, \bP(T_P=t|\bx^n) \nn
\end{equation}
to be the maximum a-posteriori (MAP) decoding rule.\footnote{In fact, this proof works  for {\em any} decoding rule, and not just the MAP rule. We focus on the MAP rule for concreteness. } By the optimality of the MAP rule, this lower bounds the error probability of 
any other estimator. Let $\calW:=\hT_{\MAP}((\calX^d)^n)$ be the range of the function $\hT_{\MAP}$, i.e., a forest $t\in\calW$ if and only if there exists a sequence $\bx^n$ such that $\hT_{\MAP}=t$. Note that $\calW\cup \calW^c = \calT_k^d$. Then, consider the  lower bounds:
\begin{align}
\bP(\hT \ne T_P ) &= \sum_{t\in \calT_k^d} \bP(\hT \ne T_P|T_P=t) \bP(T_P=t) \nn \\*
&\ge \sum_{t\in \calW^c} \bP(\hT \ne T_P|T_P=t) \bP(T_P=t) \nn \\*
&= \sum_{t\in \calW^c} \bP(T_P=t)\label{eqn:prob_one}= 1- \sum_{t\in \calW} \bP(T_P=t)\\
&= 1- \sum_{t\in \calW} |\calT_k^d|^{-1} \label{eqn:uniformity} \\
&\ge  1 - r^{nd} |\calT_k^d|^{-1}, \label{eqn:cardinality_bound}
\end{align}
where in~\eqref{eqn:prob_one}, we used the fact that $\bP(\hT \ne T_P|T_P=t) =1$ if $t\in \calW^c$, in \eqref{eqn:uniformity}, the fact that $\bP(T_P=t)=1/|\calT_k^d|$. In \eqref{eqn:cardinality_bound}, we used the observation  $|\calW|\le (|\calX^d|)^n=r^{nd}$ since the function $\hT_{\MAP}:(\calX^d)^n \to \calW$ is surjective. Now, the number of labeled forests with $k$ edges and  $d$ nodes is \citep[pp. 204]{Aigner09}
$
|\calT_k^d|\ge (d-k)d^{k-1}\ge d^{k-1}.
$
Applying this lower bound to~\eqref{eqn:cardinality_bound}, we obtain
\begin{align}
\bP(\hT \ne T_P ) \ge  1-\exp\left(nd \log r- (k-1) \log d\right)> 1-\exp\left((\varrho-1) (k-1)\log d  \right), \label{eqn:choice_n}
\end{align}
where the second inequality follows by  choice of $n$ in \eqref{eqn:sample_lower_bd}.  
The estimate in~\eqref{eqn:choice_n} converges to 1  as $(k,d)\to\infty$ since $\varrho<1$. The same reasoning applies to part (b) but we instead  use the following estimates of the cardinality of the set of forests \cite[Ch.\ 30]{Aigner09}:
\begin{equation}
(d-2)\log d\le\log|\calF^d| \le (d-1)\log (d+1). \label{eqn:num_forests}
\end{equation}
Note that we have lower bounded $|\calF^d|$ by the number trees with $d$ nodes which is $d^{d-2}$ by Cayley's formula \citep[Ch.\ 30]{Aigner09}. The upper bound\footnote{The purpose of the upper bound is to show that our estimates of $|\calF^d|$ in~\eqref{eqn:num_forests} are reasonably tight.} follows by a simple combinatorial argument which is omitted. Using the lower bound in~\eqref{eqn:num_forests}, we have
\begin{align}
\bP(\hT\ne T_P) \ge 1-\exp(nd\log r) \exp( -(d-2)\log d)  > 1-d^2\exp ((\varrho-1)d\log d), \label{eqn:estimate_all_forests}
\end{align}
with the choice of $n$ in \eqref{eqn:sample_lower_bd_all_forests}. The estimate in  \eqref{eqn:estimate_all_forests} converges to 1, completing the proof. 
\end{proof}

\section{Proof of Theorem \ref{thm:consisten_param}} \label{prf:thm:consisten_param}
\begin{proof}
We assume that $P$ is Markov on a forest since the extension to  non-forest-structured $P$ is a straightforward generalization. We start with some useful definitions.  Recall from Appendix~\ref{prf:thm:forest} that $\calB_n:=\{\hE_k\ne E_P\}$ is the event that the top $k$ edges (in terms of mutual information) in the edge set $\hE_{d-1}$ are not equal to the edges in $E_P$. Also define $\tilcalC_{n,\delta}:= \{D(P^* \, ||\, P)>\delta d\}$ to be the event that the divergence between the learned model and the true (forest) one is greater than $\delta d$. We will see that $\tilcalC_{n,\delta}$ is closely related to the event of interest $\calC_{n,\delta}$ defined in \eqref{eqn:Cn}. Let $\calU_n:=\{\hk_n<k\}$ be the underestimation event. Our proof relies on the following result, which is similar to Lemma~\ref{lem:reduction}, hence its proof is omitted.

\begin{lemma} \label{lem:reduction_2}
For every $\eta>0$, there exists a $N \in\bN$ such that for all $n>N$, the following bounds on $ P^n(\tilcalC_{n,\delta})$ hold:
\begin{align}
&(1-\eta)P^n(\tilcalC_{n,\delta} |\calB_n^c,\calU_n^c)\le P^n(\tilcalC_{n,\delta})  \label{eqn:lowerboundDPstar}\\*
&\le P^n(\tilcalC_{n,\delta}|\calB_n^c,\calU_n^c)+\exp(-n(\min\{K_P,L_P\}-\eta)). \label{eqn:boundDPstar}
\end{align}
\end{lemma}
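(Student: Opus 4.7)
The plan is to mimic the proof of Lemma~\ref{lem:reduction} essentially verbatim, with $\calA_n$ replaced by $\tilcalC_{n,\delta}$ and the single ``good'' event $\calB_n^c$ replaced by the intersection $\calB_n^c \cap \calU_n^c$. The extra ingredient needed beyond Lemma~\ref{lem:reduction} is that, on top of the Chow-Liu event $\calB_n$ having exponential decay rate at least $K_P-\eta/4$ (by the definition of $K_P$ in \eqref{eqn:KPdef}), the underestimation event $\calU_n=\{\hk_n<k\}$ also has exponential decay, and in fact $P^n(\calU_n)\le\exp(-n(L_P-\eta/4))$ for all $n$ sufficiently large; this is exactly the content of the bound \eqref{eqn:under_error} established inside the proof of Theorem~\ref{thm:forest}.

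For the upper bound \eqref{eqn:boundDPstar}, I would split
\begin{equation*}
P^n(\tilcalC_{n,\delta}) \le P^n\bigl(\tilcalC_{n,\delta}\cap\calB_n^c\cap\calU_n^c\bigr) + P^n(\calB_n) + P^n(\calU_n),
\end{equation*}
upper bound the first summand by $P^n(\tilcalC_{n,\delta}\mid\calB_n^c,\calU_n^c)$, and apply the exponential-decay bounds on $P^n(\calB_n)$ and $P^n(\calU_n)$. Since the sum of two exponentials with rates $K_P-\eta/4$ and $L_P-\eta/4$ is at most $2\exp(-n(\min\{K_P,L_P\}-\eta/4))$, and since the factor of $2$ can be absorbed into the exponent by choosing $N$ large enough so that $(\log 2)/n<\eta/2$, I obtain the single exponential $\exp(-n(\min\{K_P,L_P\}-\eta))$ claimed in \eqref{eqn:boundDPstar}.

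For the lower bound \eqref{eqn:lowerboundDPstar}, I would simply write
\begin{equation*}
P^n(\tilcalC_{n,\delta}) \ge P^n\bigl(\tilcalC_{n,\delta}\cap\calB_n^c\cap\calU_n^c\bigr) = P^n(\tilcalC_{n,\delta}\mid\calB_n^c,\calU_n^c)\,P^n(\calB_n^c\cap\calU_n^c),
\end{equation*}
and then observe that $P^n(\calB_n^c\cap\calU_n^c)\ge 1-P^n(\calB_n)-P^n(\calU_n)\ge 1-\eta$ whenever $n$ is large enough (again using the exponential decay of the two ``bad'' events from the proof of Theorem~\ref{thm:forest}). Taking $N$ to be the maximum of the finitely many threshold indices encountered above yields the lemma.

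The step that deserves the most care, though it is not technically difficult, is confirming that the underestimation event $\calU_n$ genuinely decays at rate $L_P$; this is already contained in the derivation leading to \eqref{eqn:under_4}--\eqref{eqn:under_error} and only needs to be invoked. Everything else is an application of the union bound together with the same swallowing-of-constants into $\eta$ used in Lemma~\ref{lem:reduction}, so no new technology is required.
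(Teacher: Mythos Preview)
Your proposal is correct and matches the paper's intended approach: the paper explicitly omits the proof of Lemma~\ref{lem:reduction_2}, stating only that it ``is similar to Lemma~\ref{lem:reduction},'' and your argument is precisely that adaptation, with the single good event $\calB_n^c$ replaced by $\calB_n^c\cap\calU_n^c$ and the union bound picking up the additional underestimation term. One very minor point: the bound \eqref{eqn:under_error} in the paper is stated for the \emph{conditional} probability $P^n(\calU_n\mid\calB_n^c)$ rather than for $P^n(\calU_n)$, so strictly speaking you get $P^n(\calU_n)\le P^n(\calU_n\mid\calB_n^c)+P^n(\calB_n)$, which decays at rate $\min\{K_P,L_P\}$ rather than $L_P$ alone --- but since $\min\{K_P,L_P\}$ is exactly the rate appearing in \eqref{eqn:boundDPstar}, this changes nothing in the final statement.
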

Note that the exponential term in \eqref{eqn:boundDPstar} comes from an application of the union bound and the ``largest-exponent-wins'' principle in large-deviations theory.  From \eqref{eqn:lowerboundDPstar} and \eqref{eqn:boundDPstar} we see that it is possible to bound the probability of $\tilcalC_{n,\delta}$ by providing upper  and lower  bounds for  $P^n(\tilcalC_{n,\delta}|\calB_n^c,\calU_n^c)$. In particular, we  show that the upper bound equals $\exp(-n\delta)$ to first order in the exponent. This will lead directly to \eqref{eqn:param_upper}. To proceed, we rely on the following lemma, which is a generalization of a well-known result~\cite[Ch.\ 11]{Cov06}. We defer the proof to the end of the section. 
\begin{lemma}[Empirical Divergence  Bounds]  \label{lem:dpq_bound}
Let $X,Y$ be two  random variables whose joint distribution is $P_{X,Y}\in\calP(\calX^2)$ and $|\calX|=r$. Let $(x^n,y^n)=\{(x_1,y_1),\ldots, (x_n,y_n)\}$ be $n$ independent and identically distributed observations drawn from $P_{X,Y}$. Then, for every $n$,
\begin{equation}
P_{X,Y}^n(D(\hP_{X|Y}\, ||\, P_{X|Y})>\delta)\le (n+1)^{r^2} \exp(-n\delta), \label{eqn:kl_div_bound} 
\end{equation}
where $\hP_{X|Y}=\hP_{X,Y}/\hP_Y$ is the conditional type of $(x^n, y^n)$. Furthermore, 
\begin{equation}
\liminf_{n\to\infty} \frac{1}{n}\log P_{X,Y}^n(D(\hP_{X|Y}\, ||\, P_{X|Y})>\delta)\ge -\delta. \label{eqn:kl_div_lower_bound} 
\end{equation}
\end{lemma}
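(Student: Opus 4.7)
My plan is to prove both bounds by reducing from the conditional divergence $D(\hP_{X|Y}\,||\,P_{X|Y})$ to the joint divergence $D(\hP_{X,Y}\,||\,P_{X,Y})$ via the chain rule and then invoking the standard method-of-types machinery on $\calX^2$, much as was used for the lower bound~\eqref{eqn:method_types} and upper bound~\eqref{eqn:sanov_under} in the body of the paper. The key identity is
\begin{equation*}
D(\hP_{X,Y}\,||\,P_{X,Y})=D(\hP_Y\,||\,P_Y)+\sum_{y\in\calX}\hP_Y(y)\,D(\hP_{X|Y=y}\,||\,P_{X|Y=y}),
\end{equation*}
so that $D(\hP_{X|Y}\,||\,P_{X|Y})\le D(\hP_{X,Y}\,||\,P_{X,Y})$, with equality whenever $\hP_Y=P_Y$. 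Everything on the empirical side is a function of the joint type $\hP_{X,Y}$, hence the event $\{D(\hP_{X|Y}\,||\,P_{X|Y})>\delta\}$ is a disjoint union of type classes $\bbT(Q)$ over joint types $Q\in\calP_n(\calX^2)$.

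For the upper bound~\eqref{eqn:kl_div_bound}, I would enumerate
\begin{equation*}
P_{X,Y}^n(D(\hP_{X|Y}\,||\,P_{X|Y})>\delta)=\sum_{Q\in\calP_n(\calX^2):\,D(Q_{X|Y}\,||\,P_{X|Y})>\delta}P_{X,Y}^n(\bbT(Q)),
\end{equation*}
bound each term by $\exp(-nD(Q\,||\,P_{X,Y}))$ using the upper half of Lemma~\ref{lem:typeclass}, apply the chain rule to conclude $D(Q\,||\,P_{X,Y})\ge D(Q_{X|Y}\,||\,P_{X|Y})>\delta$, and finally use the standard polynomial bound $|\calP_n(\calX^2)|\le(n+1)^{r^2}$ to absorb the sum.

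For the lower bound~\eqref{eqn:kl_div_lower_bound}, the natural plan is to produce an achievable sequence of types and apply the lower half of Lemma~\ref{lem:typeclass}. Specifically, I would first pick a limiting conditional $Q_{X|Y}^*$ satisfying $D(Q_{X|Y}^*\,||\,P_{X|Y})=\delta+\eta$ for arbitrarily small $\eta>0$, form the joint $Q^*:=P_Y\cdot Q_{X|Y}^*$ so that by the chain rule $D(Q^*\,||\,P_{X,Y})=\delta+\eta$, and then use the denseness of $\calP_n(\calX^2)$ in the probability simplex (as in Lemma~2.1.2(b) of \cite{Dembo}, exactly as was exploited to derive~\eqref{eqn:method_types}) to choose a sequence $Q^{(n)}\in\calP_n(\calX^2)$ approximating $Q^*$ with $D(Q^{(n)}_{X|Y}\,||\,P_{X|Y})>\delta$ eventually and $D(Q^{(n)}\,||\,P_{X,Y})\to\delta+\eta$. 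The single-type-class lower bound then gives $P_{X,Y}^n(\bbT(Q^{(n)}))\ge(n+1)^{-r^2}\exp(-n(\delta+\eta))$, so taking normalized logarithms, a $\liminf$ in $n$, and finally $\eta\downarrow0$ yields the claim.

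The main subtlety, rather than obstacle, is the mismatch between the averaging measure $\hP_Y$ in the definition of $D(\hP_{X|Y}\,||\,P_{X|Y})$ and the measure $P_Y$ used in the constructed limit: by forcing $Q^*_Y=P_Y$ (so that empirically $\hP_Y\to P_Y$ along $\bbT(Q^{(n)})$), the chain rule collapses cleanly and $D(Q^{(n)}_{X|Y}\,||\,P_{X|Y})$ converges continuously to $D(Q^*_{X|Y}\,||\,P_{X|Y})$, sidestepping the issue. Continuity of the conditional divergence in its arguments, which is needed for the approximating sequence, is uniform on the subset of $\calP(\calX^2)$ where $P_{X,Y}$ is bounded away from zero; the case where $P_{X,Y}$ has zeros can be handled by restricting attention to its support throughout.
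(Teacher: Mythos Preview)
Your proposal is correct and follows essentially the same route as the paper's proof: for the upper bound you enumerate joint types, apply the type-class upper bound from Lemma~\ref{lem:typeclass}, use the chain rule $D(Q_{X,Y}\,||\,P_{X,Y})=D(Q_{X|Y}\,||\,P_{X|Y})+D(Q_Y\,||\,P_Y)\ge D(Q_{X|Y}\,||\,P_{X|Y})$, and absorb the sum via the polynomial count $(n+1)^{r^2}$; for the lower bound you pick a target joint with $Y$-marginal equal to $P_Y$, approximate it by types, and invoke the lower half of Lemma~\ref{lem:typeclass}. The only cosmetic difference is that the paper states the achieving sequence directly as $Q_Y^{(n)}=P_Y$ with $D(Q_{X|Y}^{(n)}\,||\,P_{X|Y})\to\delta$, whereas you first fix a target $Q^*$ at level $\delta+\eta$ and then approximate and let $\eta\downarrow 0$; your formulation is arguably a bit more careful about the fact that $P_Y$ need not itself be an $n$-type.
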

\begin{figure}
\centering
\begin{tabular}{cc}
\begin{picture}(100,50)
\linethickness{0.22mm}
\put(0,0){\circle*{8}}
\put(50,0){\circle*{8}}
\put(100,0){\circle*{8}}
\color{blue}
\put(0,50){\circle*{8}}
\color{black}
\put(50,50){\circle*{8}}
\put(-12,50){\mbox{ 1}}
\put(-12,4){\mbox{ 2}}
\put(37,50){\mbox{ 4}}
\put(37,04){\mbox{ 3}}
\put(87,50){\mbox{ 5}}
\put(87,04){\mbox{ 6}}
\put(0,50){\vector(0,-1){47}}
\put(0,0){\vector(1,0){47}}
\put(50,0){\vector(0,1){47}}
\put(100,50){\vector(0,-1){47}}
\put(67,20){ \mbox{$\hE_{\hk_n}$}}
\color{blue}
\put(100,50){\circle*{8}}
\color{black}
\end{picture}
\hspace{1in}
\begin{picture}(100,50)
\linethickness{0.22mm}
\put(0,0){\circle*{8}}
\put(50,0){\circle*{8}}
\color{blue}
\put(100,0){\circle*{8}}
\color{blue}
\put(0,50){\circle*{8}}
\color{black}
\put(50,50){\circle*{8}}
\put(-12,50){\mbox{ 1}}
\put(-12,4){\mbox{ 2}}
\put(37,50){\mbox{ 4}}
\put(37,04){\mbox{ 3}}
\put(87,50){\mbox{ 5}}
\put(87,04){\mbox{ 6}}
\put(0,50){\vector(0,-1){47}}
\put(0,0){\vector(1,0){47}}
\put(50,0){\vector(0,1){47}}
\put(67,20){ \mbox{$E_P$}}
\color{blue}
\put(100,50){\circle*{8}}
\color{black}
\end{picture}
\end{tabular}
\caption{In $\hE_{\hk_n}$ (left), nodes 1 and 5 are the roots, which are in blue. The parents are defined as  $\pi(i;\hE_{\hk_n}) =i-1$ for $i = 2,3,4,6$ and $\pi(i;\hE_{\hk_n}) =\emptyset$ for $i=1,5$. In $E_P$ (right), the parents are defined as $\pi(i;E_P )=i-1$ for $i = 2,3,4$ but $\pi(i;E_P )=\emptyset$ for $i=1,5,6$ since $(5,6),(\emptyset,1),(\emptyset,5) \notin E_P$. } \label{fig:directed_eg}
\end{figure}
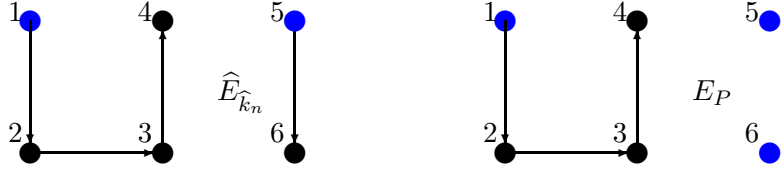

It is worth noting that the  bounds in~\eqref{eqn:kl_div_bound}  and~\eqref{eqn:kl_div_lower_bound}  are independent of the distribution $P_{X,Y}$ (cf.\ discussion after Theorem~\ref{thm:consisten_param}). We now proceed with the proof of Theorem~\ref{thm:consisten_param}. To do so, we consider the directed representation of a tree distribution $Q$ \citep{Lau96}:
\begin{equation}
Q(\bx) =\prod_{i\in V} Q_{i|\pi(i)} (x_i|x_{\pi(i)}), \label{eqn:tree_decomp_directed} 
\end{equation}
where $\pi(i)$ is the parent of $i$ in the edge set of $Q$ (assuming a fixed root). Using \eqref{eqn:tree_decomp_directed}  and conditioned on the fact that the top $k$ edges of the graph of $P^*$ are the same as those in $E_P$ (event $\calB_n^c$) and underestimation does not occur (event $\calU_n^c$), the KL-divergence between $P^*$ (which is a function of the samples $\bx^n$ and hence of $n$) and $P$ can be expressed as a sum over $d$ terms:
\begin{equation}
D(P^* \, ||\, P)=\sum_{i\in V}D(\hP_{i|\pi(i;\hE_{\hk_n})}\, ||\, P_{i|\pi(i;E_P)}), \label{eqn:divergence_decomp}
\end{equation}
where the parent of node $i$ in $\hE_{\hk_n}$, denoted $\pi(i;\hE_{\hk_n})$, is defined by arbitrarily choosing a root in each component tree of the forest $\hT_{\hk_n}=(V,\hE_{\hk_n})$. The parents of the chosen roots are empty sets.  The parent of node $i$ in $E_P$ are ``matched'' to those in $\hE_{\hk_n}$, i.e.,  defined  as $\pi(i;E_P):=\pi(i;\hE_{\hk_n})$  if $(i,\pi(i;\hE_{\hk_n}))\in E_P$ and $\pi(i;E_P):=\emptyset$ otherwise. See Figure~\ref{fig:directed_eg} for an example. Note that this can be done because $\hE_{\hk_n}\supseteq E_P$ by conditioning on the events $\calB_n^c $ and $\calU_n^c=\{\hk_n\ge k\}$. Then, the error probability $P^n(\tilcalC_{n,\delta}|\calB_n^c,\calU_n^c)$ in~\eqref{eqn:boundDPstar} can be upper bounded as 
\begin{align}
P^n(\tilcalC_{n,\delta}|\calB_n^c,\calU_n^c)&=  P^n\left(\sum_{i\in V}D(\hP_{i|\pi(i;\hE_{\hk_n})}|| P_{i|\pi(i;E_P)})>\delta d \Big|\calB_n^c,\calU_n^c\right) \label{eqn:parents_decomp}\\
&=  P^n \left( \frac{1}{d}\sum_{i\in V}D(\hP_{i|\pi(i;\hE_{\hk_n})}|| P_{i|\pi(i;E_P)}) > \delta \Big|\calB_n^c,\calU_n^c \right) \nn\\
&\le  P^n \left( \max_{i\in V}\,  \left\{  D(\hP_{i|\pi(i;\hE_{\hk_n})} || P_{i|\pi(i;E_P)}) \right\}  > \delta  \Big|\calB_n^c,\calU_n^c \right) \label{eqn:max_bound}\\
& \le  \sum_{i\in V} P^n\left( D(\hP_{i|\pi(i;\hE_{\hk_n})}|| P_{i|\pi(i;E_P)})   > \delta  \Big|\calB_n^c ,\calU_n^c \right) \label{eqn:union_bound_max}\\
& \le  \sum_{i\in V}  (n + 1)^{r^2}   \exp\left(- n \delta \right)   =  d  (n  +  1)^{r^2}  \exp  \left(-n \delta \right), \label{eqn:abovelemma}
\end{align}
where Eq.~\eqref{eqn:parents_decomp} follows from the decomposition in~\eqref{eqn:divergence_decomp}.  Eq.~\eqref{eqn:max_bound} follows from the fact that if the arithmetic mean of $d$ positive numbers exceeds $\delta$, then the maximum exceeds $\delta$. Eq.~\eqref{eqn:union_bound_max} follows from the union bound. Eq.~\eqref{eqn:abovelemma}, which holds for all $n\in\bN$, follows from the upper bound in~\eqref{eqn:kl_div_bound}. Combining~\eqref{eqn:boundDPstar} and~\eqref{eqn:abovelemma} shows that if $\delta<\min\{K_P, L_P\}$,
\begin{equation}
\limsup_{n\to\infty}\frac{1}{n}\log P^n(\tilcalC_{n,\delta}) \le  - \delta. \nn 
\end{equation}
Now recall that $\tilcalC_{n,\delta}=\{D(P^*\, ||\, P)>\delta d\}$.  In order to complete the proof of \eqref{eqn:param_upper}, we need to swap the arguments in the KL-divergence to bound the probability of the event $\calC_{n,\delta} = \{D(P\, ||\, P^*)>\delta d\}$. To this end, note that for   $\epsilon>0$ and $n$ sufficiently large, $|D(P^* \, ||\, P)-D(P \, ||\, P^*)|<\epsilon$ with high probability  since the two KL-divergences become close ($P^*\approx P$  w.h.p.\  as $n\to\infty$). More precisely, the probability of   $\{|D(P^* \, ||\, P)-D(P \, ||\, P^*)|\ge\epsilon\}=\{o(\|P-P^*\|_{\infty}^2)\ge \epsilon\}$   decays exponentially with some   rate  $M_P>0$. Hence,
\begin{equation}
\limsup_{n\to\infty}\frac{1}{n}\log P^n( D( P\,||\, P^*) >\delta d) \le  - \delta,\label{eqn:param_intermediate}
\end{equation}
if $\delta<\min\{K_P, L_P,M_P\}$. If $P$ is not Markov on a forest,  \eqref{eqn:param_intermediate} holds with the forest projection $\tilP$ in place of $P$, i.e.,
\begin{equation}
\limsup_{n\to\infty}\frac{1}{n}\log P^n( D( \tilP\,||\, P^*) >\delta d) \le  - \delta.\label{eqn:param_intermediate2}
\end{equation}
The  Pythagorean relationship \citep{Simon73, Bach03} states that
\begin{equation}
D(P \, ||\, P^*) = D(P\, ||\, \tilP)  + D(\tilP \,|| \, P^*)  \label{eqn:Pyth2}
\end{equation}
which means that the risk is $\calR_n(P^*)= D(\tilP \,|| \, P^*)$. Combining this fact with \eqref{eqn:param_intermediate2} implies the assertion of~\eqref{eqn:param_upper} by choosing $\delta_0:=\min\{K_P, L_P, M_P\}$.

Now we exploit the lower bound in Lemma~\ref{lem:dpq_bound} to prove the lower bound in Theorem~\ref{thm:consisten_param}. The error probability $P^n(\tilcalC_{n,\delta}|\calB_n^c,\calU_n^c)$ in~\eqref{eqn:boundDPstar} can now be lower bounded by
\begin{align}
P^n(\tilcalC_{n,\delta}|\calB_n^c,\calU_n^c)& \ge \max_{i\in V}\,\, P^n\left(  D(\hP_{i|\pi(i;\hE_{\hk_n})}\, ||\, P_{i|\pi(i;E_P)})  >\delta d\Big|\calB_n^c,\calU_n^c\right) \label{eqn:lower_d} \\
&\ge  \exp(-n(\delta d+\eta)), \label{eqn:lower_d_exponent}
\end{align}
where~\eqref{eqn:lower_d} follows from the decomposition in \eqref{eqn:parents_decomp} and~\eqref{eqn:lower_d_exponent} holds for every $\eta$ for sufficiently large $n$ by \eqref{eqn:kl_div_lower_bound}. Using the same argument that allows us to swap the arguments of the KL-divergence as in the proof of the upper bound completes the proof of~\eqref{eqn:param_lower}.
\end{proof}


\begin{proof} ({\em of Lemma~\ref{lem:dpq_bound}})
Define the {\em $\delta$-conditional-typical set with respect to $P_{X,Y}\in\calP(\calX^2)$} as
\begin{equation}
\calS_{P_{X,Y}}^{\delta}:= \{ (x^n,y^n)\in (\calX^2)^n:D(\hP_{X|Y}\, ||\, P_{X|Y})\le\delta\}, \nn
\end{equation}
where $\hP_{X|Y}$ is the conditional type of $(x^n,y^n)$. We now estimate the $P_{X,Y}^n$-probability of the $\delta$-conditional-atypical set, i.e., $P^n_{X,Y}((\calS_{P_{X,Y}}^{\delta})^c) $
\begin{align}
P^n_{X,Y}((\calS_{P_{X,Y}}^{\delta})^c)&= \sum_{(x^n,y^n)\in\calX^2 :D(\hat{P}_{X|Y}||P_{X|Y})>\delta} P^n_{X,Y}((x^n, y^n))  \label{eqn:sum_sequences}\\
&= \sum_{Q_{X,Y}\in\calP_n(\calX^2) :D(Q_{X|Y}||P_{X|Y})>\delta} P^n_{X,Y}(\bbT(Q_{X,Y}))  \label{eqn:sum_type_class}\\
&\le \sum_{Q_{X,Y} \in\calP_n(\calX^2) :D(Q_{X|Y}||P_{X|Y})>\delta} \exp(-nD(Q_{X,Y}\, ||\, P_{X,Y})) \label{eqn:method_types_upper}\\
&\le \sum_{Q_{X,Y}\in\calP_n(\calX^2) :D(Q_{X|Y}||P_{X|Y})>\delta}   \exp(-n D(Q_{X|Y}\, ||\, P_{X|Y})) \label{eqn:div_pos} \\
&\le \sum_{Q_{X,Y}\in\calP_n(\calX^2)  :D(Q_{X|Y}||P_{X|Y})>\delta}  \exp(-n \delta) \label{eqn:delta_bound} \\
&\le (n+1)^{r^2} \exp(-n \delta  ) ,\label{eqn:num_types}
\end{align}
where~\eqref{eqn:sum_sequences} and \eqref{eqn:sum_type_class} are the same because summing over sequences is equivalent to summing over the corresponding type classes since every sequence in each type class has the same probability \citep[Ch.\ 11]{Cov06}. Eq.~\eqref{eqn:method_types_upper} follows from the method of types result in Lemma~\ref{lem:typeclass}. Eq.~\eqref{eqn:div_pos} follows from the KL-divergence version of the chain rule,  namely, 
$$
D(Q_{X,Y}\, ||\, P_{X,Y})=D(Q_{X|Y}\, ||\, P_{X|Y}) +D(Q_Y\, ||\, P_Y)
$$
and non-negativity of the KL-divergence $D(Q_Y\, ||\, P_Y)$. Eq.~\eqref{eqn:delta_bound} follows from the fact that $D(Q_{X|Y}\, ||\, P_{X|Y})>\delta$ for $Q_{X,Y}\in(\calS_{P_{X,Y}}^{\delta})^c$. Finally,~\eqref{eqn:num_types} follows the fact that the number of types with denominator $n$ and alphabet $\calX^2$ is upper bounded by  $(n+1)^{r^2}$. This concludes the proof of~\eqref{eqn:kl_div_bound}.

We now prove the lower bound in~\eqref{eqn:kl_div_lower_bound}. To this end, construct a sequence of distributions  $\{Q_{X,Y}^{(n)}\in\calP_n(\calX^2)\}_{n\in\bN}$ such that $Q_{Y}^{(n)}=P_{Y}$ and $D(Q_{X|Y}^{(n)}||P_{X|Y})\to\delta$. Such a sequence exists by the denseness of types in the probability simplex \citep[Lemma 2.1.2(b)]{Dembo}. Now we lower bound~\eqref{eqn:sum_type_class}:
\begin{align}
P^n_{X,Y}(  &(\calS_{P_{X,Y}}^{\delta})^c) \ge P_{X,Y}^n  ( \bbT(Q_{X,Y}^{(n)}))\ge (n+1)^{-r^2}\exp(-nD(Q_{X,Y}^{(n)}\, ||\, P_{X,Y})). \label{eqn:lower_cond_type}
\end{align}
Taking the normalized logarithm  and  $\liminf$ in $n$ on both sides of \eqref{eqn:lower_cond_type} yields
\begin{align}
 \liminf_{n\to\infty} \frac{1}{n}\log P^n_{X,Y}(  (\calS_{P_{X,Y}}^{\delta})^c)   \ge \liminf_{n\to\infty}  \left\{  -D(Q_{X|Y}^{(n)}|| P_{X|Y}) -  D(Q_{Y}^{(n)}|| P_{Y})  \right\} = -\delta. \nn
\end{align} 
This concludes the proof of Lemma~\ref{lem:dpq_bound}.
\end{proof}


\section{Proof of Corollary~\ref{cor:scaling_param}} \label{prf:cor:scaling_param}
\begin{proof}
If the dimension $d=o(\exp(n\delta))$, then the  upper bound in \eqref{eqn:abovelemma}  is asymptotically majorized by $\poly(n) o(\exp(na))\exp(-n\delta)= o(\exp(n\delta))\exp(-n\delta)$, which can be made arbitrarily small for $n$ sufficiently large.  Thus the probability tends to zero as $n\to\infty$. 
\end{proof}

\section{Proof of Theorem~\ref{cor:estimation_consist}} \label{prf:cor:estimation_consist}
\begin{proof}
In this proof, we drop the superscript $(d)$ for all distributions $P$ for notational simplicity but note that $d=d_n$. We first claim that  $D(P^*\, ||\, \tilP)=O_p(d\log d/n^{1-\gamma})$. Note from~\eqref{eqn:boundDPstar} and~\eqref{eqn:abovelemma} that by taking $\delta =(\tau \log d)/n^{1-\gamma}$ (for any $\tau>0$),
\begin{equation}
P^n\left(\frac{n^{1-\gamma}}{d\log d } \, D(P^*\, ||\, \tilP)>\tau\right)\le d(n+1)^{r^2} \exp(-\tau n^{\gamma} \log d ) +\exp(-\Theta(n))=o_n(1). \label{eqn:final_param}
\end{equation} 
Therefore,  the scaled sequence of random variables $\frac{n^{1-\gamma}}{d\log d}D(P^*\, ||\, \tilP)$ is stochastically bounded \citep{Serfling} which proves the claim.\footnote{In fact,   we have in fact proven the stronger assertion that $D(P^*\, ||\, \tilP)=o_p(d\log d/n^{1-\gamma})$ since the right-hand-side of \eqref{eqn:final_param} converges to zero.}

Now, we claim that $D(\tilP\, ||\, P^*) =  O_p(d\log d/n^{1-\gamma})$. A simple calculation  using Pinsker's Inequality and Lemma 6.3 in \cite{CsiszarTalata} yields  
\begin{equation}
D(\hP_{X,Y} \, ||\, P_{X,Y} ) \le \frac{c}{\kappa} \, D( P_{X,Y} \, ||\, \hP_{X,Y} ), \nn
\end{equation}
where   $\kappa:= \min_{x,y} P_{X,Y}(x,y)$ and $c=2\log 2$. Using this fact, we can use~\eqref{eqn:kl_div_bound}   to show that for all $n$ sufficiently large, 
\begin{equation}
P_{X,Y}^n(D(P_{X|Y}\, ||\, \hP_{X|Y})>\delta)\le (n+1)^{r^2} \exp(- n\delta\kappa/c),  \nn 
\end{equation}
i.e., if the arguments in the KL-divergence in~\eqref{eqn:kl_div_bound} are swapped, then the exponent is reduced by a factor proportional to $\kappa$. Using this fact and the assumption in \eqref{eqn:minentry} (uniformity of the minimum entry in the pairwise joint $\kappa>0$), we can replicate the proof of the result in~\eqref{eqn:abovelemma} with $\delta\kappa/c$  in place of $\delta$ giving
\begin{equation}
P^n(D(P\, ||\, P^*)>\delta) \le     d  (n  +  1)^{r^2}  \exp  \left(-n \delta \kappa/c \right). \nn 
\end{equation}
We then arrive at a similar result to~\eqref{eqn:final_param} by taking $\delta =(\tau \log d)/n^{1-\gamma}$. We conclude that $D(\tilP\, ||\, P^*)=O_p(d\log d/n^{1-\gamma})$. This completes the proof of the claim. 

Eq.~\eqref{eqn:risk_proj} then follows from the definition of the risk in \eqref{eqn:defrisk} and from the Pythagorean theorem  in~\eqref{eqn:Pyth2}. This implies the assertion of Theorem~\ref{cor:estimation_consist}.
\end{proof}




\bibliography{isitbib}

\begin{thebibliography}{43}
\providecommand{\natexlab}[1]{#1}
\providecommand{\url}[1]{\texttt{#1}}
\expandafter\ifx\csname urlstyle\endcsname\relax
  \providecommand{\doi}[1]{doi: #1}\else
  \providecommand{\doi}{doi: \begingroup \urlstyle{rm}\Url}\fi

\bibitem[Abbeel et~al.(2006)Abbeel, Koller, and Ng]{Abb06}
P.~Abbeel, D.~Koller, and A.~Y. Ng.
\newblock Learning factor graphs in polynomial time and sample complexity.
\newblock \emph{Journal of Machine Learning Research}, Dec 2006.

\bibitem[Aigner and Ziegler(2009)]{Aigner09}
M.~Aigner and G.~M. Ziegler.
\newblock \emph{{Proofs From THE BOOK}}.
\newblock Springer, 2009.

\bibitem[Bach and Jordan(2003)]{Bach03}
F.~Bach and M.~I. Jordan.
\newblock Beyond independent components: trees and clusters.
\newblock \emph{Journal of Machine Learning Research}, 4:\penalty0 1205--1233,
  2003.

\bibitem[Bertsekas(1999)]{Ber99}
D.~Bertsekas.
\newblock \emph{Nonlinear Programming}.
\newblock Athena Scientific, 1999.

\bibitem[Bishop(2008)]{Bishop}
C.~M. Bishop.
\newblock \emph{Pattern Recognition and Machine Learning}.
\newblock Springer, 2008.

\bibitem[Borade and Zheng(2008)]{Bor08}
S.~Borade and L.~Zheng.
\newblock {Euclidean Information Theory}.
\newblock In \emph{IEEE International Zurich Seminar on Communications}, pages
  14--17, 2008.

\bibitem[Bresler et~al.(2008)Bresler, Mossel, and Sly]{Bresler08}
G.~Bresler, E.~Mossel, and A.~Sly.
\newblock {Reconstruction of Markov random fields from samples: Some
  observations and algorithms}.
\newblock In \emph{11th International workshop APPROX 2008 and 12th
  International workshop RANDOM}, pages 343--–356., 2008.

\bibitem[Chechetka and Guestrin(2007)]{Chechetka07}
A.~Chechetka and C.~Guestrin.
\newblock {Efficient Principled Learning of Thin Junction Trees}.
\newblock In \emph{Advances of Neural Information Processing Systems (NIPS)},
  2007.

\bibitem[Chow and Liu(1968)]{CL68}
C.~K. Chow and C.~N. Liu.
\newblock Approximating discrete probability distributions with dependence
  trees.
\newblock \emph{IEEE Transactions on Infomation Theory}, 14\penalty0
  (3):\penalty0 462--467, May 1968.

\bibitem[Chow and Wagner(1973)]{Cho73}
C.~K. Chow and T.~Wagner.
\newblock {Consistency of an estimate of tree-dependent probability
  distributions }.
\newblock \emph{IEEE Transactions in Information Theory}, 19\penalty0
  (3):\penalty0 369 -- 371, May 1973.

\bibitem[Cover and Thomas(2006)]{Cov06}
T.~M. Cover and J.~A. Thomas.
\newblock \emph{Elements of Information Theory}.
\newblock Wiley-Interscience, 2nd edition, 2006.

\bibitem[Csisz\'{a}r and Mat\'{u}\v{s}(2003)]{Csis:InfoProj}
I.~Csisz\'{a}r and F.~Mat\'{u}\v{s}.
\newblock Information projections revisited.
\newblock \emph{IEEE Transactions on Infomation Theory}, 49\penalty0
  (6):\penalty0 1474--1490, 2003.

\bibitem[Csisz\'{a}r and Shields(2000)]{CsiszarBIC2000}
I.~Csisz\'{a}r and P.~Shields.
\newblock {The consistency of the BIC Markov order estimator}.
\newblock \emph{Ann. Statist.}, 28\penalty0 (6):\penalty0 1601--1619, 2000.

\bibitem[Csisz\'{a}r and Talata(2006)]{CsiszarTalata}
I.~Csisz\'{a}r and Z.~Talata.
\newblock Context tree estimation for not necessarily finite memory processes,
  via bic and mdl.
\newblock \emph{IEEE Transactions on Information Theory}, 52\penalty0
  (3):\penalty0 1007--16, 2006.

\bibitem[Custovic et~al.(2002)Custovic, Simpson, Murray, Lowe, and
  Woodcock]{Custovic}
A.~Custovic, B.~M. Simpson, C.~S. Murray, L.~Lowe, and A.~Woodcock.
\newblock {The National Asthma Campaign Manchester Asthma and Allergy Study}.
\newblock \emph{Pediatr Allergy Immunol}, 13:\penalty0 32--37, 2002.

\bibitem[Dembo and Zeitouni(1998)]{Dembo}
A.~Dembo and O.~Zeitouni.
\newblock \emph{Large Deviations Techniques and Applications}.
\newblock Springer, 2nd edition, 1998.

\bibitem[Dudik et~al.(2004)Dudik, Phillips, and Schapire]{Dud04}
M.~Dudik, S.~J. Phillips, and R.~E. Schapire.
\newblock Performance guarantees for regularized maximum entropy density
  estimation.
\newblock In \emph{{Conference on Learning Theory (COLT)}}, 2004.

\bibitem[Fano(1961)]{Fano}
R.~M. Fano.
\newblock \emph{Transmission of Information}.
\newblock New York: Wiley, 1961.

\bibitem[Finesso et~al.(1996)Finesso, Liu, and Narayan]{Finesso96}
L.~Finesso, C.~C. Liu, and P.~Narayan.
\newblock {The Optimal Error Exponent for Markov Order Estimation}.
\newblock \emph{IEEE Trans. on Info Th.}, 42\penalty0 (5):\penalty0 1488--1497,
  1996.

\bibitem[Gallager(2001)]{Gallager01}
R.~G. Gallager.
\newblock {Claude E. Shannon: A retrospective on his life, work and impact}.
\newblock \emph{IEEE Trans. on Info. Th.}, 47:\penalty0 2687–--95, Nov 2001.

\bibitem[Gassiat and Boucheron(2003)]{Gassiat}
E.~Gassiat and S.~Boucheron.
\newblock {Optimal Error Exponents in Hidden Markov Models Order Estimation}.
\newblock \emph{IEEE Transactions on Infomation Theory}, 49\penalty0
  (4):\penalty0 964--980, Apr 2003.

\bibitem[Johnson et~al.(2007)Johnson, Chandrasekaran, and Willsky]{Joh07}
J.~Johnson, V.~Chandrasekaran, and A.~S. Willsky.
\newblock {Learning Markov Structure by Maximum Entropy Relaxation}.
\newblock In \emph{{Artificial Intelligence and Statistics (AISTATS)}}, 2007.

\bibitem[Kruskal(1956)]{Kruskal}
J.~B. Kruskal.
\newblock {On the Shortest Spanning Subtree of a Graph and the Traveling
  Salesman Problem}.
\newblock \emph{Proceedings of the American Mathematical Society}, 7\penalty0
  (1), Feb 1956.

\bibitem[Lauritzen(1996)]{Lau96}
S.~Lauritzen.
\newblock \emph{{Graphical Models}}.
\newblock {Oxford University Press, USA}, 1996.

\bibitem[Lee et~al.(2006)Lee, Ganapathi, and Koller]{Lee06}
S.~Lee, V.~Ganapathi, and D.~Koller.
\newblock {Efficient structure learning of Markov networks using
  L1-regularization}.
\newblock In \emph{{Advances in Neural Information Processing Systems (NIPS)}},
  2006.

\bibitem[Liu et~al.(2010)Liu, Xu, Gu, Gupta, Lafferty, and Wasserman]{Gupta}
H.~Liu, M.~Xu, H.~Gu, A.~Gupta, J.~Lafferty, and L.~Wasserman.
\newblock Forest density estimation.
\newblock In \emph{Conference on Learning Theory (COLT)}, Haifa, Israel, 2010.
\newblock Available at http://arxiv.org/abs/1001.1557.

\bibitem[Meil\u{a} and Jordan(2000)]{Mei00}
M.~Meil\u{a} and M.~I. Jordan.
\newblock Learning with mixtures of trees.
\newblock \emph{Journal of Machine Learning Research}, 1:\penalty0 1--48, Oct
  2000.

\bibitem[Meinshausen and Buehlmann(2006)]{Mei06}
N.~Meinshausen and P.~Buehlmann.
\newblock {High dimensional graphs and variable selection with the Lasso}.
\newblock \emph{Annals of Statistics}, 34\penalty0 (3):\penalty0 1436--1462,
  2006.

\bibitem[Merhav(1989)]{Merhav89b}
N.~Merhav.
\newblock {The Estimation of the Model Order in Exponential Families}.
\newblock \emph{IEEE Transactions on Infomation Theory}, 35\penalty0
  (5):\penalty0 1109--1115, 1989.

\bibitem[Merhav et~al.(1989)Merhav, Gutman., and Ziv]{Merhav89}
N.~Merhav, M.~Gutman., and J.~Ziv.
\newblock {On the estimation of the order of a Markov chain and universal data
  compression}.
\newblock \emph{IEEE Transactions on Infomation Theory}, 35:\penalty0
  1014--1019, 1989.

\bibitem[Newman et~al.(1998)Newman, Hettich, Blake, and Merz]{UCI98}
D.~J. Newman, S.~Hettich, C.~L. Blake, and C.~J. Merz.
\newblock {UCI Repository of Machine Learning Databases, University of
  California, Irvine}, 1998.

\bibitem[Prim(1957)]{Prim}
R.~C. Prim.
\newblock {Shortest connection networks and some generalizations}.
\newblock \emph{Bell System Technical Journal}, 36, 1957.

\bibitem[Santhanam and Wainwright(2008)]{San08}
N.~Santhanam and M.~J. Wainwright.
\newblock {Information-theoretic limits of selecting binary graphical models in
  high dimensions}.
\newblock In \emph{Proc. of IEEE Intl.\ Symp.\ on Info.\ Theory}, Toronto,
  Canada, July 2008.

\bibitem[Serfling(1980)]{Serfling}
R.~J. Serfling.
\newblock \emph{Approximation Theorems of Mathematical Statistics}.
\newblock Wiley-Interscience, Nov 1980.

\bibitem[Simon(1973)]{Simon73}
G.~Simon.
\newblock Additivity of information in exponential family probability laws.
\newblock \emph{Amer. Statist. Assoc.}, 68\penalty0 (478--482), 1973.

\bibitem[Simpson et~al.(2010)Simpson, Tan, Winn, Svens\'{e}n, Bishop,
  Heckerman, Buchan, and Custovic]{Simpson}
A.~Simpson, V.~Y.~F. Tan, J.~M. Winn, M.~Svens\'{e}n, C.~M. Bishop, D.~E.
  Heckerman, I.~Buchan, and A.~Custovic.
\newblock {Beyond Atopy: Multiple Patterns of Sensitization in Relation to
  Asthma in a Birth Cohort Study}.
\newblock \emph{Am J Respir Crit Care Med}, 2010.

\bibitem[Tan et~al.(2010{\natexlab{a}})Tan, Anandkumar, and
  Willsky]{Tan&etal:09SPsub}
V.~Y.~F. Tan, A.~Anandkumar, and A.~S. Willsky.
\newblock {Learning Gaussian Tree Models: Analysis of Error Exponents and
  Extremal Structures}.
\newblock \emph{IEEE Transactions on Signal Processing}, 58\penalty0
  (5):\penalty0 2701 -- 2714, May 2010{\natexlab{a}}.

\bibitem[Tan et~al.(2010{\natexlab{b}})Tan, Anandkumar, and
  Willsky]{Tan10:ISIT}
V.~Y.~F. Tan, A.~Anandkumar, and A.~S. Willsky.
\newblock {Error Exponents for Composite Hypothesis Testing of Markov Forest
  Distributions}.
\newblock In \emph{Proc. of Intl.\ Symp.\ on Info.\ Th.}, June
  2010{\natexlab{b}}.

\bibitem[Tan et~al.(2011)Tan, Anandkumar, Tong, and Willsky]{Tan&etal:09ITsub}
V.~Y.~F. Tan, A.~Anandkumar, L.~Tong, and A.~S. Willsky.
\newblock {A Large-Deviation Analysis for the Maximum-Likelihood Learning of
  Markov Tree Structures}.
\newblock \emph{IEEE Transactions on Infomation Theory}, Mar 2011.

\bibitem[Vandenberghe and Boyd(1996)]{Van96}
L.~Vandenberghe and S.~Boyd.
\newblock Semidefinite programming.
\newblock \emph{SIAM Review}, 38:\penalty0 49--95, Mar 1996.

\bibitem[Wainwright and Jordan(2003)]{Wai03}
M.~J. Wainwright and M.~I. Jordan.
\newblock {Graphical models, exponential families, and variational inference.}
\newblock Technical report, University of California, Berkeley, 2003.

\bibitem[Wainwright et~al.(2006)Wainwright, Ravikumar, and Lafferty]{Wai06}
M.~J. Wainwright, P.~Ravikumar, and J.~D. Lafferty.
\newblock {High-Dimensional Graphical Model Selection Using
  $\ell_1$-Regularized Logistic Regression}.
\newblock In \emph{Advances of Neural Information Processing Systems (NIPS)},
  pages 1465--1472, 2006.

\bibitem[Zuk et~al.(2006)Zuk, Margel, and Domany]{Zuk06}
O.~Zuk, S.~Margel, and E.~Domany.
\newblock {On the number of samples needed to learn the correct structure of a
  Bayesian network}.
\newblock In \emph{Proc of Uncertainty in Artificial Intelligence (UAI)}, 2006.

\end{thebibliography}

\end{document}